\DeclareSymbolFont{rsfs}{U}{rsfs}{m}{n}
\DeclareSymbolFontAlphabet{\mathscrsfs}{rsfs}
\def\R{{\mathbb R}}
\def\dim{{\rm dim}}
\def\nulls{{\rm null}}
\def\NM{\Phi_{\rm NM}}
\def\oZ{\overline{Z}}
\def\obG{\overline{\boldsymbol G}}
\def\cuS{\mathscrsfs{S}}
\def\de{{\rm d}}
\def\<{\langle}
\def\>{\rangle}
\def\pmax{p_{\max}}
\def\sgood{\mbox{\tiny \rm good}}
\def\salg{\mbox{\tiny \rm alg}}
\def\sHD{\mbox{\tiny \rm HD}}
\def\prob{{\mathbb P}}
\newtheoremstyle{myremark} 
    {\topsep}                    
    {\topsep}                    
    {\rm}                        
    {}                           
    {\bf}                        
    {.}                          
    {.5em}                       
    {}  
\newtheorem{claim}{Claim}[section]
\newtheorem{lemma}[claim]{Lemma}
\newtheorem{theorem}{Theorem}
\newtheorem{proposition}[claim]{Proposition}
\newtheorem{corollary}[claim]{Corollary}
\newtheorem{definition}[claim]{Definition}
\theoremstyle{myremark}
\newtheorem{remark}{Remark}[section]
\def\ed{\stackrel{{\mathrm d}}{=}}
\def\ind{\mathbbm{1}}
\def\top{\intercal}
\def\bA{{\boldsymbol A}}
\def\bB{{\boldsymbol B}}
\def\bE{{\boldsymbol E}}
\def\obE{\overline{\boldsymbol E}}
\def\hbF{\widehat{{\boldsymbol F}}}
\def\cA{{\mathcal A}}
\def\bbJ{{\mathbb J}}
\def\cE{\mathcal{E}}
\def\cJ{\mathcal{J}}
\def\cG{\mathcal{G}}
\def\cO{\mathcal{O}}
\def\GOE{\mbox{GOE}}
\def\Vol{{\rm Vol}}
\def\S{{\mathbb S}}
\def\Sol{{\sf Sol}}
\def\Ts{{\sf T}}
\def\cE{\mathcal{E}}
\def\CC{\mathbb{C}}
\def\ZZ{\mathbb{Z}}
\def\calE{\mathcal{E}}
\def\calL{\mathcal{L}}
\def\calN{\mathcal{N}}
\def\calS{\mathcal{S}}
\def\bA{\mathbf{A}}
\def\bB{\mathbf{B}}
\def\bC{\mathbf{C}}
\def\bD{\mathbf{D}}
\def\bE{\mathbf{E}}
\def\bF{\mathbf{F}}
\def\bG{\mathbf{G}}
\def\bI{\mathbf{I}}
\def\bK{\mathbf{K}}
\def\bM{\mathbf{M}}
\def\bO{\mathbf{O}}
\def\bP{\mathbf{P}}
\def\bQ{\mathbf{Q}}
\def\bR{\mathbf{R}}
\def\bS{\mathbf{S}}
\def\bT{\mathbf{T}}
\def\bU{\mathbf{U}}
\def\bV{\mathbf{V}}
\def\bW{\mathbf{W}}
\def\bX{\mathbf{X}}
\def\bZ{\mathbf{Z}}
\def\ba{\boldsymbol{a}}
\def\bb{\boldsymbol{b}}
\def\be{\boldsymbol{e}}
\def\bg{\boldsymbol{g}}
\def\bh{\boldsymbol{h}}
\def\bs{\boldsymbol{s}}
\def\bu{\boldsymbol{u}}
\def\bv{\boldsymbol{v}}
\def\bw{\boldsymbol{w}}
\def\bx{\boldsymbol{x}}
\def\by{\boldsymbol{y}}
\def\bz{\boldsymbol{z}}
\def\bA{\boldsymbol{A}}
\def\bB{\boldsymbol{B}}
\def\bC{\boldsymbol{C}}
\def\bD{\boldsymbol{D}}
\def\bE{\boldsymbol{E}}
\def\bF{\boldsymbol{F}}
\def\bG{\boldsymbol{G}}
\def\bI{\boldsymbol{I}}
\def\bK{\boldsymbol{K}}
\def\bM{\boldsymbol{M}}
\def\bO{\boldsymbol{O}}
\def\bP{\boldsymbol{P}}
\def\bQ{\boldsymbol{Q}}
\def\bR{\boldsymbol{R}}
\def\bS{\boldsymbol{S}}
\def\bT{\boldsymbol{T}}
\def\bU{\boldsymbol{U}}
\def\bV{\boldsymbol{V}}
\def\bW{\boldsymbol{W}}
\def\bX{\boldsymbol{X}}
\def\bZ{\boldsymbol{Z}}
\def\normal{{\mathsf{N}}}
\def\bcH{{\boldsymbol {\mathcal H}}}
\def\bSigma{\boldsymbol{\Sigma}}
\def\bDel{\boldsymbol{\Delta}}
\def\bgamma{\boldsymbol{\gamma}}
\def\bvphi{\boldsymbol{\varphi}}
\def\bzero{\boldsymbol{0}}
\def\bfone{\boldsymbol{1}}
\def\balpha{\boldsymbol{\alpha}}
\def\GOE{{\sf GOE}}
\def\fix{{s}}
\def\oC{{\overline{C}}}
\def\reals{{\mathbb R}}
\def\naturals{{\mathbb N}}
\def\sT{{\sf T}}
\def\Ball{{\sf B}}
\def\Cap{{\sf Cap}}
\def\id{{\boldsymbol I}}
\def\bfzero{{\boldsymbol 0}}
\def\rP{{\rm P}}
\def\ed{\stackrel{{\rm d}}{=}}
\def\op{{\mbox{\rm\tiny op}}}
\newcommand\hF{\hat F}
\renewcommand{\P}{\mathbb{P}}
\newcommand{\E}{\mathbb{E}}
\newcommand{\eps}{\varepsilon}
\newcommand{\argmax}{\operatorname{argmax}}
\newcommand{\argmin}{\operatorname{argmin}}
\newcommand{\sign}{\operatorname{sign}}
\newcommand{\diag}{\operatorname{diag}}
\newcommand{\Lip}{\operatorname{Lip}}
\newcommand{\RN}[1]{%
  \textup{\uppercase\expandafter{\romannumeral#1}}%
}
\newcommand{\am}[1]{\noindent{{\color{blue}\textbf{\#\#\# AM:} \textsf{#1} \#\#\#}}}
\newcommand{\RNum}[1]{\uppercase\expandafter{\romannumeral #1\relax}}
\newcommand*{\rom}[1]{\expandafter\@slowromancap\romannumeral #1@}
\begin{document}

\title{On Smale’s 17th problem over the reals}

\author{Andrea Montanari\thanks{Department of Statistics and Department of Mathematics, Stanford University} 
	\and 
	Eliran Subag\thanks{Department of Mathematics, Weizmann Institute of Science}
}
\pagenumbering{arabic}

\date{December 9, 2024}

\maketitle

\begin{abstract}
We consider the problem of efficiently solving a system of $n$ non-linear equations in $\reals^d$.
Addressing Smale's 17th problem stated in 1998, we consider a setting whereby the $n$ equations are random homogeneous polynomials of arbitrary degrees. In the complex case and for $n= d-1$, Beltr\'{a}n and Pardo proved the existence of an efficient randomized algorithm and Lairez recently showed it can be de-randomized to produce a deterministic efficient algorithm. Here we consider the real setting, to which previously developed 
methods do not apply. We describe a polynomial time algorithm that finds solutions (with high probability)
for $n= d -O(\sqrt{d\log d})$ if the maximal degree is bounded by $d^2$ and for $n=d-1$ if the maximal degree is larger than $d^2$.
\end{abstract}

\tableofcontents

\section{Introduction and main result}
In 1998 Steve Smale published a list of `Mathematical problems for the next century' 
\cite{smale1998mathematical}. Quoting from the original, his 17th problem asked:
\begin{quote}
\emph{``Can a zero of $n$ complex polynomial equations in $n$ unknowns be found approximately, on the average, 
in polynomial time with a uniform algorithm?''}
\end{quote}

\noindent The precise setting of the problem, originally introduced in the so-called B\'{e}zout  series \cite{ShubSmaleI,ShubSmaleII,ShubSmaleIII,ShubSmaleIV,ShubSmaleV} of Shub and Smale, is as follows. First, the assumption that the polynomials are homogeneous is made. Hence, for the set of solutions to be discrete, if the number of equations is $n$, the number of variables needs to be $d=n+1$, and one either has to restrict to solutions in the projective space as in the B\'{e}zout  series, or equivalently to solutions on the unit sphere as we shall in the current work.
Second, the question is a probabilistic one. Namely, it concerns a random polynomial system $\bF(\bx) :=(F_1(\bx),\dots,F_n(\bx))$ consisting of independent polynomials 
\begin{equation}\label{eq:Fi}
	F_{i}(\bx):=\sum_{k_1+\cdots+k_d=p_i}\Big(\frac{p_i!}{k_1!\cdots k_d!}\Big)^{\frac12}a^{(i)}_{k_1,\ldots,k_d}x_1^{k_1}\cdots x_d^{k_d}\,,
\end{equation}
where $\bx=(x_1,\ldots,x_d)\in \CC^d$,
$a^{(i)}_{k_1,\ldots,k_d}$ are i.i.d. complex standard Gaussian variables and $p_1,\ldots,p_n$ is some sequence of degrees.  
By classical results of Abel and Galois, polynomials generically do not have closed-form solutions. 
The 17th problem therefore asks for an algorithm that finds an `approximate solution'. The latter is defined as a point on the unit sphere such that (projected) Newton's method started from it converges immediately and quadratically fast to a solution, see Section \ref{subsec:apxsol} for a formal definition. The model of computation allows storing real numbers and complexity is measured by the number of read/writes and elementary operations with real numbers (formally, a Blum-Shub-Smale machine \cite{BSS}). To avoid technicalities we will also assume that the square root of a positive real number can be computed in a single time unit. The algorithm receives as input the coefficients of the system $\bF$ and is required to have a polynomial time complexity on average in $N:=\sum_{i=1}^{n}\binom{d+p_i-1}{p_i}$, the total number of coefficients. Roughly speaking, a uniform algorithm is an algorithm that can be implemented as a single program, whereas for non-uniform algorithms the implementation may depend on the input (in our setting, on $d$, $n$ and $p_1,\ldots,p_n$). For the precise definition, see \cite{BCSS} and \cite[Section 1.3]{BeltranPardo1}.

In its original form, Smale's 17th problem was recently solved after several major breakthroughs \cite{BeltranPardo1,BeltranPardo2,BeltranShub,BurgisserCucker,Lairez1} (see more in Section \ref{subsec:complexSmale17} below). We, however, will be interested in the real version of it. Indeed, after stating the problem in \cite{smale1998mathematical}, Smale also mentioned that the same problem is very interesting, and even more difficult, in the real case. Namely, when the coefficients $a^{(i)}_{k_1,\ldots,k_d}$ are assumed to be real Gaussian variables instead of complex and one searches for solutions on the unit sphere in $\R^d$ instead of $\CC^d$. We note that in the real case the specific choice of the combinatorial factor above makes  $F_i(\bx)$ rotationally invariant in law. Namely, $F_i(\bx)\overset{\calL}{=}F_i(O\bx)$ as processes for any orthogonal matrix $O$, where $\overset{\calL}{=}$ denotes equality in law. Henceforth, we shall assume this setting. 
Moreover, we will sometime consider an under-determined variant of Smale's 17th problem in the real case, by allowing  the number of equations $n$ to be strictly smaller than $d-1$. We still keep the notation $n$ for the number of equations and $d$ for the number of variables. 
We denote by $\pmax=\max_{i\le n}p_i$ the maximal degree, by $\ba:=\Big(\big(\frac{p_i!}{k_1!\cdots k_d!}\big)^{\frac12} a^{(i)}_{k_1,\ldots,k_d}\Big)_{i\le n,\,k_1+\cdots+k_d=p_i}$ the set of real Gaussian coefficients, and by $\S^{d-1}\subset \R^d$ the unit sphere. 
From now on, we shall always assume that $p_i\ge 2$ for all $i$, since for $p_i=1$ the zero set of $F_i(\bx)$ is the orthogonal space to $(a^{(i)}_1,\ldots,a^{(i)}_d)$ and, thanks to rotational invariance, via a pre-processing stage by a change of basis we may remove $F_i(\bx)$ from the system and reduce the number of variables by one.

Our main results are polynomial time algorithms to find approximate solutions. Their input is the set of random coefficients $\ba$, and possibly some additional parameters. Their output is either a point $\bx^{\salg}\in\S^{d-1}$ that belongs to the unit sphere, or the value FALSE. Denoting by  $\Sol:=\{\bx\in\S^{d-1}:\,\bF(\bx)=\bzero\}$ the set of exact solutions, $\{\Sol\neq\varnothing\}$ is the event that the system has at least one solution.  We say that an algorithm solves $\bF(\bx)=\bzero$ on an event $\calE$ if on $\calE\cap\{\Sol\neq\varnothing\}$ its output is an approximate solution  $\bx^{\salg}\in\S^{d-1}$ and on $\calE\cap\{\Sol=\varnothing\}$ its output is FALSE. We say that an algorithm solves $\bF(\bx)=\bzero$ with probability  at least $1-\eps$ if there exists an event $\calE$ such that $\P(\calE)\ge 1-\eps$ and the algorithm solves $\bF(\bx)=\bzero$ on $\calE$.

\begin{theorem}\label{thm:main0}
There exist absolute constants $A,C,\delta_0$ such that, for any 
 $\delta\in (0,\delta_0)$ and any $\delta'>0$ the following holds.\footnote{\label{ft:delta0}The only assumption on $\delta_0$ is that it is sufficiently small so that the condition $Ce^{-d/C}<\delta<\delta_0$ implies that $d$ is large enough so that  $\lfloor d- A(d\log d)^{1/2}\rfloor\ge1$. } Suppose that:
	\begin{enumerate}
		\item If $\pmax<d^{2}$ and $Ce^{-d/C}<\delta$, then $\eps=Ce^{-d/C}$ and $n=\lfloor d- A(d\log d)^{1/2}\rfloor$.
		\item If $\pmax\ge d^{2}$ and $\pmax^{-d}<\delta$, then $\eps =\pmax^{-d}$ and $n=d-1$.
		\item Otherwise, $\eps = \delta$ and $n=d-1$.
	\end{enumerate}

	\noindent Then, there exists an algorithm which solves $\bF(\bx)=\bzero$ with probability at least $1-\eps$
 (hence with probability at least $1-\delta$) whose real complexity $\chi$ is bounded, for an appropriate constant $C_0(\delta,\delta')>0$, by
	\[
	\chi\le C_0(\delta,\delta')N^{5.5+\delta'}\,.
	\]
\end{theorem}

Let us emphasize that the user can specify an arbitrarily small
failure probability $\delta>0$, and this theorem guarantees that the algorithm solves the problem with probability at least $1-\delta$
for any $d$ and $\pmax$. However, the theorem also implies that the success  probability will actually be significantly closer to one for
for large enough $d$ or $\pmax$. For large $d>d_0(\delta)=\max\{t:Ce^{-t/C}\ge \delta\}$ and moderate $\pmax<d^{2}$, the system is under-determined with $n=\lfloor d- A(d\log d)^{1/2}\rfloor$  and otherwise it is determined $n=d-1$.

It was recently proved in  \cite{SubagConcentrationPoly} that a solution exists with probability at least $1-C'/\sqrt d$ for some universal $C'>0$ (previously, this was known in the i.i.d. case $p_i\equiv p$  with $n=d-1$ \cite{Wschebor}). Hence, the algorithm of Theorem \ref{thm:main0} outputs an approximate solution with such probability.\footnote{Note that in contrast to the complex case of Smale's problem, in which the number of solutions is a.s. the  B\'{e}zout number $\prod_{i=1}^{d-1} p_i$, (for $n=d-1$) in the real case with positive probability there is no solution. Indeed, if $p_i$ are even and all the coefficients $\ba$ are positive and $\bF_i(\bx)>0$ for all $\bx\neq\bzero$.}

Theorem \ref{thm:main0} will be proved in Section \ref{sec:BF} by combining the following results about two algorithms we introduce below. For $a,b\in\R$, we denote $a\wedge b:=\min\{a,b\}$ and $a\vee b:=\max\{a,b\}$.

\begin{theorem}[Hessian Descent]\label{thm:main1}
For some absolute constants $A,C,C_0$, assuming that $\pmax\le d^2$ and $n=\lfloor d- A(d\log d)^{1/2}\rfloor\ge1$, there exists an algorithm such that:
\begin{enumerate}
\item[$(i)$] The real complexity $\chi$ of the algorithm is bounded by
\begin{align}
\chi \le C_0 Nd^{9/2}\pmax^5  \log(\pmax)^2\, .\label{eq:BoundComplexity}
\end{align}

\item[$(ii)$] The output is an approximate solution $\bx^{\salg}\in\S^{d-1}$ with probability at least $1-Ce^{-d/C}$ (and, in particular, it solves $\bF(\bx)=\bzero$ with the same probability).
\end{enumerate}
\end{theorem}

\begin{theorem}[Multi-Scale Search]\label{thm:main2}
Let  $\pmax,d\ge 2$  be general and assume that $n= d- 1$. For some absolute constants $C,C_0$ and any $u_1\ge 1\ge u_2>0$ and $u_3>0$, there exists an algorithm such that:
\begin{enumerate}
\item[$(i)$] The real complexity $\chi$ of the algorithm is bounded by
\begin{align}
\chi\le N u_3\big(u_1 C_0\pmax\sqrt{d\log(\pmax)} \big)^{d-1}\big(
\log u_1\vee \log(1/u_2) \vee d\log(\pmax) 
\big)
\, .\label{eq:BoundComplexity2}
\end{align}

\item[$(ii)$] The algorithm solves $\bF(\bx)=\bzero$ with probability at least
\begin{equation}\label{eq:MSS_complexitybound}
1-2\exp\Big(-\frac{1}{C}d(u_1-1)^2\Big)- C u_2 -C\frac{1}{u_3}\Big(
\log u_1\vee \log(1/u_2) \vee d\log(\pmax) 
\Big).
\end{equation}

\end{enumerate}
\end{theorem}

The two algorithms will be described in Sections \ref{subsec:HDintro} and \ref{subsec:Bruteforceintro} below and the theorems will be proved in Sections \ref{sec:HD} and \ref{sec:MMS}. The event $\calE$ on which the algorithms solve the random system $\bF(\bx)=\bzero$ will be described in Section \ref{sec:goodevent}.
 
To the best of our knowledge, the real case of Smale's problem is completely open, with a few  exceptions.
Namely,  \cite{rojas2005solving} develops an algorithm
that finds all the roots of a univariate polynomial ($d=1$) consisting
of 3 monomials; \cite{lasserre2008semidefinite} develops a semidefinite programming characterization which in general does not yield a polynomial-time algorithm;  \cite{ergur2023polyhedral} proposes
a polyhedral homotopy method for sparse polynomials which is not known to yiels a polynomial time algorithm.

\subsection{Approximate solutions}\label{subsec:apxsol}

Since we work on the unit sphere $\S^{d-1}:=\{\bx\in\R^d:\,\|\bx\|=1\}$, we define Newton's method by taking a step in an orthogonal direction and projecting back to the sphere. This definition is basically a variant of Newton's Method in the projective space, defined by Shub \cite{Shub1993}
for polynomial systems of $n=d-1$ equations in $d$ complex variables. Here we also allow for $n<d-1$.

Given a matrix $\bA\in \reals^{n_2\times n_1}$ 
or a linear operator $\bA:V_1\to V_2$, with $\dim(V_i) =n_i$
(we will often identify the two objects), for $n_1\le n_2$ we denote by 
$\sigma_{\min}(\bA) := \min\{\|\bA\bx\|:\, \|\bx\|=1\}$
the minimum singular value of $\bA$. For $n_1>n_2$, 
$\sigma_{\min}(\bA) :=\sigma_{\min}(\bA^{\sT})$.

We denote by $\bD\bF(\bx):=(\partial_jF_i(\bx))_{ij}\in\R^{n\times d}$ the differential or Jacobian matrix and by $\Ts_{\bx}\subset \R^d$ the orthogonal space to $\bx$.
Given a point $\bx\in\S^{d-1}$ we define Newton's operator as follows. The matrix $\bD\bF(\bx)$ defines a linear operator and its restriction to a subspace is defined as an operator in the obvious way.
If $\sigma_{\min}(\bD\bF(\bx)\big|_{\Ts_{\bx}})>0$, let $\bv\in\R^d$ be the unique solution to 
\begin{equation}
\label{eq:Newton}
\bF(\bx)+\bD\bF(\bx)\bv=\bzero \hspace{.4cm}\mbox{subject to}\hspace{.4cm}\bv\perp \bx,\,\bv\perp\ker(\bD\bF(\bx))\,,
\end{equation}
where $\ker(\bA):=\{\by: \bA\by=\bzero\}$, and define
\begin{equation}
\label{eq:Newton2}
\NM(\bx):=\frac{\bx+\bv}{\|\bx+\bv\|_2}\,.
\end{equation}
Otherwise, if $\sigma_{\min}(\bD\bF(\bx)\big|_{\Ts_{\bx}})=0$, arbitrarily define 
\[
\NM(\bx):=\bx\,.
\]

\begin{definition}[Approximate solution] \label{def:apxsol}
Let $\bx\in\S^{d-1}$ and define the sequence $\bx^0=\bx$, $\bx^{i+1}=\NM(\bx^i)$ by Newton's method. We say that $\bx$ is  an approximate solution of the system $\bF$ if there exists $\bz\in\S^{d-1}$ such that $\bF(\bz)=\bzero$ and $\|\bx^{i}-\bz\|_2\le 2^{1-2^i}\|\bx^0-\bz\|_2$ for all $i\ge0$.
\end{definition}


\subsection{\label{subsec:energy}Optimization}
Define the ``energy function''
\begin{equation}\label{eq:H}
H(\bx) := \frac{1}{2}\|\bF(\bx)\|^2\, .
\end{equation}
Of course, $\bF(\bx)=\bzero$ if and only if $H(\bx)=0$. In other words, the solutions of the polynomial system are exactly the minimizers of the energy function. This simple observation allows us to change perspective and try to algorithmically minimize a real-valued  function over the sphere $\S^{d-1}$, instead of working with a vector-valued function. 

Problems not very different from this have been studied recently in the context of spin glass models. In the terminology of spin glass theory, $F_i(\bx)$ as in \eqref{eq:Fi} is the energy function of the spherical $p$-spin model (with $p_i=p$), up to normalization. For the spin glass model, it is customary to consider $F_i(\bx)$ as a function on the sphere of radius $\sqrt{d}$ and multiply it by $d^{-(p-1)/2}$. More generally, for mixed $p$-spin the energy is defined as a linear combinations of the pure $p$-spin energies with deterministic coefficients. Given a model, those coefficients are fixed and one is interested in the $d\to\infty$ asymptotics. 

In \cite{subag2018following} an algorithm for minimizing the energy of a spherical mixed $p$-spin model by Hessian descent was proposed and analyzed. The algorithm constructs a discrete path $\bx_0,\ldots,\bx_k$ from the origin to the sphere with $\|\bx_i\|=\sqrt{i\cdot d/k}$, whose increments at each step $\bx_{i+1}-\bx_i$ are orthogonal to the position $\bx_i$, and are approximately in the span of the eigenvectors corresponding to the minimal eigenvalues of the Hessian of the energy. Related Approximate Message Passing algorithms for Ising models, which use the same energy functions but on the hyper-cube $\{+1,-1\}^d$, were proposed and studied in \cite{montanari2019optimization,el2021optimization}. Both types of algorithms \cite{subag2018following,montanari2019optimization,el2021optimization} were proved to be optimal within a certain large class of appropriately defined Lipschitz algorithms in \cite{huang2021tight}.

In light of the close connections to the spherical spin glass models, it is natural to wonder whether the Hessian descent algorithm of \cite{subag2018following} can be adapted to minimize $H(\bx)$. In an earlier work \cite{montanari2023solving} we studied this question for $F_i(\bx)$ that correspond  to general mixed models. 
That is, the $F_i(\bx)$ studied in \cite{montanari2023solving}
are given by  linear combinations of the functions as in \eqref{eq:Fi}. The functions $F_i(\bx)$ were also allowed to have a zero order term. Namely, one of the terms in the linear combination was given by $\gamma_0 W_i$ where $\gamma_0\ge0$ and $W_i\sim \normal(0,1)$. We assumed there that the functions $F_i(\bx)$ are i.i.d. and our goal was only to find points $\bx_*$ such that $H(\bx_*)=o(n)$ asymptotically, while the  typical value is $H(\bx)=\Theta(n)$ for most points on the sphere. 
The objective $H(\bx_*)=o(n)$ is better suited
for the methods developed for algorithmic optimization of spin glasses \cite{subag2018following,montanari2019optimization,el2021optimization}.
We also note that the same model was recently studied from a statistical physics viewpoint in \cite{fyodorov2022optimization,urbani2023continuous,
urbani2023dynamical}. Again, nonrigorous physics tools are better suited to characterize the asymptotics of $\min_{\bx\in\S^{d-1}}H(\bx)$ up to $o(n)$
corrections. 

In the present work, in Theorem \ref{thm:main1} we adapt the Hessian descent algorithm to the problem of finding approximate solutions.

\subsection{Hessian descent: large $d$ and moderate $\pmax\le d^2$
}
\label{subsec:HDintro}

In the present paper, the functions $F_i(\bx)$ are homogeneous,
and therefore changing the norm $\|\bx\|$ does not change the problem (apart from a scaling factor).
As mentioned above, the earlier papers \cite{subag2018following,montanari2023solving}
construct a path in the interior of the unit ball, hence
effectively changing the energy landscape as $\|\bx\|$ increases.
This is irrelevant in the homogeneous case. Instead, we may take an orthogonal step in $\R^d$ and project back to the unit sphere $\S^{d-1}$.

Moreover, here we wish to achieve energy values much smaller than  in the spin glass optimization problems, and we therefore have to take steps which depend on the value of the energy at each step and run the algorithm with a much larger number of steps. 
Below $C_1,c_0>0$ are  absolute constants which will be determined in Lemma \ref{rmk:MaxHessian} and the proof of Theorem \ref{thm:main1}.  The algorithm we use for moderate $\pmax$ ($\pmax\le d^2$) (and
in the proof of Theorem \ref{thm:main1}) is given in the pseudo-code
\ref{algo:HD} below. \vspace{.3cm}

\begin{algorithm}[H]
	\caption{Hessian descent}\label{algo:HD}
	\SetAlgoLined
	\KwIn{The coefficients $\ba:=\Big(\big(\frac{p_i!}{k_1!\cdots k_d!}\big)^{\frac12} a^{(i)}_{k_1,\ldots,k_d}\Big)_{i\le n,\,k_1+\cdots+k_d=p_i}$, number of iterations $k$}
	\KwOut{$\bx^{\sHD}\in \S^{d-1}$ or FALSE}
	Initialize $\bx_0=(1,0,\ldots,0)\in\S^{d-1}$\,\;
	\For{$i\in\{0,\ldots,k-1\}$}{
		Find   $\bv_i\perp \bx_i$, $\|\bv_i\|=1$, s.t.:  $\langle\nabla^2H(\bx_i),\bv_i^{\otimes2}\rangle\leq {\displaystyle\frac{1}{2}\min_{\bu\perp\bx_i,\|\bu\|=1}}\langle\nabla^2H(\bx_i),\bu^{\otimes2}\rangle$\,\;\vspace{.2cm}
		$\displaystyle \delta_i = \bigg( \frac{1}{30C_1}\frac{1}{\pmax^4\log(\pmax)} \frac{ \sqrt{ (d-n) H(\bx_i)}}{d}\wedge \frac1\pmax\bigg)^{1/2}$\,\;
        $s_i = \sign\left(H(\bx_i+ \delta_i\bv_i)- H(\bx_i- \delta_i\bv_i)\right) $\,\;
        $\by_{i+1} = \bx_i- s_i\delta_i\bv_i$\,\;
	$\bx_{i+1} = \frac{\by_{i+1}}{\|\by_{i+1}\|}$\,\;
        \If{$H(\bx_{i+1})\le \pmax^{-c_0d}$}{\KwRet $\bx^{\sHD}=\bx_{i+1}$\,\;}
	}
	\KwRet FALSE\,;
\end{algorithm}\vspace{.3cm}

Above we use the convention that $\sign(0)=1$.
\begin{remark}\label{rem:find}
In the pseudo-code above we allowed ourselves to be imprecise, to improve  readability. Specifically, we did not explain how we `find' the vector $\bv_i$ or when it is even possible. More accurately, our algorithm will use a sub-routine which we define in Section \ref{sec:subroutines}. On an event of overwhelmingly high probability, uniformly in $\bx_i$, the sub-routine outputs a vector $\bv_i$ as required.
\end{remark}

To explain the basic idea, for arbitrary $\bv$ we let $\by:=\bx_i+\bv$ and write the Taylor expansion 
\[
H(\by)=H(\bx_i)+\sum_{j=1}^3\frac{1}{j!}\big\<\bv^{\otimes j},\nabla^j H(\bx_i)\big\> + \frac{1}{4!} \big\<\bv^{\otimes 4},\nabla^4 H(\bz_i)\big\>\,,
\]
where $\bz_i$ is a point in the segment connecting $\by$ to $\bx_i$. 
In Lemma \ref{rmk:MaxHessian} we will derive global bounds on the derivatives of $H(\bx)$ of any order, which hold with overwhelmingly high probability. Given such bounds, assuming that $\|\bx-\bx_i\|$ is small enough the forth order error term can be made small. By flipping $\bv$ to $-\bv$, if needed, we can make sure that the contribution of the first and third order terms to the expansion is non-positive. This is essentially the role of the sign $s_i$ in the pseudo-code. Hence,  with the notation above, up to a small error, $H(\by_{i+1})-H(\bx_i)$ is bounded from above by
\begin{align*}
\frac12\langle\nabla^2H(\bx_i),\bv_i^{\otimes2}\rangle\leq \frac{1}{4}\min_{\bu\perp\bx_i,\|\bu\|=1}\langle\nabla^2H(\bx_i),\bu^{\otimes2}\rangle\, .
\end{align*}
Since $F_i(\bx)$ are homogeneous, projecting $\by_{i+1}$ to $\bx_{i+1}$ can only reduce the energy.

The algorithm of Theorem \ref{thm:main1} runs the Hessian descent algorithm for $k=C d^{3/2}\pmax^4\log(\pmax)^2$ iterations, for some absolute constant $C$. 
We will prove that the algorithm outputs an approximate solution 
when a certain `good' event $\cE$ holds. The event $\cE$
is defined in Section \ref{sec:goodevent} by: $(i)$~an upper bounds on the derivatives of $H(\bx)$ of different orders and certain Lipschitz constants uniformly over $\S^{d-1}$; $(ii)$~a uniform upper bound on the smallest eigenvalue of $\nabla^2H(\bx)$ restricted to the orthogonal space; and $(iii)$~a lower bound on the singular value of $\bD\bF(\bx)$ uniformly over the solutions of the system $\bF(\bx)=\bzero$. 

Using $(iii)$, we will show quantitatively in Section \ref{sec:Newton} that in order for $\bx$ to be an approximate solution, it is sufficient 
for it to be close enough to some exact solution $\balpha$. In our analysis of the Hessian descent algorithm in Theorem \ref{thm:Algorithm}, we will prove bounds for $H(\bx_i)$ and $\|\bx_i-\bx_j\|$,
which holds for any $i\ge 1$ 
(even if we produce an infinite sequence $\bx_i$ by running the Hessian descent indefinitely). These bounds imply that $\bx_{\infty}=\lim_{i\to\infty}\bx_i$ is well-defined and is a solution $\bF(\bx_{\infty})=\bzero$, and that for $k$ as above,  $\|\bx_k-\bx_{\infty}\|$ is small enough to conclude that $\bx_k$ is an approximate solution.

\subsection{\label{subsec:Bruteforceintro}Multi-Scale Search: large $d$ and $\pmax\ge d^2$ or bounded $\pmax\le p_0$, $d\leq d_0$}

Under certain regularity conditions on the system $\bF(\bx)$, (summarized in Definition \ref{def:good2}) in the proof of Theorem \ref{thm:main2} we shall see that to find an approximate solution or determine that no such exists, it is enough to check the values of $\|\bF(\bx)\|$ and $\sigma_{\min}(\bD\bF(\bx)\big|_{\Ts_{\bx}})$ over a sufficiently fine grid 
\begin{equation}\label{eq:Ndknet}
	\calN_{d,k} := \left\{\bx=(x_1,\ldots,x_d)\in[-1,1)^d:\, 2^k x_i\in \mathbb Z,\,\forall i\le d \right\}\,.
\end{equation}
Basically, any exact solution  in $\S^{d-1}$ can be detected at some point $\bx\in\calN_{d,k}$ close to it, whose projection to the sphere is in turn an approximate solution.
However, for large $\pmax$ or $d$  we must choose a very small step size $2^{-k}$, making the net size $|\calN_{d,k}|=(2/2^{-k})^d$ too large to employ exhaustive search over $\calN_{d,k}$ in polynomial time complexity.\footnote{Of course, this strategy can be made more efficient by working with a net on $\S^{d-1}$ directly. However, this improvement  only affects constants in the complexity analysis.}

In the pseudo-code below we propose a different Multi-Scale Search (MSS) Algorithm. Instead of scanning the whole grid, we start from a coarse grid and refine it gradually, keeping at each stage only the  blocks that have a chance of containing a solution. 

The algorithm searches for a solution in the block $\bx+[0,2^{-k}]^d$. To search for an approximate solution on the sphere $\S^{d-1}$, it needs to be initialized with $\bx=(-1,\ldots,-1)$ and $k=-1$ so that we search over $\bx+[0,2]^d=[-1,1]^d$. The correctness of the algorithm will be proved on a certain event $\calE$ (see Definition \ref{def:good2}). This event includes a bound on the Lipschitz constant of $\bx\mapsto \|\bF(\bx)\|$ on $\S^{d-1}$. The algorithm takes as input a parameter $\calL$ that should be chosen to match this bound. As in the pseudo-code, suppose that $D:=\mbox{diam}(\bx+[0,2^{-k}]^d)$, so that the IF condition $\|\bF(\hat\bx)\|>D\cdot \calL$ implies that $\|\bF(\bz)\|>0$ for all $\bz$ in $\big(\bx+[0,2^{-k}]^d\big)\cap \S^{d-1}$, the intersection of the block with the sphere. If instead the complementing condition $\|\bF(\hat\bx)\|\le D\cdot \calL$ holds, the block is divided to sub-blocks of half the side-length, and a search is performed within each block by a recursive call to Multi-Scale Search. The value of $k_0$ determines the finest step (or block) size that we consider $2^{-k_0}$ and must be large enough. Finally, $\calL$ and $\calS$ need be values such that last IF condition with $k=k_0$ implies that $\hat\bx$ is an approximate solution. A sufficient condition for this will be derived in Corollary \ref{cor:apxsol_largep}.

We finish with two technical remarks. First, in the
pseudo-code, $s_{\min}(\bD\bF(\hat\bx)\big|_{\Ts_{\hat\bx}})$ is an approximation of $\sigma_{\min}(\bD\bF(\hat\bx)\big|_{\Ts_{\hat\bx}})$ which will be made precise in Section \ref{sec:subroutines}, where we define a sub-routine that computes it.\footnote{Indeed, the minimal singular value $\sigma_{\min}(\bA)$ of a matrix $\bA$ is equal to the minimal solution of the characteristic polynomial of $\bA\bA^{\top}$. Hence, we can only approximate it.} Second, the definition of $\hat\bx$  in the pseudo-code is simply a way to choose a point in  $\big(\bx+[0,2^{-k}]^d\big)\cap \S^{d-1}$. Any other procedure that produces a point in the same set would result the correctness of the algorithm on $\calE$.

\vspace{.3cm}

\begin{algorithm}[H]
	\caption{MSS (Multi-Scale Search)}\label{algo:MMS}
	\SetAlgoLined
	\KwIn{The coefficients $\ba$, $\bx\in[-1,1]^{d}$, parameters $\calL,\calS>0$, $k,k_0\in\ZZ$}
	\KwOut{Approximate solution $\bx^{\salg}\in \S^{d-1}$, or FALSE if no solution exists (w.h.p.)}
	\If{$\big(\bx+[0,2^{-k}]^d\big)\cap \S^{d-1}=\varnothing$}{\KwRet FALSE\,\;}
	$D=\sqrt d 2^{-k}$\vspace{.1cm}\tcp*{diameter of the block $\bx+[0,2^{-k}]^d$}
	$\tilde\bx=\argmin\big\{\|\by\|:\,\by\in\bx+\{0,2^{-k}\}^d\big\}$\vspace{.1cm} \tcp*{corner of blcok closest to origin}
	$\hat\bx=\tilde\bx/\|\tilde\bx\|$\vspace{.1cm}\tcp*{its projection to $\S^{d-1}$}
	\uIf{$k< k_0$, $\|\bF(\hat\bx)\|>D\cdot \calL$}
	{\KwRet FALSE\,\;}
	\uElseIf{$k< k_0$, $\|\bF(\hat\bx)\|\le D\cdot \calL$}
	{
		\For(\tcp*[f]{subdivision to blocks of half side-length}){$\by\in \bx+\{0,2^{-(k+1)}\}^d$} 
		{RES = MSS($\ba,\by,\calL,\calS,k+1,k_0$)\tcp*{recursive calls}
			\If{{\rm RES} $\neq$ {\rm FALSE}}{\KwRet $\bx^{\salg}$=RES\,\;}}
		\KwRet FALSE\tcp*{non of the recursive calls is successful}
	}
	\uElseIf{$k=k_0$, $\|\bF(\hat\bx)\|\le D\cdot \calL$,  $s_{\min}(\bD\bF(\hat\bx)\big|_{\Ts_{\hat\bx}})\ge \calS$}
	{\KwRet $\bx^{\salg}=\hat\bx$\,\;}
	\Else{\KwRet FALSE\,\;}
	
\end{algorithm}\vspace{.2cm}

\subsection{Smale's 17th problem in the complex case}
\label{subsec:complexSmale17}

The study of Smale's 17th problem actually started in Shub and Smale's  `B\'{e}zout  series' \cite{ShubSmaleI,ShubSmaleII,ShubSmaleIII,ShubSmaleIV,ShubSmaleV}, several years before it was posed in \cite{smale1998mathematical}.
This sequence of papers studies a \emph{homotopy  continuation method}, which in turn builds
on numerical analysis approaches that were studied since the seventies \cite{Drexler1977,GarciaZangwill1979,keller1978global}. 
The basic idea of this approach is to define an interpolation $t\bF +(1-t)\bar\bF$, $t\in[0,1]$, 
between the system one wishes to solve $\bF$ and another system $\bar \bF$ of homogeneous polynomials
of the same degrees with a root $\bx_0$ which we know of. If the system $\bar \bF$ is not singular, 
then a.s. 
the number of roots in the projective space of each of the  systems $\bF=0$ and $\bar \bF=0$ is exactly  the B\'{e}zout number $\prod_{i=1}^{N-1} p_i$. On the unit sphere the number of solutions is double. 
These roots come in pairs $(\bx(0),\bx(1))$, where $\bx(0)$ solves
$\bar\bF(\bx(0))=0$ and $\bx(1)$ solves $\bF(\bx(1)) =0$. Each pair is connected by a smooth path 
$[0,1]\ni t\mapsto \bx(t)$ of solutions to $t\bF +(1-t)\bar\bF=0$.
Homotopy based algorithms start at time $t=0$ with $\bar \bF$ and $\bx(0)$ and iteratively increase the time parameter $t$ by a small step while trying to keep track of the root by an iteration of Newton's method, in order to end at time $t=1$ with $\bF$ and an approximate root $\bx_1$. The size of the step was analyzed in terms of a \emph{condition number} in the B\'{e}zout series and in  \cite{BeltranPardoFastLinear,BurgisserCucker,ShubVI,ArmentanoEtAl}. However, major difficulties  remained in how to choose the system $\bar\bF$ and root $\bx_0$ to start from. 

In the last paper in the  B\'{e}zout  series \cite{ShubSmaleV}, Shub and Smale proved that there exists a good system and a root
to start from, but their argument was not constructive and no practical way (an algorithm) to pick such a system and a root was provided. 
In major breakthroughs \cite{BeltranPardo2,BeltranPardo1}, Beltr\'{a}n and Pardo had an ingenious idea for how to choose a good system and a starting solution  at random, which led to a randomized polynomial time algorithm (on average). This solved Smale's problem, up to the point of using randomization as opposed to a deterministic algorithm. Another breakthrough was made by B\"{u}rgisser
and Cucker \cite{BurgisserCucker} who performed a smoothed analysis of the algorithm of Beltr\'{a}n and Pardo, leading to a deterministic algorithm of (almost polynomial) complexity $N^{O(\log\log N)}$. Finally, Lairez \cite{Lairez1} found a remarkable way to de-randomize the algorithm of Beltr\'{a}n and Pardo by using the random input itself to generate the initial pair. The combination of those works gave a solution to Smale's problem in the complex case. Important improvements to the latter algorithms were recently established in \cite{BurgisserCuckerLairez2023,Lairez2}.

\section{Preliminaries}
\label{sec:Preliminaries}
For any $\ell$, let $\bG^{(\ell)}= (G^{(\ell)}_{i_1,\dots,i_{p_{\ell}}})_{i_1,\dots,i_{p_{\ell}}\le d}$
be a tensor with i.i.d. entries $G^{(\ell)}_{i_1,\dots,i_{p_{\ell}}}\sim\normal(0,1)$. We will also assume that $\bG^{(\ell)}$ are independent for different values of $\ell$. We define 
$\obG^{(\ell)} =(\overline{G}^{(\ell)}_{i_1,\dots,i_{p_{\ell}}})_{i_1,\dots,i_{p_{\ell}}\le d}$ the symmetrization of $\bG^{(\ell)}$ by 
\begin{equation}\label{eq:symmetrization}
\overline{G}^{(\ell)}_{i_1,\dots,i_{p_{\ell}}}:=(p_\ell!)^{-1}\sum_{\pi\in S_{p_\ell}}
G^{(\ell)}_{i_{\pi(1)},\dots,i_{\pi(p_\ell)}}\,.
\end{equation}
Observe that
\begin{equation}\label{eq:coefficients}
\Big(\frac{p_\ell!}{k_1!\cdots k_d!}\Big)^{\frac12}a^{(\ell)}_{k_1,\ldots,k_d}
\mbox{\quad and\quad}
\sum_{\cJ(k_1,\dots,k_d)} \overline{G}^{(\ell)}_{i_1,\dots,i_{p_{\ell}}}=\sum_{\cJ(k_1,\dots,k_d)}
G^{(\ell)}_{i_1,\dots,i_{p_{\ell}}}
\end{equation}
have the same law,
where 
$\cJ(k_1,\dots,k_d)$ is the set of
 indices $i_1,\ldots,i_{p_\ell}$
such that $x_{i_1}\cdots x_{i_{p_\ell}}=x_1^{k_1}\cdots x_d^{k_d}$. Hence we can assume that they are defined on the same probability space such that both sides of \eqref{eq:coefficients} are equal and thus
\begin{align}
F_{\ell}(\bx) = \<\bG^{(\ell)},\bx^{\otimes p_{\ell}}\>= \<\obG^{(\ell)},\bx^{\otimes p_{\ell}}\>\, .
\end{align}
We will often work with this representation for the random polynomials. As the representation in \eqref{eq:Fi}  requires less variables to describe $F_\ell(\bx)$, it is used to define the input in the algorithmic problem. Yet another equivalent way to describe the polynomials is through their covariance function
\begin{equation}\label{eq:Fcov}
	\E\big[F_i(\bx^1)F_j(\bx^2)\big] = \delta_{ij}\xi_i\big(\langle \bx^1,\bx^2\rangle\big)\, ,\quad\mbox{where }\xi_i(t):=t^{p_i}\,.
\end{equation} 

Recall that we denote by $\bD\bF(\bx):=(\partial_jF_i(\bx))_{ij}\in\R^{n\times d}$ the differential or Jacobian matrix and by $\Ts_{\bx}\subset \R^d$ the orthogonal space to $\bx$, which we identify with the tangent space to the sphere of radius 
$\sqrt{q}$, $\S^{d-1}(\sqrt{q})$ at $\bx$, when $\|\bx\|_2^2=q> 0$. 
We often view $\nabla^kF_1(\bx)= (\partial_{i_1,\dots,i_k}F_1(\bx))_{i_1\dots i_k\le d}$ 
as a $k$-th order tensor $\nabla^kF_1(\bx)\in(\reals^d)^{\otimes k}$ and
 $\nabla^k\bF(\bx)= (\partial_{i_1,\dots,i_k}F_{\ell}(\bx))_{\ell\le n,i_1\dots i_k\le d}$
 as a $(k+1)$-th order tensor $\nabla^k\bF(\bx)\in\reals^n\otimes (\reals^d)^{\otimes k}$.
 The operator norm of a tensor $\bT\in \R^{n_1}\otimes \cdots\otimes \R^{n_k}$ is
 \begin{align}
 \|\bT\|_{\op} :=\max_{\bv_1\in\S^{n_1-1}}\cdots\max_{\bv_k\in\S^{n_k-1}}
 \<\bT,\bv_1\otimes\cdots\otimes \bv_k\>\, .
 \end{align}
Finally, we denote by $\Ball^d(r):=\{\bx\in\R^d:\,\|\bx\|\leq r\}$ the ball of radius $r$, and let
$\Ball^d(r_1,r_2):=\Ball^d(r_2)\setminus\Ball^d(r_1)$.

Throughout, we will write $\bW\sim \GOE(N)$ if $\bW=\bW^{\sT}$ and $(W_{ij})_{i\le j\le N}$
are independent with $W_{ii}\sim\normal(0,2)$,  $W_{ij}\sim\normal(0,1)$ for $i<j$.
We write   $\bZ\sim \GOE(M,N)$ if $(Z_{ij})_{i\le M,j\le N}$
are independent with $Z_{ij}\sim\normal(0,1)$.
It is useful to recall that (under the present setting) $F_\ell(\bx)$ and its derivatives are jointly Gaussian and 
\begin{equation}
    \E\Big\{ \frac{\partial}{\partial x_{i_1}}\cdots\frac{\partial}{\partial x_{i_k}}F_\ell(\bx) \frac{\partial}{\partial y_{j_1}}\cdots\frac{\partial}{\partial y_{j_m}}F_\ell(\by)\Big\} 
    = 
    \frac{\partial}{\partial x_{i_1}}\cdots\frac{\partial}{\partial x_{i_k}} \frac{\partial}{\partial y_{j_1}}\cdots\frac{\partial}{\partial y_{j_m}}\E\big\{ F_\ell(\bx)F_\ell(\by)\big\}\,.
\end{equation}

Our first preliminary result provides  bounds on the norm of $\bF(\bx)$ and its derivatives.

\begin{lemma}\label{rmk:MaxHessian}
Assume that $n\le d$.
 Then, for any $k_{\max}\ge 2$ there exist constants $C_*, C_1$
 depending on $k_{\max}$, such that, with probability at least
 $1-C_*\exp(-d/C_*)$
\begin{align}
\max_{\bx\in\Ball^d(1)}\|\bF(\bx)\|_{2}&\le C_1\sqrt{d\log(\pmax)}\, ,\label{eq:MaxF}\\
\max_{\bx\in\Ball^d(1)}\|\bD \bF(\bx)\|_{\op}& \le C_1\pmax
\sqrt{d\log(\pmax)}\, ,\label{eq:MaxDF}\\
\max_{\bx\in\Ball^d(1)}
\big\|\nabla^k\bF(\bx)\big\|_{\op}& \le C_1\pmax^{k}
\sqrt{d\log\pmax}\,, \quad\forall 2\le k\leq k_{\max}\,,\label{eq:MaxDkF}
\\
\max_{\bx\in\Ball^d(1)}
\big\|\nabla^k H(\bx)\big\|_{\op}& \le C_1\pmax^{k}
d\log \pmax\,. \quad\forall 1\le k\leq k_{\max}\,.\label{eq:MaxDDH}
\end{align}
\end{lemma}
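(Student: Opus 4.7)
The plan is to bound \eqref{eq:MaxF}--\eqref{eq:MaxDkF} by viewing each quantity as the supremum of a mean-zero Gaussian process on a finite-dimensional parameter set, controlling its pointwise variance via \eqref{eq:Fcov}, and combining Borell's Gaussian concentration inequality with an $\epsilon$-net argument. The bound \eqref{eq:MaxDDH} is then deduced by applying the Leibniz rule to $H=\tfrac12\|\bF\|^2$.

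For fixed $\bx\in\Ball^d(1)$ and unit vectors $\bu\in\S^{n-1}$, $\bv_1,\ldots,\bv_k\in\S^{d-1}$, consider
\[
Z(\bx,\bu,\bv_1,\ldots,\bv_k):=\langle\nabla^k\bF(\bx),\bu\otimes\bv_1\otimes\cdots\otimes\bv_k\rangle=\sum_{\ell}u_\ell\,\partial_{\bv_1}\!\cdots\partial_{\bv_k}F_\ell(\bx).
\]
This is mean-zero Gaussian, and by independence of the $F_\ell$ together with \eqref{eq:Fcov},
\[
\mathrm{Var}(Z)=\sum_{\ell}u_\ell^{2}\,\partial^{(\bx)}_{\bv_1}\!\cdots\partial^{(\bx)}_{\bv_k}\partial^{(\by)}_{\bv_1}\!\cdots\partial^{(\by)}_{\bv_k}\langle\bx,\by\rangle^{p_\ell}\Big|_{\by=\bx},
\]
which expands directly to a bound of the form $\mathrm{Var}(Z)\le C_k\pmax^{2k}$, with $C_k$ depending only on $k$ (each differentiation brings down at most a factor $p_\ell$, and at unit vectors all inner products are bounded by $1$).

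The parameter space $\Ball^d(1)\times\S^{n-1}\times(\S^{d-1})^{k}$ has effective dimension at most $d(k_{\max}+2)$ (using $n\le d$). Since each $F_\ell$ is a polynomial of degree $\le\pmax$, a Bernstein-type inequality for polynomials on the sphere gives $\|\nabla^{j+1}F_\ell\|_{L^{\infty}(\S^{d-1})}\lesssim\pmax\,\|\nabla^{j}F_\ell\|_{L^{\infty}(\S^{d-1})}$, so the natural length scale of $Z$ in each factor is $\sim 1/\pmax$. Taking an $\epsilon$-net of the parameter space at scale $\epsilon\sim 1/\pmax$ yields a set of cardinality at most $(C\pmax)^{d(k_{\max}+2)}$. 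A union bound over this net combined with Borell's inequality (using the variance estimate above) and the Bernstein-type bound to transfer from the net to the full parameter space then gives, with probability at least $1-C_*e^{-d/C_*}$,
\[
\max_{\bx\in\Ball^d(1)}\|\nabla^{k}\bF(\bx)\|_{\op}\le C_1\pmax^{k}\sqrt{d\log\pmax},
\]
for every $k=0,1,\ldots,k_{\max}$. Intersecting these events yields \eqref{eq:MaxF}--\eqref{eq:MaxDkF}.

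Finally, the Leibniz rule applied to $H(\bx)=\tfrac12\sum_\ell F_\ell(\bx)^2$ gives, for any unit $\bu\in\S^{d-1}$,
\[
\langle\nabla^{k}H(\bx),\bu^{\otimes k}\rangle=\frac12\sum_{j=0}^{k}\binom{k}{j}\sum_\ell\langle\nabla^{j}F_\ell(\bx),\bu^{\otimes j}\rangle\langle\nabla^{k-j}F_\ell(\bx),\bu^{\otimes(k-j)}\rangle,
\]
and Cauchy--Schwarz in $\ell$ bounds the inner sum by $\|\nabla^{j}\bF(\bx)\bu^{\otimes j}\|_{2}\,\|\nabla^{k-j}\bF(\bx)\bu^{\otimes(k-j)}\|_{2}\le\|\nabla^{j}\bF\|_{\op}\|\nabla^{k-j}\bF\|_{\op}$. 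On the event established above this is at most $C_{1}^{2}\pmax^{k}d\log\pmax$, and summing over $j$ proves \eqref{eq:MaxDDH}. The main obstacle is the net step: securing the $\sqrt{d\log\pmax}$ scaling rather than a larger power of $d$ or $\log d$ requires using the polynomial structure (through the Bernstein-type inequality) to ensure that the relevant Lipschitz/chaining constant grows like $\pmax$ rather than depending on $d$, so that the net cardinality $(C\pmax)^{O(d)}$ contributes only the factor $\sqrt{d\log\pmax}$ through its logarithm.
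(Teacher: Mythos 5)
Your proposal is correct in its conclusions, but it takes a genuinely different route from the paper, and a brief comparison is worthwhile. The paper bounds each $\E\max_{\bx,\bu,\bv}Z_k$ by computing the canonical distance of $Z_k$, applying the Sudakov--Fernique comparison inequality to dominate $Z_k$ by a sum of three decoupled processes, and invoking a known bound (Proposition A1 of \cite{montanari2023solving}) on $\E\max_{\bx\in\S^{d-1}}F_{\max}(\bx)$ for the pure $\pmax$-spin model; the radial component of $\bv$ is peeled off by recursion on $k$, and the restriction to a spherical cap $\Cap_d(\pi/4)$ avoids sign issues in the comparison. You instead bound the pointwise variance $\Var(Z_k)\le C_k\pmax^{2k}$, take a union bound over an $\eps$-net, and transfer to the full parameter set by a Bernstein-type Lipschitz bound. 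Both approaches give the claimed $\pmax^k\sqrt{d\log\pmax}$ rate, because in each the key entropy/variance balance is $\log(\text{net or covering}) \asymp d\log\pmax$ against $\Var \asymp \pmax^{2k}$. Your route is arguably more self-contained (no appeal to the companion paper's Proposition A1), while the paper's route avoids having to quantify any polynomial Markov/Bernstein constant.

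Two technical points deserve care in your argument, neither of which is fatal. First, the net-to-sup transfer is self-bounding: the Lipschitz constant of $Z_k$ in the $\bx$ variable is $\sup_{\bx}\|\nabla^{k+1}\bF(\bx)\|_{\op}$, and in the $\bu,\bv_j$ variables it is $\sup_{\bx}\|\nabla^{k}\bF(\bx)\|_{\op}$ itself. So you must either set up a contraction (choose net scales $\eps_0 \asymp 1/k$ in $\bu,\bv_j$ and $\eps_1\asymp 1/\pmax$ in $\bx$, derive $M_k \le \max_{\text{net}}Z_k + \tfrac12 M_k$ and rearrange), or make explicit that the Bernstein step reduces the $\bx$-Lipschitz constant to $C\pmax M_k$ so the same absorption works. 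Second, the Bernstein-type inequality $\|\nabla^{k+1}\bF\|_{L^\infty(\Ball^d(1)),\op}\lesssim \pmax\,\|\nabla^{k}\bF\|_{L^\infty(\Ball^d(1)),\op}$ is not the generic Markov inequality on the ball (which would carry a $\pmax^2$ factor); it uses the homogeneity of the $F_\ell$. Specifically, one must note that (i) by homogeneity the sup over $\Ball^d(1)$ equals the sup over $\S^{d-1}$ for each fixed $(\bw,\bv_1,\dots,\bv_k)$; (ii) the relevant scalar $h(\bx)=\sum_\ell w_\ell\langle\nabla^k F_\ell(\bx),\bv_1\otimes\cdots\otimes\bv_k\rangle$ is a sum of homogeneous pieces of degrees $p_\ell - k$, so its tangential gradient on $\S^{d-1}$ is controlled by spherical Bernstein (factor $\pmax-k$) while its radial derivative is controlled termwise by Euler's identity (factor $p_\ell - k \le \pmax$), followed by Cauchy--Schwarz in $\ell$. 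As you observe, even the cruder Markov factor $\pmax^2$ would only shrink the $\bx$-net scale to $1/\pmax^2$, which changes $\log|\text{net}|$ by a factor of $2$ and hence is harmless, so the final estimate is robust. The Leibniz step for \eqref{eq:MaxDDH} is exactly the paper's.
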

In fact, it will be clear from the proof that the last two inequalities follow for $2\le k\le K$ with general $K$, if one allows the constants $C_*, C_1$ to depend on $K$.
\begin{proof}
It will be enough to prove a bound on the probability for each of the 
inequalities \eqref{eq:MaxF}-\eqref{eq:MaxDDH} separately, and conclude the lemma by a union bound. 

\vspace{0.2cm}

\noindent{\bf Proof of Eq.~\eqref{eq:MaxF}.} Consider the following Gaussian process indexed by 
$\bu\in\S^{n-1}$, $\bx\in \S^{d-1}$:
\begin{align}
 Z_0(\bu,\bx):= \<\bu, \bF(\bx)\>\, ,
 \end{align}
 and notice that, using homogeneity, 
 \begin{align}
\max_{\bx\in\Ball^d(1)}\|\bF(\bx)\|_{2}  = \max_{\bu\in\S^{n-1},\bx\in\S^{d-1}} Z_0(\bu,\bx)\, .
\end{align}
With $\be_1=(1,0,\ldots,0)$ denoting the standard basis element, define the spherical cap $\Cap_d(\pi/4):=\{\bx\in\S^{d-1}: \<\bx,\be_1\>\ge 1/\sqrt{2}\}$ and note that for any $\bx_1,\bx_2\in\Cap_d(\pi/4)$, $\<\bx_1,\bx_2\>\ge 0$. We claim that to prove \eqref{eq:MaxF}, it is enough to prove that for some absolute $C_1, C_2, C_3>0$,
\begin{align}\label{eq:M0}
\P\left(M_0:=\max_{\bu\in\S^{n-1},\bx\in\Cap_d(\pi/4)} Z_0(\bu,\bx)>C_1\pmax\sqrt{d\log(\pmax)}\right)<C_2\exp(-C_3 d/C_*)\, .   
\end{align}
Indeed, by the Borell-TIS inequality, by increasing $C_1$ we can make sure that the same bound holds with $C_3$ is as large as we wish. Thus, noting that
$\S^{d-1}$ can be covered by $e^{C'd}$ rotations of $\Cap_d(\pi/4)$ for some absolute constant $C'>0$,  by rotational invariance and a union bound we obtain the required bound with $\bx$ maximized over $\S^{d-1}$ instead of $\Cap_d(\pi/4)$.

In order to prove Eq.~\eqref{eq:M0}, we compute the canonical distance of this process to get (using $\xi_i(1)=1$)
\begin{align*}
 d_0(\bu_1,\bx_1;\bu_2, \bx_2)^2 &:=\E\left\{ \big(\<\bu_1, \bF(\bx_1)\> - \<\bu_2, \bF(\bx_2)\>\big)^2 \right\}\\
 &=
 \sum_{i=1}^n\left\{u_{1,i}^2-2u_{1,i}u_{2,i}\,\xi_i(\<\bx_1,\bx_2\>)
 +u_{2,i}^2\right\}\\
 & =\frac{1}{2}\sum_{i=1}^n(u_{1,i}^2+u_{2,i}^2)\big(2-2\xi_i(\<\bx_1,\bx_2\>)\big)
 +\sum_{i=1}^n(u_{1,i}-u_{2,i})^2\xi_i(\<\bx_1,\bx_2\>)\\
 &\le \max_{i\le n}\big(2-2\xi_i(\<\bx_1,\bx_2\>) \big) + \|\bu_1-\bu_2\|^2_2\\
 &\le \xi_{\min}(\|\bx_1\|^2)-2\xi_{\min}(\<\bx_1,\bx_2\>)+\xi_{\min}(\|\bx_2\|^2)
 +\|\bu_1-\bu_2\|^2_2\, ,
\end{align*}
where $\xi_{\min}(q):=\min_{i\le n}\xi_i(q)=q^{\pmax}$ and we assume that $\bx_1,\bx_2\in\Cap_d(\pi/4)$ so that $\<\bx_1,\bx_2\>\ge0$.
We recognize that the right-hand side is the canonical distance of the process
$\oZ(\bu,\bx):= \<\bg,\bu\>+F_{\max}(\bx)$, where $\bg\sim\normal(0,\id_n)$ and $F_{\max}$ is the centered Gaussian process 
with covariance $\<\bx_1,\bx_2\>^{\pmax}$.  

Using \cite[Proposition A1]{montanari2023solving}, we get
\begin{align}
\E\max_{\bu\in\S^{n-1}\bx\in\S^{d-1}} \oZ(\bu,\bx) \le 
\E\|\bg\|+ \E\max_{\bx\in\S^{d-1}} F_{\max}(\bx)\le C\sqrt{n}+C\sqrt{d\log \pmax}\, .
\end{align}
Hence, for $M_0$ as in \eqref{eq:M0}, using the Sudakov-Fernique inequality we get, for $n\le d$,
\begin{align}\label{eq:EmaxF}
\E M_0\le C\sqrt{d\log \pmax}\, .
\end{align}
On the other hand,  $Z_0(\bu,\bx)$ is a Lipschitz function of $\bG$, hence so is
$M_0$, whence, by Gaussian concentration 
\begin{align}
\prob\big(M_0\ge \E M_0+ t\big) \le e^{-t^2/2}\, .
\end{align}
Equation \eqref{eq:M0}, and thus \eqref{eq:MaxF}, follow from the last two displays.

\vspace{0.2cm}

\noindent{\bf Proof of Eq.~\eqref{eq:MaxDF}.} 
We have
\begin{align}
\max_{\bx\in \Ball^d(1)}\|\bD \bF(\bx)\|_{\op}&\stackrel{(a)}{\le} 
\max_{\bx\in \Ball^d(1)}\max_{\bv\in\S^{d-1}:\<\bv,\bx\>=0}\|\bD \bF(\bx)\bv\|
+\max_{\bx\in \Ball^d(1)\setminus\bfzero}\frac{\|\bD \bF(\bx)\bx\|}{\|\bx\|}\nonumber\\
& \stackrel{(b)}{\le} \max_{\bx,\bv\in\S^{d-1}:\<\bv,\bx\>=0}\|\bD \bF(\bx)\bv\|
+\max_{\bx\in \S^{d-1}}\|\bD \bF(\bx)\bx\|\nonumber\\
& \stackrel{(c)}{\le}  \max_{(\bu,\bx,\bv)\in\cA_{n,d}}Z_{1}(\bu,\bx,\bv)
+\max_{\bx\in \S^{d-1}}\|\bD \bF(\bx)\bx\|\, ,\label{eq:BoundDF1}
\end{align}
where $(a)$ follows by triangle inequality, $(b)$ by homogeneity, and in
$(c)$ we defined:
\begin{align}
\cA_{n,d}& :=\big\{(\bu,\bx,\bv)\in\S^{n-1}\times (\S^{d-1})^2\; :\;\;\<\bv,\bx\>=0\big\}\, ,\\
 Z_1(\bu,\bx,\bv)&:= \<\bu, \bD\bF(\bx)\bv\>\, .
 \end{align}
 The second term in the upper bound \eqref{eq:BoundDF1} is easily treated since
 \begin{align*}
 \big(\bD \bF(\bx)\bx\big)_{i} = p_i\<\bG^{(i)},\bx^{\otimes p_i}\> = p_i\,  F_i(\bx)\, ,
 \end{align*}
 and therefore 
 \begin{align}
\big\|\bD \bF(\bx)\bx\big\| \le \pmax \|\bF(\bx)\|\, .
 \label{eq:BoundDFRadial}
 \end{align}
 It is therefore sufficient to bound the first term in Eq.~\eqref{eq:BoundDF1}.

 We next compute the canonical distance associated to the process $Z_1$.
 For $(\bu_s,\bx_s,\bv_s)\in\cA_{n,d}$, $s\in\{1,2\}$, we get
 \begin{align}
 d_1(\bu_1,\bv_1,\bx_1;\bu_2,\bv_2, \bx_2)^2 &= \Delta_1(\bu_1,\bv_1,\bx_1;\bu_2,\bv_2, \bx_2)+
 \Delta_2(\bu_1,\bv_1,\bx_1;\bu_2,\bv_2, \bx_2)\, ,\\
 \Delta_1(\bu_1,\bv_1,\bx_1;\bu_2,\bv_2, \bx_2)&:= 
 -2\<\bx_1,\bv_2\>\<\bx_2,\bv_1\>\sum_{i=1}^n
 u_{1,i}u_{2,i}\xi_i''(\<\bx_1,\bx_2\>)\, ,\nonumber\\
 \Delta_2(\bu_1,\bv_1,\bx_1;\bu_2,\bv_2, \bx_2)&:= \sum_{i=1}^n
 u_{1,i}^2\xi_i'(1)+\sum_{i=1}^n
 u_{2,i}^2\xi_i'(1)
 -2\<\bv_1,\bv_2\>\sum_{i=1}^n
 u_{1,i}u_{2,i}\xi_i'(\<\bx_1,\bx_1\>)\, .
 \end{align}
 Denoting by $\bP^{\perp}_s$ the projector orthogonal to $\bx_s$, we have
\begin{equation}
\label{eq:Delta1bd}    
 \begin{aligned}
 \Delta_1(\bu_1,\bv_1,\bx_1;\bu_2,\bv_2, \bx_2)&\le 2\, \max_{i\le n}\xi_i''(1)\|\bP^{\perp}_1\bx_2\|\,\|\bP^{\perp}_2\bx_1\|\\
 & = 2 \pmax(\pmax-1) \big(1-\<\bx_1,\bx_2\>^2\big)\\
 & \le 2 \pmax^2\|\bx_1-\bx_2\|^2\, .
 \end{aligned}
\end{equation}
 On the other hand, 
 \begin{align*}
 \Delta_2(\bu_1,\bv_1,\bx_1;\bu_2,\bv_2, \bx_2)&\le \sum_{i=1}^n
 u_{1,i}^2\xi_i'(1)+\sum_{i=1}^n
 u_{2,i}^2\xi_i'(1)
 -2\sum_{i=1}^n
 u_{1,i}u_{2,i}\xi_i'(\<\bx_1,\bx_1\>)
 +\pmax\|\bv_1-\bv_2\|^2\, .
 \end{align*}
 Putting the above bounds together and using Sudakov-Fernique inequality, 
 we see that 
 \begin{align}\label{eq:maxZ1}
     \E\max_{(\bu,\bx,\bv)\in\cA_{n,d}}Z_{1}(\bu,\bx,\bv) &
     \le \E\max_{(\bu,\bx,\bv)\in\cA_{n,d}}\big\{\sqrt{2}\pmax\<\bg,\bx\>+\sqrt{\pmax}
     \<\bh,\bv\>+
     \<\bu,\hbF(\bx)\>\big\}\, ,
 \end{align}
 where $\bg$, $\bh$ are mutually independent standard normal vectors, independent of 
 $\hbF$, a centered Gaussian process (taking values in $\reals^n$) 
 with covariance $\E\{\hF_i(\bx_1)\hF_j(\bx_2)\} = \delta_{ij}\xi_i'(\<\bx_1,\bx_2\>)$.
 By the same argument used for $\bF$, we have $\E\max_{\bx\in\S^{d-1}}\|\hbF(\bx)\|\le
 C\pmax\sqrt{d\log \pmax}$.
 We then have
 \begin{align*}
     \E\max_{(\bu,\bx,\bv)\in\cA_{n,d}}Z_{1}(\bu,\bx,\bv) & \le C\pmax\sqrt{d}
     +\E\max_{\bx\in\S^{d-1}}\|\hbF(\bx)\| \le  C\pmax\sqrt{d\log \pmax}\, .
 \end{align*}
Using this bound together with \eqref{eq:BoundDFRadial},
we thus obtain
\begin{align}
\E\|\bD\bF(\bx)\|_{\op} \le   C\pmax\sqrt{d\log \pmax}\, .
\end{align}
The tail probability is controlled by Gaussian concentration as in the case of 
$\max_{\bx\in\S^{d-1}}\|\bF(\bx)\|$ treated above.

\vspace{0.2cm}

\noindent{\bf Proof of Eq.~\eqref{eq:MaxDkF}.} To generalize the argument
at the previous point, we define the process
\begin{align}\label{eq:Zk_Def}
    Z_k(\bu,\bx,\bv)&:= \sum_{i=1}^n u_i\< \nabla^k F_i(\bx),\bv^{\otimes k}\>\, .
\end{align}
The quantity we want to upper bound is
\begin{align}
\max_{\bx\in\Ball^d(1)}\big\|\nabla^k \bF(\bx)\big\|_{\op} =
\max_{\bx\in\Ball^d(1)}\max_{\bu\in\S^{n-1}}\max_{\bv\in\S^{d-1}}  Z_k(\bu,\bx,\bv)\, .
\end{align}
Note that we can always decompose $\bv = a\bx+\bv_{\perp}$ where $\<\bv_{\perp},\bx\>=0$,
and therefore\footnote{Given symmetric tensors $\bA\in (\reals^n)^{\otimes k}$,
$\bB\in (\reals^n)^{\otimes (k-j)}$, we denote by $\bT = \bA\{\bB\}$ the tensor 
with components $T_{i_1\dots i_j} = \sum_{i_{j+1}\dots i_k} A_{i_1\dots i_k}B_{i_{j+1}\dots i_k}$.}
\begin{align}
  Z_k(\bu,\bx,\bv)&:= \sum_{j=0}^k
  \binom{k}{j}a^{k-j}\sum_{i=1}^n u_i\< \nabla^k F_i(\bx)\{\bx^{k-j}\},\bv_{\perp}^{\otimes j}\>\, .
\end{align}
By homogeneity, $\nabla^k F_i(\bx)\{\bx^{k-j}\} = (p_i-j)\cdots (p_{i}-k+1)
\nabla^j F_i(\bx)$, and therefore all terms except the one with $j=k$ can be controlled by
bounds on lower order derivatives $k$.
We therefore need to bound 
$\E\max_{(\bu,\bx,\bv)\in\cA_{n,d}}Z_{k}(\bu,\bx,\bv)$,
i.e. to consider the case $\<\bv,\bx\>=0$. 

Define the covaraince function
\begin{align*}
   & \bC_k(\bu_1,\bv_1,\bx_1;\bu_2,\bv_2, \bx_2):=\E\big\{
    Z_{k}(\bu_1,\bx_1,\bv_1)Z_{k}(\bu_2,\bx_2,\bv_2)\big\} 
    =\sum_{i=1}^n u_{1,i}u_{2,i} \partial_{\bv_1}^k\partial_{\bv_2}^k\xi_i(\<\bx_1,\bx_2\>) \\
    &\quad= \sum_{j_1,j_2} \hat a_{j_1,j_2,k} \sum_{i=1}^n u_{1,i}u_{2,i} \xi_i^{(j_1+j_2)}(\<\bx_1,\bx_2\>) \<\bv_1,\bx_2\>^{j_1+j_2-k}\<\bv_2,\bx_1\>^{j_1+j_2-k}\<\bv_1,\bv_2\>^{2k-j_1-j_2}\\
    &\quad= \sum_{j} a_{j,k} \sum_{i=1}^n u_{1,i}u_{2,i} \xi_i^{(j)}(\<\bx_1,\bx_2\>) \<\bv_1,\bx_2\>^{j-k}\<\bv_2,\bx_1\>^{j-k}\<\bv_1,\bv_2\>^{2k-j}
    \,,
\end{align*}
where $\partial_{\bv_i}$ denotes the directional derivative corresponding to $\bv_i$ w.r.t. the coordinates of $\bx_i$, $\hat a_{j_1,j_2,k}$ and $a_{j,k}$ are combinatorial factors and the summation is over $j$ or $j_1,j_2$ such that all the exponents above are non-negative.
Define  $\hat\bC_k(\bu_1,\bv_1,\bx_1;\bu_2,\bv_2, \bx_2)$ by the same expression as above with summation only over $j\neq k$  and note that, for $(\bu,\bx,\bv)\in\cA_{n,d}$, we have that
\begin{align*}
   \hat \bC_k(\bu,\bv,\bx;\bu,\bv, \bx) &=0,\\
   \bC_k(\bu,\bv,\bx;\bu,\bv, \bx)&=
       a_{k,k} \sum_{i=1}^n u_{i}^2 \xi_i^{(k)}(1) \,.
\end{align*}
Thus,   for $(\bu_i,\bx_i,\bv_i)\in\cA_{n,d}$,
 the squared canonical distance of $Z_{k}(\bu,\bx,\bv)$ is given  by
\begin{align}
   &d_k(\bu_1,\bv_1,\bx_1;\bu_2,\bv_2, \bx_2)^2 \nonumber\\
   &\quad=\bC_k(\bu_1,\bv_1,\bx_1;\bu_1,\bv_1, \bx_1)-2\bC_k(\bu_1,\bv_1,\bx_1;\bu_2,\bv_2, \bx_2)+\bC_k(\bu_2,\bv_2,\bx_2;\bu_2,\bv_2, \bx_2)
   \nonumber\\
   &\quad=\hat\bC_k(\bu_1,\bv_1,\bx_1;\bu_1,\bv_1, \bx_1)-2\hat\bC_k(\bu_1,\bv_1,\bx_1;\bu_2,\bv_2, \bx_2)+\hat\bC_k(\bu_2,\bv_2,\bx_2;\bu_2,\bv_2, \bx_2) \label{eq:Chat}\\
    &\quad\; +  a_{k,k} \Big( \sum_{i=1}^n u_{1,i}^2 \xi_i^{(k)}(1) +\sum_{i=1}^n u_{2,i}^2 \xi_i^{(k)}(1)-2\sum_{i=1}^n u_{1,i}u_{2,i} \xi_i^{(k)}(\<\bx_1,\bx_2\>) \<\bv_1,\bv_2\>^{k}\Big)
    \,. \label{eq:ahat}
\end{align}

The expression in \eqref{eq:Chat} is bounded by 
\begin{align*}
    A_k \max_{i\le n,j\le k}\xi_i^{(2j)}(1) 
    |\<\bv_1,\bx_2\>\<\bv_2,\bx_1\>| \le A_k \pmax^{2k} \|\bx_1-\bx_2\|^2\,,
\end{align*}
where $A_k$ is a combinatorial factor and we used the same bound as in \eqref{eq:Delta1bd}.
On the other hand, assuming that $\<\bv_1,\bv_2\>\ge0$, the expression in \eqref{eq:ahat} is bounded by
\begin{align*}
    a_{k,k} \Big( \sum_{i=1}^n u_{1,i}^2 \xi_i^{(k)}(1) +\sum_{i=1}^n u_{2,i}^2 \xi_i^{(k)}(1)-2\sum_{i=1}^n u_{1,i}u_{2,i} \xi_i^{(k)}(\<\bx_1,\bx_2\>) \Big) + k\pmax^k \|\bv_1-\bv_2\|^{2}\,.
\end{align*}
Hence, by the Sudakov-Fernique inequality
\begin{align*}
\E\max_{(\bu,\bx,\bv)\in\cA_{n,d}:\bv\in\Cap_d(\pi/4)} Z_k(\bu,\bx,\bv)\le \E\max_{(\bu,\bx,\bv)\in\cA_{n,d}}\big\{\sqrt{A_k}\pmax^k\<\bg,\bx\>+\sqrt{k\pmax^k}
     \<\bh,\bv\>+
     \<\bu,\hbF(\bx)\>\big\}
\, ,
\end{align*}
where $\bg$, $\bh$ are mutually independent standard normal vectors, independent of 
 $\hbF$, a centered Gaussian process (taking values in $\reals^n$) 
 with covariance $\E\{\hF_i(\bx_1)\hF_j(\bx_2)\} = \delta_{ij}\xi_i^{(k)}(\<\bx_1,\bx_2\>)$.
The proof of Eq.~\eqref{eq:MaxDkF} thus follows by a similar argument to that following \eqref{eq:maxZ1} and the argument used in the proof of Eq.~\eqref{eq:MaxF} to move from $\Cap_d(\pi/4)$ to the whole sphere.

 \vspace{0.2cm}

\noindent{\bf Proof of Eq.~\eqref{eq:MaxDDH}.}
We use
\[
\big\|\nabla^k H(\bx)\big\|_{\op} \le 2^k \max_{0\le i\le k} \big\|\nabla^i \bF(\bx)\big\|_{\op}\big\|\nabla^{k-i} \bF(\bx)\big\|_{\op}\,, 
\] 
and therefore this claim follows from the previous ones.
\end{proof}

For any $\bx\in \reals^d\setminus\{0\}$, we let  $\bU_{\bx}\in\reals^{d\times(d-1)}$ be an arbitray matrix whose columns form a basis 
of $\Ts_{\bx}$.
For $\bx_1,\bx_2\in \reals^d\setminus\{0\}$, define 
$\bU_{\bx_1,\bx_2}:= \bR_{\bx_1,\bx_2}\bU_{\bx_1}$, where 
$\bR_{\bx_1,\bx_2}$ is the rotation that keeps unchanged the space orthogonal to $\bx_1$, $\bx_2$,
and maps $\bx_1/\|\bx_1\|_2$ to $\bx_2/\|\bx_2\|_2$. We will need the following geometric fact.
\begin{lemma}\label{lemma:Udiff}
For any non-zero $\bx_1,\bx_2\in\Ball^d(1)$, we have
\begin{align}
\big\|\bU_{\bx_1,\bx_2}-\bU_{\bx_1}\|_{\op}\le \frac{\|\bx_1-\bx_2\|_2}{\|\bx_1\|_2\wedge \|\bx_2\|_2}\, .
\end{align}
\end{lemma}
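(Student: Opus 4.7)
The plan is to reduce the bound on $\|\bU_{\bx_1,\bx_2}-\bU_{\bx_1}\|_{\op}$ to a bound on $\|\bR_{\bx_1,\bx_2}-\bI\|_{\op}$, and then express the latter geometrically in terms of $\hat\bx_i := \bx_i/\|\bx_i\|_2$. Since the columns of $\bU_{\bx_1}$ can be taken orthonormal, so that $\|\bU_{\bx_1}\|_{\op}=1$, we have
\begin{align*}
\|\bU_{\bx_1,\bx_2}-\bU_{\bx_1}\|_{\op}=\|(\bR_{\bx_1,\bx_2}-\bI)\bU_{\bx_1}\|_{\op}\le \|\bR_{\bx_1,\bx_2}-\bI\|_{\op}\,.
\end{align*}

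Next, the key observation is that $\bR_{\bx_1,\bx_2}$ acts as the identity on the orthogonal complement of $\spn(\bx_1,\bx_2)$, and as a planar rotation by an angle $\theta$ on $\spn(\bx_1,\bx_2)$, where $\cos\theta=\langle\hat\bx_1,\hat\bx_2\rangle$. A direct computation of the singular values of a $2\times 2$ rotation shows
\begin{align*}
\|\bR_{\bx_1,\bx_2}-\bI\|_{\op} = 2|\sin(\theta/2)| = \sqrt{2-2\cos\theta} = \|\hat\bx_1-\hat\bx_2\|_2\,.
\end{align*}

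Finally, I would bound $\|\hat\bx_1-\hat\bx_2\|_2$ by expanding
\begin{align*}
\|\hat\bx_1-\hat\bx_2\|_2^{\,2} = 2-\frac{2\langle \bx_1,\bx_2\rangle}{\|\bx_1\|_2\|\bx_2\|_2} = \frac{\|\bx_1-\bx_2\|_2^{\,2}-(\|\bx_1\|_2-\|\bx_2\|_2)^2}{\|\bx_1\|_2\|\bx_2\|_2}\le \frac{\|\bx_1-\bx_2\|_2^{\,2}}{(\|\bx_1\|_2\wedge\|\bx_2\|_2)^2}\,,
\end{align*}
where the second equality uses $\|\bx_1-\bx_2\|_2^{\,2}=\|\bx_1\|_2^{\,2}+\|\bx_2\|_2^{\,2}-2\langle\bx_1,\bx_2\rangle$, and the inequality drops the non-negative correction $(\|\bx_1\|_2-\|\bx_2\|_2)^2$ and uses $\|\bx_1\|_2\|\bx_2\|_2\ge(\|\bx_1\|_2\wedge\|\bx_2\|_2)^2$. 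Combining the three displays gives the claim.

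There is no real obstacle here; the only subtle point is noticing the algebraic identity in the last display, which makes the minimum (rather than a product or average) appear naturally, and remembering that $\bU_{\bx}$ is to be taken with orthonormal columns so that the reduction to the rotation factor is tight.
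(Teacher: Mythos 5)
Your proof is correct and follows essentially the same route as the paper's: reduce $\|\bU_{\bx_1,\bx_2}-\bU_{\bx_1}\|_{\op}$ to $\|\bR_{\bx_1,\bx_2}-\id\|_{\op}$ using $\|\bU_{\bx_1}\|_{\op}\le 1$, observe that $\bR_{\bx_1,\bx_2}-\id$ acts trivially off $\spn(\bx_1,\bx_2)$ and isometrically-in-norm on $\spn(\bx_1,\bx_2)$ so that $\|\bR_{\bx_1,\bx_2}-\id\|_{\op}=\|\hat\bx_1-\hat\bx_2\|_2$, and then bound the latter. Your closing display actually fills in a step the paper leaves unjustified: the paper simply asserts $\|\hat\bx_1-\hat\bx_2\|_2\le\|\bx_1-\bx_2\|_2/(\|\bx_1\|_2\wedge\|\bx_2\|_2)$, whereas your algebraic identity
\begin{equation*}
\|\hat\bx_1-\hat\bx_2\|_2^{2}=\frac{\|\bx_1-\bx_2\|_2^{2}-(\|\bx_1\|_2-\|\bx_2\|_2)^2}{\|\bx_1\|_2\|\bx_2\|_2}
\end{equation*}
makes it transparent (and in fact yields the slightly sharper bound with $\sqrt{\|\bx_1\|_2\|\bx_2\|_2}$ in the denominator).
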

\begin{proof}
Note that $\|(\bR_{\bx_1,\bx_2}-\id)\bv\|_2$ is constant for any unit vector $\bv\in
{\rm span}(\bx_1,\bx_2)$, and vanishes if $\bv \perp
{\rm span}(\bx_1,\bx_2)$.
By using $\bv=\bv_1:=\bx_1/\|\bx_1\|_2$, 
we then have
\begin{align*}
\big\|\bU_{\bx_1,\bx_2}-\bU_{\bx_1}\|_{\op}& \le \|\bR_{\bx_1,\bx_2}-\id\|_{\op}\\
& = \|(\bR_{\bx_1,\bx_2}-\id)\bv_1\|_{2}\\
& = \Big\|\frac{\bx_1}{\|\bx_1\|}-\frac{\bx_2}{\|\bx_2\|}\Big\|_2\\
& \le \frac{\|\bx_1-\bx_2\|_2}{\|\bx_1\|_2\wedge \|\bx_2\|_2}\, .
\end{align*}
\end{proof}

\begin{definition}\label{def:Lip}
For $\Omega\subseteq \reals^d$, we
define the following Lipschitz constants:
\begin{align}
\Lip(\bF;\Omega) &:= \sup_{\bx_1\neq\bx_2\in\Omega}\frac{\|\bF(\bx_1)-\bF(\bx_2)\|}{\|\bx_1-\bx_2\|_2}\, ,\\
\Lip(\bD\bF;\Omega)& :=\sup_{\bx_1\neq\bx_2\in\Omega}
\frac{\|\bD\bF(\bx_1)-\bD\bF(\bx_2)\|_{\op}}{\|\bx_1-\bx_2\|_2}\, ,\label{eq:LipD}\\
\Lip(\nabla^2\bF;\Omega)& :=\sup_{\bx_1\neq\bx_2\in\Omega}
\max_{\ell\le n}
\frac{\|\nabla^2 F_{\ell}(\bx_1)-\nabla^2 F_{\ell}(\bx_2)\|_{\op}}{\|\bx_1-\bx_2\|_2}\, .
\end{align}
We also define the following Lipschitz constants for projections onto the tangent space:
\begin{align}
\Lip_{\perp}(\bD\bF;\Omega)& :=\sup_{\bx_1\neq\bx_2\in\Omega}
\frac{\|\bD\bF(\bx_1)\bU_{x_1}-\bD\bF(\bx_2)\bU_{\bx_1,\bx_2}\|_{\op}}{\|\bx_1-\bx_2\|_2}\, ,\label{eq:LipDF_perp}\\
\Lip_{\perp}(\nabla^2\bF;\Omega)& :=\sup_{\bx_1\neq\bx_2\in\Omega}
\max_{\ell\le n}
\frac{\|\bU_{\bx_1}^{\sT}\nabla^2 F_{\ell}(\bx_1)\bU_{\bx_1}-
\bU_{\bx_1,\bx_2}^{\sT}\nabla^2 F_{\ell}(\bx_2)\bU_{\bx_1,\bx_2}\|_{\op}}{\|\bx_1-\bx_2\|_2}\, .
\end{align}
\end{definition}

The next lemma is an immediate consequence of Lemma \ref{rmk:MaxHessian} and Lemma \ref{lemma:Udiff}.
\begin{lemma}\label{lemma:BoundLip}
Assume that $n\le d$.
Then there exist absolute constants $C_0$, $C_*$
such that the following hold with probability at least $1-C_*\exp(-d/C_*)$:
\begin{align}
\Lip(\bF;\Ball^d(1)) & \le C_0 \pmax \sqrt{d\log(\pmax)} \, ,
\label{eq:LipF}\\ 
\Lip(\bD\bF;\Ball^d(1))&  \le C_0\pmax^2\sqrt{d\log(\pmax)}
\, ,
\quad \Lip_{\perp}(\bD\bF;\Ball^d(\rho,1))  \le \frac{C_0}{\rho}\pmax^2\sqrt{d\log(\pmax)}\, ,
\label{eq:LipDf}\\ 
\Lip(\nabla^2\bF;\Ball^d(1))& \le C_0\pmax^{3}\sqrt{d\log(\pmax)}\, ,\quad
 \Lip_{\perp}(\nabla^2\bF;\Ball^d(\rho,1))
\le \frac{C_0}{\rho}\pmax^{3}\sqrt{d\log(\pmax)}\, .
\label{eq:LipDDf}
\end{align}
\end{lemma}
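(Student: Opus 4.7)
The plan is to reduce each Lipschitz bound to the operator-norm bounds on successive derivatives provided by Lemma~\ref{rmk:MaxHessian} (applied with $k_{\max}=3$), and then, for the tangent-space variants, to absorb the discrepancy between $\bU_{\bx_1}$ and $\bU_{\bx_1,\bx_2}$ using Lemma~\ref{lemma:Udiff}. I will work on the intersection of the high-probability events provided by Lemma~\ref{rmk:MaxHessian}, so that uniformly on $\Ball^d(1)$ I may use the deterministic bounds $\|\bD\bF(\bx)\|_{\op}\le M_1$, $\|\nabla^2\bF(\bx)\|_{\op}\le M_2$, $\|\nabla^3\bF(\bx)\|_{\op}\le M_3$ with $M_k \le C_1\pmax^k\sqrt{d\log\pmax}$. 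A union bound over the finitely many events keeps the failure probability below $C_*\exp(-d/C_*)$ after adjusting constants.

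For the three unrestricted Lipschitz constants, the bound follows from the fundamental theorem of calculus along the straight segment $[\bx_1,\bx_2]\subset\Ball^d(1)$, which is contained in the ball by convexity. Writing $\bx_t = \bx_2 + t(\bx_1-\bx_2)$, one has $\bF(\bx_1)-\bF(\bx_2) = \int_0^1 \bD\bF(\bx_t)(\bx_1-\bx_2)\,\dd t$, giving $\Lip(\bF;\Ball^d(1))\le M_1$. The same argument applied to $\bD\bF$, viewed with the third-order tensor $\nabla^2\bF(\bx_t)$ contracted against $\bx_1-\bx_2$ (whose resulting matrix has operator norm at most $M_2\|\bx_1-\bx_2\|$), yields $\Lip(\bD\bF;\Ball^d(1))\le M_2$. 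For the third inequality, applied slice-by-slice to each $\nabla^2 F_\ell$ and using $\|\nabla^3 F_\ell\|_{\op}\le \|\nabla^3\bF\|_{\op}\le M_3$, one obtains $\Lip(\nabla^2\bF;\Ball^d(1))\le M_3$. Substituting the bounds on the $M_k$ gives \eqref{eq:LipF} and the first inequalities of \eqref{eq:LipDf}, \eqref{eq:LipDDf}.

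For the perpendicular versions, I split into telescoping pieces and invoke Lemma~\ref{lemma:Udiff}, which under the hypothesis $\bx_1,\bx_2\in\Ball^d(\rho,1)$ gives $\|\bU_{\bx_1}-\bU_{\bx_1,\bx_2}\|_{\op}\le \|\bx_1-\bx_2\|/\rho$. For $\Lip_\perp(\bD\bF;\Ball^d(\rho,1))$, write
\[
\bD\bF(\bx_1)\bU_{\bx_1} - \bD\bF(\bx_2)\bU_{\bx_1,\bx_2} = (\bD\bF(\bx_1)-\bD\bF(\bx_2))\bU_{\bx_1} + \bD\bF(\bx_2)(\bU_{\bx_1}-\bU_{\bx_1,\bx_2}),
\]
whose operator norms are at most $M_2\|\bx_1-\bx_2\|$ and $M_1\|\bx_1-\bx_2\|/\rho$ respectively; both are absorbed into $(C_0/\rho)\pmax^2\sqrt{d\log\pmax}$ since $\rho\le 1$. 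For $\Lip_\perp(\nabla^2\bF;\Ball^d(\rho,1))$, I will use the three-term splitting
\[
\bU_{\bx_1}^{\sT}\nabla^2 F_\ell(\bx_1)\bU_{\bx_1} - \bU_{\bx_1,\bx_2}^{\sT}\nabla^2 F_\ell(\bx_2)\bU_{\bx_1,\bx_2} = \bigl(\bU_{\bx_1}-\bU_{\bx_1,\bx_2}\bigr)^{\sT}\nabla^2 F_\ell(\bx_1)\bU_{\bx_1} + \bU_{\bx_1,\bx_2}^{\sT}\bigl(\nabla^2 F_\ell(\bx_1)-\nabla^2 F_\ell(\bx_2)\bigr)\bU_{\bx_1} + \bU_{\bx_1,\bx_2}^{\sT}\nabla^2 F_\ell(\bx_2)\bigl(\bU_{\bx_1}-\bU_{\bx_1,\bx_2}\bigr),
\]
and bound the middle term by $M_3\|\bx_1-\bx_2\|$ and each outer term by $M_2\|\bx_1-\bx_2\|/\rho$, giving the final bound $(C_0/\rho)\pmax^3\sqrt{d\log\pmax}$.

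There is no real obstacle in this argument; the only care needed is in the passage from a tensor-operator-norm bound on $\nabla^k\bF$ to a matrix/vector-operator-norm bound on the one-step difference $\nabla^{k-1}\bF(\bx_1)-\nabla^{k-1}\bF(\bx_2)$, which is the standard estimate that contracting a tensor of operator norm $M$ with a vector $\bv$ produces a tensor of operator norm at most $M\|\bv\|$. Once this is observed, the lemma reduces to a mechanical combination of Lemmas~\ref{rmk:MaxHessian} and~\ref{lemma:Udiff}.
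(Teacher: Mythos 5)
Your proof is correct and follows essentially the same strategy as the paper: reduce the Lipschitz constants to the uniform derivative bounds of Lemma~\ref{rmk:MaxHessian} via the mean value inequality, and for the tangent-space variants combine this with Lemma~\ref{lemma:Udiff} through a telescoping decomposition of $\bU_{\bx_1}^{\sT}\nabla^2 F_\ell(\bx_1)\bU_{\bx_1} - \bU_{\bx_1,\bx_2}^{\sT}\nabla^2 F_\ell(\bx_2)\bU_{\bx_1,\bx_2}$ (and similarly for $\bD\bF$). You spell out the fundamental-theorem-of-calculus step and the tensor-contraction estimate more explicitly than the paper, which simply asserts the unrestricted bounds as immediate; otherwise the argument is the same.
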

\begin{proof}
The bounds on $\Lip(\bF;\Ball^d(1))$, $\Lip(\bD\bF;\Ball^d(1))$,
$\Lip(\nabla^2\bF;\Ball^d(1))$ follow immediately from Lemma \ref{rmk:MaxHessian}.
The bounds on $\Lip_{\perp}(\bD\bF;\Ball^d(1))$, $\Lip_{\perp}(\nabla^2\bF;\Ball^d(1))$
are proved similarly and we limit ourselves to the last one.
Writing $\bU_1:=\bU_{\bx_1}$, $\bU_2:=\bU_{\bx_1,\bx_2}$, and 
assuming without loss of generality $\|\bx_1\|_2\le \|\bx_2\|_2$,
 \begin{align}
 \max_{\ell\le n}\|\bU_{1}^{\sT}\nabla^2 &F_{\ell}(\bx_1)\bU_{1}-
\bU_{2}^{\sT}\nabla^2 F_{\ell}(\bx_2)\bU_{2}\|_{\op}\nonumber\\
 & \le \max_{\ell\le n}\big\|\nabla^2F_{\ell}(\bx_1)-\nabla^2F_{\ell}(\bx_2)\big\|_{\op}+
 3
 \max_{\ell\le n}\|\nabla^2F_{\ell}(\bx_1)\|_{\op}\big\|\bU_1-\bU_2\big\|_{\op}\nonumber\\
 &\le 
  \max_{\ell\le n}\big\|\nabla^2F_{\ell}(\bx_1)-\nabla^2F_{\ell}(\bx_2)\big\|_{\op}+
   3C\max_{\ell\le n}\frac{\|\nabla^2F_{\ell}(\bx_1)\|_{\op}}{\|\bx_1\|_2}
 \|\bx_1-\bx_2\|_2\nonumber\\
 & \le C_0\pmax^3 \sqrt{d\log(\pmax)}\|\bx_1-\bx_2\|_2
 +\frac{C_0}{\rho} \pmax^2 \sqrt{d\log(\pmax)}\|\bx_1-\bx_2\|_2
 \, . \nonumber
 \end{align}
\end{proof}


\section{Bounds on the Hessian}
In this section we derive the main lower bound on the eigenvalues of the Hessian, which will be used in the analysis of the Hessian descent algorithm.

Denote by $\bcH(\bx) :=  \nabla^2H(\bx)|_{\Ts_{\bx}}$ the restriction of the
Hessian on the tangent space. In a matrix representation, this is given by 
\begin{align}
\bcH(\bx)& = \sum_{\ell=1}^nF_{\ell}(\bx)\bU^{\sT}_{\bx}  \nabla^2F_{\ell}(\bx) \bU_{\bx}+
\bU_{\bx}^{\sT}\bD\bF(\bx)^{\sT}\bD\bF(\bx)\bU_{\bx}\label{eq:HessianPerp}\\
&=:\bcH_1(\bx)+\bcH_2(\bx)\, .
\end{align}

We omit the calculation for the next lemma, which is similar e.g. to \cite{ABC,montanari2023solving,subag2017complexity,SubagConcentrationPoly}. Recall that $\xi_i(t):=t^{p_i}$.
\begin{lemma}\label{lemma:HessianDistr}
Define for $k\in \naturals$, $q\in [0,1]$ define $\bS_{k}(q):= 
\diag\Big(\sqrt{\xi^{(k)}_i(q)}:\, i\le n\Big)$, where $\xi^{(k)}_i$ is the $k$-th derivative 
of $\xi_i$.
For a fixed $\bx\in\reals^d$ with $\|\bx\|^2_2=q$,  we have $\bF(\bx) = \bS_0(q)\, \bg$,
$\bD\bF(\bx)\bU_{\bx} = \bS_1(q)\, \bZ$, $\bU_{\bx}^{\sT}\nabla^2 F_{\ell}(\bx) \bU_{\bx}= \sqrt{\xi_{\ell}''(q)}\,\bW_{\ell}$.
Where $\bg, (\bW_{\ell})_{\ell\le n}, \bZ$ are mutually independent with
\begin{align}
\bg\sim\normal(0,\id_n)\, ,\;\;\; \bW_{\ell}\sim\GOE(d-1)\, ,\;\;\; \bZ\sim\GOE(n,d-1)\, .
\end{align}
As a consequence
\begin{align}
\bcH(\bx) & =  \|\bS_{0}(q)\bS_{2}(q)\bg\|_2\bW+\bZ^{\sT}\bS_{1}(q)^2\bZ\, ,\\
H(\bx)    &= \frac{1}{2}\|\bS_{0}(q)\bg\|_2^2\, ,
\end{align}
where
\begin{align}
(\bg,\bW,\bZ)\sim \normal(0,\id_n)\otimes \GOE(d-1)\otimes \GOE(n,d-1)\, .
\end{align}
\end{lemma}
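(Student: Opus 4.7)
The calculation is a routine Gaussian computation based on the covariance kernel $\E[F_\ell(\bx_1)F_\ell(\bx_2)] = \xi_\ell(\langle \bx_1,\bx_2\rangle)$ together with the independence of the $F_\ell$ across $\ell$ and the orthogonality $\bU_\bx^\top \bx = 0$. First, I would differentiate the covariance to obtain, at $\bx_1 = \bx_2 = \bx$ with $\|\bx\|_2^2 = q$,
\[
\Var F_\ell(\bx) = \xi_\ell(q), \quad \Cov(\partial_i F_\ell,\partial_j F_\ell) = \xi_\ell'(q)\delta_{ij} + \xi_\ell''(q)x_i x_j,
\]
and similarly for the second-order derivatives; the crucial structural feature is that every term in these covariances which is not already proportional to a Kronecker delta carries an explicit $x_i$ (or $x_j$, etc.) factor.

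Second, I would conjugate by $\bU_\bx$. Because $\bU_\bx^\top \bx = 0$, all the factors of $x_i$ collapse, leaving only the Kronecker-delta pieces. This yields directly $\bU_\bx^\top \nabla F_\ell(\bx)\sim\N(0,\xi_\ell'(q)\bI_{d-1})$ (so $\bD\bF(\bx)\bU_\bx = \bS_1(q)\bZ$ with $\bZ\sim\GOE(n,d-1)$ after stacking over independent $\ell$), and $\bU_\bx^\top \nabla^2 F_\ell(\bx)\bU_\bx$ has covariance $\xi_\ell''(q)(\delta_{ac}\delta_{bd} + \delta_{ad}\delta_{bc})$, i.e.\ is $\sqrt{\xi_\ell''(q)}\,\bW_\ell$ with $\bW_\ell\sim\GOE(d-1)$. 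Independence across $\ell$ is immediate. Within a single $\ell$, one checks that the cross-covariances of $F_\ell(\bx)$, $\bU_\bx^\top \nabla F_\ell(\bx)$ and $\bU_\bx^\top \nabla^2 F_\ell(\bx)\bU_\bx$ all vanish after projection by $\bU_\bx$, so by joint Gaussianity these three are mutually independent.

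Third, for the formula for $\bcH(\bx)$ I would use
\[
\nabla^2 H(\bx) = \bD\bF(\bx)^\top \bD\bF(\bx) + \sum_{\ell=1}^n F_\ell(\bx)\nabla^2 F_\ell(\bx),
\]
conjugate with $\bU_\bx$, and substitute the identities above: the first piece becomes $\bZ^\top \bS_1(q)^2 \bZ$, while the second becomes $\sum_\ell \sqrt{\xi_\ell(q)\xi_\ell''(q)}\,g_\ell \bW_\ell$. The one step requiring a remark is collapsing this sum to $\|\bS_0(q)\bS_2(q)\bg\|\,\bW$ for a single $\bW\sim\GOE(d-1)$ independent of $\bg$: conditional on $\bg$, the sum is a centered Gaussian matrix whose covariances are those of a GOE scaled by $\bigl(\sum_\ell \xi_\ell(q)\xi_\ell''(q)g_\ell^2\bigr)^{1/2} = \|\bS_0(q)\bS_2(q)\bg\|$, hence it is in distribution such a scaled GOE, and this rescaled GOE is independent of $\bg$ because the $\bW_\ell$ are. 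The identity $H(\bx) = \frac{1}{2}\|\bS_0(q)\bg\|_2^2$ then follows directly from $\bF(\bx) = \bS_0(q)\bg$.

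I do not anticipate a real obstacle. The only place to be watchful is the independence claim: one might naively worry that Euler-type homogeneity identities of the form $\langle \bx,\nabla^2 F_\ell(\bx)\bx\rangle = p_\ell(p_\ell-1)F_\ell(\bx)$ couple $F_\ell$ with $\nabla^2 F_\ell$, but precisely the coupling terms are the ones carrying $x_i$ factors and are annihilated by the tangent-space projection $\bU_\bx^\top (\cdot)\bU_\bx$.
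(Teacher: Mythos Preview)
Your proposal is correct and is precisely the standard covariance computation the paper has in mind; the paper in fact omits the proof entirely, merely citing \cite{ABC,montanari2023solving,subag2017complexity,SubagConcentrationPoly} where analogous calculations are carried out. Your outline (differentiate the kernel, project by $\bU_\bx$ to kill all $x_i$-factors, read off the GOE covariances and the vanishing cross-covariances, then collapse $\sum_\ell \sqrt{\xi_\ell(q)\xi_\ell''(q)}\,g_\ell\bW_\ell$ to a single scaled GOE by conditioning on $\bg$) is exactly that argument, including the correct handling of the independence subtlety you flag.
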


We denote by $\lambda_i(\bM)$ the
$i$-th smallest eigenvalue of a symmetric matrix $\bM$.
\begin{theorem}\label{thm:ImprovedHessian}
For $t\in (0,1)$, define  $\xi'_{\min}(t) := 
\min_{i\le n}\xi'_i(t)$ and $\xi''_{\min}(t) := \min_{i\le n}\xi''_i(t)$. 
Then for any $s\in \naturals$, there exist 
constants $C_*$, $C_{\#}$, and $d_0=d_0(s)$  such that the following holds. Define $A_{\#}, B_{\#}$ via
\begin{align}
    A_{\#} &= C_{\#} \left(1\vee \sqrt{\frac{\log\pmax}{\log d}}\vee  \sqrt{\frac{-\log
    [\xi'_{\min}(\rho^2)\wedge\rho]}{\log d}}\right)\, ,\label{eq:FirstCondEV}\\
     B_{\#} &= C_{\#} \left(\sqrt{\log\pmax}\vee  \sqrt{-\log(\xi''_{\min}(\rho^2)\rho^2\wedge 1)}  \vee \sqrt{\log d}\right)\label{eq:SecCondEV}\, .
\end{align}
If $n\le d-4 A_{\#}\, (d\log d)^{1/2}$,  $n\le d-B_{\#}\, d^{1/2}$, 
then the following holds for all $d\ge d_0(s)$:
\begin{align}
\prob\Big(\forall \bx\in\Ball^d(\rho,1) \;\;
\lambda_{\fix}(\bcH(\bx))\le - \frac{1}{10}\sqrt{(d-n)\xi_{\min}''(\rho^2) H(\bx)}\Big) \ge1-C_*e^{-d/C_*}\, .\label{eq:MainProbBound}
\end{align}
%
\end{theorem}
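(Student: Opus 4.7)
The plan is to combine the distributional decomposition of Lemma \ref{lemma:HessianDistr} with a kernel-restriction trick at each fixed $\bx$, and then lift the pointwise estimate to a uniform bound over $\Ball^d(\rho,1)$ by an $\eps$-net argument based on the Lipschitz estimates of Lemma \ref{lemma:BoundLip}.

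Fix $\bx\in \Ball^d(\rho,1)$ with $q=\|\bx\|_2^2\in[\rho^2,1]$. Lemma \ref{lemma:HessianDistr} gives $\bcH(\bx)= A\bW + \bK$ in law, where $A:= \|\bS_0(q)\bS_2(q)\bg\|_2$, $\bW\sim \GOE(d-1)$, $\bK:= \bZ^{\sT}\bS_1(q)^2\bZ\succeq 0$ has rank $n$ almost surely, and $\bg,\bW,\bZ$ are independent. The subspace $V:=\ker(\bZ)$ has dimension $m:= d-1-n$ a.s., and on $V$ we have $\bcH(\bx)|_V= A\cdot \bW|_V$. Since $\bW$ is independent of $\bZ$ and $\GOE$ is rotationally invariant, conditional on $\bZ$ the restriction $\bW|_V$ is distributed as $\GOE(m)$. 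By Courant--Fischer,
\[
\lambda_{\fix}(\bcH(\bx))\le \lambda_{\fix}(\bcH(\bx)|_V)= A\,\lambda_{\fix}(\bW'),\qquad \bW'\sim \GOE(m).
\]
Deterministically, $A^2 = \sum_i \xi_i(q)\xi_i''(q)g_i^2 \ge \xi''_{\min}(q)\sum_i\xi_i(q)g_i^2 = 2\,\xi''_{\min}(q) H(\bx)\ge 2\,\xi''_{\min}(\rho^2)H(\bx)$, since each $\xi''_i(q)=p_i(p_i-1)q^{p_i-2}$ is nondecreasing for $q\in[\rho^2,1]$.

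The key probabilistic input is the one-sided large-deviation bound for the $\fix$-th smallest eigenvalue of $\GOE(m)$: for fixed $\fix$ and $c\in(0,2)$,
\[
\prob\big(\lambda_{\fix}(\bW')\ge -c\sqrt{m}\big)\le C e^{-c'm^2}\,.
\]
The $m^2$ rate (as opposed to the $m$ rate from $1$-Lipschitz Gaussian concentration) reflects the Coulomb-gas cost of pushing the spectral edge inward (Ben Arous--Dembo--Guionnet). Combined with the lower bound on $A$, this gives the pointwise estimate
\[
\lambda_{\fix}(\bcH(\bx))\le -c\sqrt{2 m\,\xi''_{\min}(\rho^2) H(\bx)}
\]
with probability at least $1-Ce^{-c'(d-n)^2}$. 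The hypothesis $n\le d- B_{\#}\sqrt{d}$ with $B_{\#}\gtrsim \sqrt{\log d}$ gives $(d-n)^2\gtrsim d\log d$, leaving ample room to absorb an $\eps$-net of cardinality $(C/\eps)^d$ with $\log(1/\eps)$ polylogarithmic in the problem parameters.

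The final step is a union bound. Take an $\eps$-net $\calN$ of $\Ball^d(\rho,1)$ of cardinality at most $(C/\eps)^d$, and control oscillations via $\|\bcH(\bx_1)-\bcH(\bx_2)\|_{\op}\le \Lip(\bcH)\eps$ and $|H(\bx_1)-H(\bx_2)|\le \Lip(H)\eps$, both polynomial in $\pmax, d, \log\pmax, 1/\rho$ by Lemmas \ref{rmk:MaxHessian} and \ref{lemma:BoundLip} (together with Lemma \ref{lemma:Udiff} for the tangent-space transport). Choosing $\log(1/\eps)$ of order $A_{\#}^2 \log d$ ensures $(C/\eps)^d e^{-c'(d-n)^2}\le e^{-d/C_*}$, and the precise form of \eqref{eq:FirstCondEV}--\eqref{eq:SecCondEV} is dictated by simultaneously demanding that $\eps\cdot \Lip(\bcH)$ is absorbed by the threshold $\tfrac{1}{10}\sqrt{(d-n)\xi''_{\min}(\rho^2)H(\bx)}$. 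The main obstacle is the small-$H(\bx)$ regime, where this threshold degenerates; I would handle it separately by noting that in that regime $\bcH(\bx)= A\bW+\bK$ is close to the PSD matrix $\bK$, which has at least $m\ge \fix$ zero eigenvalues, so that $\lambda_{\fix}(\bcH(\bx))$ is automatically controlled by a quantity of the required order in $\sqrt{H(\bx)}$.
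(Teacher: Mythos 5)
Your pointwise argument at a fixed $\bx$ is essentially the same as the paper's (restrict to the kernel of $\bD\bF(\bx)|_{T_\bx}$, observe the restriction of the Hessian is a scaled $\GOE(m)$, and invoke the Ben Arous--Guionnet large-deviation bound for the $s$-th smallest eigenvalue; this is Lemma~\ref{lemma:LD-Symm-GOE} in the paper). The gap is entirely in the lift from pointwise to uniform, and it is not cosmetic.

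The direct net-plus-Weyl argument you describe cannot close the proof. You would bound $\lambda_s(\bcH(\bx))\le \lambda_s(\bcH(\bx^\eta))+\Lip(\bcH)\,\eps$, but the pointwise estimate at the net point $\bx^\eta$ reads $\lambda_s(\bcH(\bx^\eta))\le -c\sqrt{m\,\xi''_{\min}H(\bx^\eta)}$, whose magnitude tends to zero as $H(\bx^\eta)\to 0$. Since solutions $\bF=0$ exist on the sphere, the net contains points with $H(\bx^\eta)$ arbitrarily small, and at those points the Lipschitz slack $\Lip(\bcH)\,\eps$ swamps the negativity you are trying to propagate. Your proposed fix for this regime does not work: writing $\bcH(\bx)=A\bW+\bK$ with $\bK\succeq 0$ and using proximity to $\bK$ yields a \emph{lower} bound $\lambda_s(\bcH(\bx))\ge -A\,\|\bW\|_{\op}$; it does not show $\lambda_s$ is at least as negative as $-\tfrac{1}{10}\sqrt{(d-n)\xi''_{\min}(\rho^2)H(\bx)}$, which is what the theorem requires.

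What the paper does, and what your argument is missing, is a decoupling of the scale from the direction. Writing $\bcH_1(\bx)=\|\bF(\bx)\|_2\,\bM(\bx;\balpha_*(\bx))$ with $\balpha_*(\bx)\in\S^{n-1}$, one gets $\lambda_s(\bcH(\bx))\le \|\bF(\bx)\|_2\cdot L_{s,1}(\bx;\balpha_*(\bx))$, and the factor $\|\bF(\bx)\|_2=\sqrt{2H(\bx)}$ is pulled out \emph{before} the net argument. The uniform bound is then proved for $\sup_{\bx,\balpha}L_{s,1}(\bx;\balpha)$ over the compact set $\Ball^d(\rho,1)\times\S^{n-1}$, which does not degenerate as $H\to0$. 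A second idea you do not mention but which the paper needs for the $\bx$-direction of the net: $L_{s,1}$ is defined via the null space $V_1(\bx)$ of $\bD\bF(\bx)|_{T_\bx}$, which is discontinuous when the small singular values cluster. The paper passes to $V_m(\bx)$ with $m=\lfloor(d-n)/4\rfloor$, uses the monotonicity $L_{s,1}\le L_{m+s-1,m}$, and invokes the spectral-gap bound $\sigma_{n-m+1}(\bD\bF(\bx)|_{T_\bx})\gtrsim(\sqrt d-\sqrt n)$ (Lemma~\ref{lemma:SigmaDF}) together with Wedin's perturbation bound to make the restricted eigenvalues Lipschitz in $\bx$. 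Your Weyl-based approach sidesteps the subspace instability only by giving up the decoupling, which is the deeper obstacle; addressing both simultaneously is what forces the structure of the paper's proof.
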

\begin{proof}
For $\balpha\in\reals^n$, define the symmetric form
$\bM(\bx;\balpha)$ on $\Ts_{\bx}$ by 
\begin{align}
\bM(\bx;\balpha) & = \sum_{\ell=1}^n\alpha_{\ell}\nabla^2 F_{\ell}(\bx)\Big|_{\Ts_{\bx}}\, .
\end{align}
Let $\bv_1(\bx), \dots, \bv_{d-1}(\bx)$ be right singular vectors of $\bD\bF(\bx)|_{\Ts_{\bx}}$
corresponding to singular values $\sigma_1(\bD\bF(\bx)|_{\Ts_{\bx}})\ge \sigma_2
(\bD\bF(\bx)|_{\Ts_{\bx}})\ge \dots \ge \sigma_{d-1}(\bD\bF(\bx)|_{\Ts_{\bx}})$
(whereby we include zero singular values).
For $m\in \{1,\dots,n+1\}$, define
\begin{align}
V_m(\bx) := {\rm span}\big(\bv_{n-m+2}(\bx),\dots,\bv_{d-1}(\bx)\big)\, .\label{eq:VmDef}
\end{align}
Let $\nulls(\bD\bF(\bx)|_{\Ts_{\bx}})\subseteq \Ts_{\bx}$ denote the null space of $\bD\bF(\bx)|_{\Ts_{\bx}}$ (in the basis $\bU_{\bx}$, this is equivalent to the null space of
$\bD\bF(\bx)\bU_{\bx}$).
Since $\dim(\nulls(\bD\bF(\bx)|_{\Ts_{\bx}}))\ge d-1-n$, we have
\begin{align}
V_1(\bx) & \subseteq  \nulls(\bD\bF(\bx)|_{\Ts_{\bx}})\, ,\label{eq:NullInclusion}\\
V_1(\bx)&\subseteq V_2(\bx )\subseteq \dots\subseteq V_{n+1}(\bx)\, .
\end{align}
(with  $V_1(\bx)  = \nulls(\bD\bF(\bx)|_{\Ts_{\bx}})$ if $\bD\bF(\bx)|_{\Ts_{\bx}}$ has full 
row rank.)
We will define $d_m:= \dim(V_m(\bx)) = d-n+m-2$ and, for each $m$,
\begin{align}
L_{k,m}(\bx;\balpha) :=\lambda_k\big(\bM(\bx;\balpha)|_{V_m(\bx)}\big)\, .
\end{align}
where $\lambda_{\ell}(\bA)$ denotes the $\ell$-th smallest eigenvalue of matrix $\bA$.

Defining $\balpha_*(\bx):= \bF(\bx)/\|\bF(\bx)\|_2$, we then have 
\begin{align}
\lambda_{s}(\bcH(\bx))&\le \lambda_{s}(\bcH_1(\bx)|_{\nulls(\bD\bF(\bx)\bU_{\bx})})\\
&= \|\bF(\bx)\|_2\cdot \left.\lambda_s(\bM(\bx;\balpha_*(\bx))\right|_{\nulls(\bD\bF(\bx)\bU_{\bx})})\\
&\le \|\bF(\bx)\|_2\cdot \lambda_s(\bM(\bx;\balpha_*(\bx)|_{V_1(\bx)})\\
& =
\|\bF(\bx)\|_2\cdot L_{s,1}(\bx;\balpha_*(\bx))\, .
\end{align}
Further notice that, by the variational representation of eigenvalues, we have,
for each $k,m$,
\begin{align}
L_{k,m}(\bx;\balpha) \le L_{k+1,m+1}(\bx;\balpha)\, .
\end{align}
To see this, writing for simplicity $\bM=\bM(\bs;\balpha)$ and $V_m=V_m(\bx)$, 
note that
\begin{align*}
\lambda_{k+1}(\bM|_{V_{m+1}}) &= \max_{
\substack{U\subseteq V_{m+1} \\ \dim(U) = d_{m+1}-k}}
\lambda_1(\bM|_U) \\
& \ge  \max_{\substack{U\subseteq V_{m}\\ \dim(U) = d_{m}-(k-1)}}
\lambda_1(\bM|_U)\\
& = \lambda_{k}(\bM|_{V_{m}})\, ,
\end{align*}
where the inequality holds, since $V_m\subseteq V_{m+1}$ and $d_{m+1}-k = d_{m}-(k-1)$.

We then have
\begin{align}
\rP_{\sgood} &:= \prob\Big(\forall \bx\in\Ball^d(\rho,1) \;
\lambda_{s}(\bcH(\bx))\le -\frac{1}{10}\sqrt{ (d-n) \xi''_{\min}(\rho^2) H(\bx)}\Big)\nonumber\\
&\ge 
\prob\Big(
\max_{\bx\in \Ball^d(\rho,1)}\max_{\balpha\in \S^{n-1}}L_{s,1}(\bx;\balpha)
\le - \frac{1}{10}\sqrt{\xi''_{\min}(\rho^2)(d-n)}
\Big)\, .\label{eq:MainProbClaim}
\end{align}

Now note that, for fixed $\balpha,\bx$, with $\|\balpha\|_2=1$, we have
\begin{align}
\big(\bM(\bx;\balpha), \bD\bF(\bx)|_{\Ts_{\bx}}\big)&\ed \big(\nu(\balpha;\|\bx\|_2^2)\cdot\bW, \, 
\bS_1(\|\bx\|_2^2)\cdot \bG\big)\, ,\\
\nu(\balpha;q) & =\Big(\sum_{i=1}^n\alpha_i^2\xi''_i(q)\Big)^{1/2}\, ,
\end{align}
where $\bW\sim\GOE(d-1)$, $\bG\sim\GOE(n,d-1)$ are independent random matrices
(cf. Lemma \ref{lemma:HessianDistr}).
Hence $\bM(\bx;\balpha)|_{V_m(\bx)}\sim \nu(\balpha;q) \GOE(d_m)$, $q=\|\bx\|^2_2$.
Since $\|\balpha\|_2=1$,  we have $\nu(\balpha;q) \ge \sqrt{\xi_{\min}''(q)}$.
Therefore, by Lemma \ref{lemma:LD-Symm-GOE}, there exists an absolute constant $C_2>0$
such that, for any $k\le 3(d-n)/8$, 
\begin{align}
\prob\Big(L_{k,m}(\bx;\balpha)\ge -\frac15\sqrt{  \xi_{\min}''(\|\bx\|_2^2)(d-n)}\Big)\le 
e^{-C_2(d-n)^2}\, .
\end{align}

Define the following events, depending on absolute constants $C_0,C_1>0$,
which will be chosen below
\begin{align}
\cE_1(C_0)&:= \Big\{\big|L_{s,1}(\bx;\balpha_1)-L_{s,1}(\bx;\balpha_2)\big|\le C_0 d\pmax^{3}\|\balpha_1-\balpha_2\|_2
\;\; \forall \bx\in\Ball^d(1)\, \forall\balpha_1,\balpha_2\in\S^{n-1}\Big\}\, ,\label{eq:Event1}\\
\cE_{2,m}(C_1,\rho)&:= \Big\{L_{s,1}(\bx_1;\balpha)\le L_{m+s-1,m}(\bx_0;\balpha)+  \frac{C_1d^2\pmax^5}{\xi'_{\min}(\rho^2)^{1/2}\rho}\|\bx_0-\bx_1\|_2
\;\;\forall \bx_0,\bx_1\in\Ball^d(\rho,1)\, \forall\balpha\in\S^{n-1}\Big\}\, ,\label{eq:Event2}
\end{align}
where we recall that $\xi'_{\min}(t) :=\min_{i\le n}\xi'_i(t)$. 
Further, define
\begin{align}
\cE_{2}(C_1,\rho):=\cE_{2,m=\lfloor(d-n)/4\rfloor}(C_1,\rho).
\end{align}

Let $N^{d}(\eta)$ be an $\eta$-net in $\Ball^d(\rho,1)\times \S^{n-1}$.
Denoting by $(\bx^\eta,\balpha^{\eta})$ the projection of $(\bx,\balpha)\in\Ball^d(\rho,1)\times \S^{n-1}$
onto  $N^{d}(\eta)$, we have that, on $\cE_1(C_0)\cap \cE_2(C_1,\rho)$,
\begin{align}
L_{s,1}(\bx;\balpha)\le L_{m+s-1,m}(\bx^{\eta};\balpha^{\eta}) + \frac{(C_0+C_1)d^2\pmax^5\eta}{1\wedge [\rho\xi'_{\min}(\rho^2)^{1/2}]}\, .
\end{align}
Taking 
\begin{align}
\eta= \frac{1}{10(C_0+C_1)d^2\pmax^5}\xi''_{\min}(\rho^2)^{1/2}\{[\rho\xi'_{\min}(\rho^2)^{1/2}]\wedge 1\}\, ,
\end{align}
this implies 
$L_{s,1}(\bx;\balpha)\le L_{m+s-1,m}(\bx^{\eta};\balpha^{\eta}) +\xi''_{\min}(\rho^2)^{1/2}/10$.
Then continuing from Eq.~\eqref{eq:MainProbClaim}, with $m=\lfloor(d-n)/4\rfloor$ we obtain, for $C, C_2$ absolute constants and $s\le (d-n)/8$, 
\begin{align*}
\rP_{\sgood} &\ge 1- |N^{d}(\eta)|\max_{(\bx,\balpha)\in N^{d}(\eta)}
\prob\big(L_{m+s-1,m}(\bx;\balpha)\ge -\frac{1}{5}\sqrt{ \xi_{\min}''(\rho^2)(d-n)}\big)- 
\prob(\cE^c_1(C_0))-\prob(\cE^c_2(C_1,\rho))\\
& \ge 1- \left(\frac{C}{\eta}\right)^{2d} e^{-C_2(d-n)^2}- 
\prob(\cE^c_1(C_0))-\prob(\cE^c_2(C_1,\rho))\\
& \ge 1-e^{-\Lambda d}- 
C_* \, e^{-d/C_*}\, ,
\end{align*}
where the last step follows from
Lemma \ref{lemma:Event1} and Lemma \ref{lemma:Event2} (since $n\le d- 4A_{\#} (d\log d)^{1/2}$ 
holds by assumption),
and $\Lambda = C_2(d-n)^2/d-2 \log(C/\eta)$ is bounded by  
\begin{align}
\Lambda \ge C_2\frac{(d-n)^2}{d}- C' \left(\log d\vee \log \pmax \vee \log\frac{1}{\xi''_{\min}(\rho^2)\rho^2}\right)\, ,
\end{align}
for some absolute $C'$, where we used the fact that $\pmax\xi'_i(\rho^2)\ge \xi_i''(\rho^2)\rho^2$. The requirement that $n\le d-B_{\#}\, d^{1/2}$ with sufficiently large $C_{\#}$ guarantees that $\Lambda>c$ for some absolute $c>0$.
\end{proof}

\begin{lemma}\label{lemma:Event1}
Under the assumptions of Theorem \ref{thm:ImprovedHessian}, 
there exists an absolute  constant $C_*$ such that, defining $\cE_1$ as per
Eq.~\eqref{eq:Event1}, we have
\begin{align}
\prob(\cE^c_1(C_*)) \le  C_{*}\, e^{-d/C_*}\, .
\end{align}
\end{lemma}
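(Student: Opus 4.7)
The crucial observation is that in the definition of $L_{s,1}(\bx;\balpha) = \lambda_s(\bM(\bx;\balpha)|_{V_1(\bx)})$, the subspace $V_1(\bx)$ is determined entirely by $\bD\bF(\bx)$ and does \emph{not} depend on $\balpha$. So for fixed $\bx$, the map $\balpha\mapsto \bM(\bx;\balpha)|_{V_1(\bx)}$ is linear from $\reals^n$ into symmetric operators on the \emph{same} vector space $V_1(\bx)$, and the plan is to invoke Weyl's eigenvalue perturbation inequality followed by the operator-norm bound on $\nabla^2 \bF(\bx)$ established in Lemma \ref{rmk:MaxHessian}.

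Concretely, Weyl's inequality gives
\begin{align*}
\big|L_{s,1}(\bx;\balpha_1)-L_{s,1}(\bx;\balpha_2)\big|
\;\le\;
\big\|\bM(\bx;\balpha_1-\balpha_2)\big|_{V_1(\bx)}\big\|_{\op}
\;\le\;\big\|\bM(\bx;\balpha_1-\balpha_2)\big\|_{\op}\, .
\end{align*}
Writing $\bgamma = \balpha_1-\balpha_2$ and unwinding the definition, for any unit $\bv\in\Ts_{\bx}$ Cauchy-Schwarz in the index $\ell$ yields
\begin{align*}
\Big|\sum_{\ell=1}^n \gamma_\ell \<\bv,\nabla^2 F_\ell(\bx)\bv\>\Big|
\;\le\; \|\bgamma\|\cdot \sqrt{\sum_{\ell=1}^n \<\bv,\nabla^2 F_\ell(\bx)\bv\>^{2}}
\;\le\; \|\bgamma\|\cdot \|\nabla^2\bF(\bx)\|_{\op}\, ,
\end{align*}
where the last step identifies $\sqrt{\sum_\ell \<\bv,\nabla^2 F_\ell(\bx)\bv\>^2}$ with $\sup_{\|\bu\|=1}\<\nabla^2\bF(\bx),\bu\otimes\bv\otimes\bv\>$ which is at most the tensor operator norm $\|\nabla^2\bF(\bx)\|_{\op}$.

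Thus on the event of Lemma \ref{rmk:MaxHessian} (which has probability at least $1-C_*e^{-d/C_*}$ with $k_{\max}=2$), we have uniformly over $\bx\in\Ball^d(1)$,
\begin{align*}
\big|L_{s,1}(\bx;\balpha_1)-L_{s,1}(\bx;\balpha_2)\big|
\;\le\; C_1 \pmax^{2}\sqrt{d\log\pmax}\cdot\|\balpha_1-\balpha_2\|\, ,
\end{align*}
which is dominated by $C_*\,d\pmax^{3}\|\balpha_1-\balpha_2\|$ for an absolute $C_*$ (for instance using $\sqrt{d\log\pmax}\le d\pmax$). This gives the event $\cE_1(C_*)$ and hence the claim. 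No serious obstacle is expected here; the lemma is essentially a Weyl-plus-Cauchy-Schwarz reduction to the already established high-probability bound \eqref{eq:MaxDkF}, exploiting the fact that $V_1(\bx)$ is $\balpha$-independent so one does not need to control how eigenspaces move with $\balpha$.
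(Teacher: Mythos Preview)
Your proof is correct and essentially identical to the paper's: both apply Weyl's inequality and then bound $\|\bM(\bx;\balpha_1)-\bM(\bx;\balpha_2)\|_{\op}$ via Lemma~\ref{rmk:MaxHessian}. The only cosmetic difference is that the paper uses the triangle inequality together with $\|\bgamma\|_1\le\sqrt{n}\,\|\bgamma\|_2$ (picking up a harmless extra $\sqrt{n}$ factor), whereas your Cauchy--Schwarz step goes directly to the tensor operator norm.
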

\begin{proof}
By Weyl's inequality, 
\begin{align*}
\big|L_{s,1}(\bx;\balpha_1)-L_{s,1}(\bx;\balpha_2)\big|&\le 
\big\|\bM(\bx;\balpha_1)-\bM(\bx;\balpha_2)\big\|_{\op}\\
& \le \sqrt{n}\|\balpha_1-\balpha_2\|_2\cdot 
\max_{\ell\le n} \big\|\nabla^2 F_{\ell}(\bx)|_{\Ts_{\bx}}\big\|_{\op}\\
& \le C\sqrt{n}\|\balpha_1-\balpha_2\|_2 \pmax^2 \sqrt{d\log\pmax }\, ,
\end{align*}
where the last step holds with probability at least $1-C\exp(-d/C)$ by Lemma \ref{rmk:MaxHessian}.
\end{proof}

\begin{lemma}\label{lemma:Event2}
Under the assumptions of Theorem \ref{thm:ImprovedHessian}, 
there exist absolute constants $C_{\#}, C_*>0$ such that the following holds.
Define $\cE_2$ as per Eq.~\eqref{eq:Event2}, and 
let $A_{\#}$ be defined by
\begin{align}
    A_{\#} = C_{\#} \left(1\vee \sqrt{\frac{\log\pmax}{\log d}}\vee  
    \sqrt{\frac{-\log(\xi'_{\min}(\rho^2)\wedge \rho)}{\log d}} \right)\, .\label{eq:Asharp}
\end{align}
If $n\le d-A_{\#}\, (d\log d)^{1/2}$, $m= \lfloor (d-n)/4\rfloor $
then we have 
\begin{align}
\prob(\cE^c_{2,m}(C_*,\rho)) \le C_*\, e^{-d/C_*}\, .
\end{align}
\end{lemma}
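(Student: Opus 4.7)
The plan is to prove the claimed Lipschitz-type inequality by constructing, for any optimal $(m+s-1)$-dimensional subspace at $\bx_0$, a nearby $s$-dimensional subspace inside $V_1(\bx_1)$ whose Rayleigh quotients under $\bM(\bx_1;\balpha)$ are controlled. Fix $\bx_0,\bx_1\in\Ball^d(\rho,1)$ and $\balpha\in\S^{n-1}$, and choose $U^*\subseteq V_m(\bx_0)$ of dimension $m+s-1$ realizing the max in the variational characterization of $L_{m+s-1,m}(\bx_0;\balpha)$. The key structural observation is that $\bD\bF(\bx_0)|_{V_m(\bx_0)}$ has rank at most $m-1$, since its kernel contains $V_1(\bx_0)$, which has codimension exactly $m-1$ in $V_m(\bx_0)$. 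Consequently $\bD\bF(\bx_0)|_{U^*}$ also has rank at most $m-1$, so by dimension counting we can select an $s$-dimensional subspace $U^{**}\subseteq U^*\cap V_1(\bx_0)$ (on the high-probability event that $\bD\bF(\bx_0)|_{\Ts_{\bx_0}}$ has full row rank, so that $V_1$ coincides with the full null space).

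Transport $U^{**}$ via the rotation $\bR:=\bR_{\bx_0,\bx_1}$ to $\tilde U:=\bR U^{**}\subseteq\Ts_{\bx_1}$. For unit $\bu=\bU_{\bx_0}\bc\in U^{**}$, $\bR\bu=\bU_{\bx_0,\bx_1}\bc$, so by Definition \ref{def:Lip} and Lemma \ref{lemma:BoundLip}
\begin{equation*}
\|\bD\bF(\bx_1)\bR\bu\|\le\Lip_\perp(\bD\bF;\Ball^d(\rho,1))\|\bx_0-\bx_1\|_2=:\epsilon.
\end{equation*}
Decomposing $\Ts_{\bx_1}=V_1(\bx_1)\oplus W_1$ where $W_1$ is the row span of $\bD\bF(\bx_1)|_{\Ts_{\bx_1}}$, each $\bw=\bR\bu$ writes $\bw=\bw'+\bw_\perp$ with $\|\bw_\perp\|\le (\epsilon/\sigma_n(\bD\bF(\bx_1)|_{\Ts_{\bx_1}}))\|\bw\|$. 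When $\epsilon<\sigma_n(\bD\bF(\bx_1)|_{\Ts_{\bx_1}})$, the orthogonal projection $\bar U:=\Pi_{V_1(\bx_1)}\tilde U\subseteq V_1(\bx_1)$ is still $s$-dimensional.

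For any unit $\bw'\in\bar U$ with preimage $\bw\in\tilde U$ and $\bu:=\bR^{-1}(\bw/\|\bw\|)\in U^{**}$, expand
\begin{equation*}
\langle\bM(\bx_1;\balpha)\bw',\bw'\rangle=\langle\bM(\bx_1;\balpha)\bw,\bw\rangle-2\langle\bM(\bx_1;\balpha)\bw,\bw_\perp\rangle+\langle\bM(\bx_1;\balpha)\bw_\perp,\bw_\perp\rangle.
\end{equation*}
By Definition \ref{def:Lip} and $\|\balpha\|_1\le\sqrt n$, the first term is at most $\|\bw\|^2\bigl(\langle\bM(\bx_0;\balpha)\bu,\bu\rangle+\sqrt n\,\Lip_\perp(\nabla^2\bF;\Ball^d(\rho,1))\|\bx_0-\bx_1\|_2\bigr)$, and since $\bu\in U^{**}\subseteq U^*$ this is bounded by $\|\bw\|^2\bigl(L_{m+s-1,m}(\bx_0;\balpha)+\sqrt n\,\Lip_\perp(\nabla^2\bF;\Ball^d(\rho,1))\|\bx_0-\bx_1\|_2\bigr)$; the remaining two terms are of order $\|\bM(\bx_1;\balpha)\|_{\op}\cdot\epsilon/\sigma_n(\bD\bF(\bx_1)|_{\Ts_{\bx_1}})$. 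Inserting the operator-norm and Lipschitz bounds from Lemmas \ref{rmk:MaxHessian} and \ref{lemma:BoundLip} together with the uniform lower bound on $\sigma_n$ discussed below, the total Lipschitz constant is of order $d^2\pmax^5/(\xi'_{\min}(\rho^2)^{1/2}\rho)$; the lemma follows after taking $\max$ over $\bw'\in\bar U$ and using $L_{s,1}(\bx_1;\balpha)\le\max_{\bw'\in\bar U,\,\|\bw'\|=1}\langle\bM(\bx_1;\balpha)\bw',\bw'\rangle$.

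The technically hardest ingredient is the \emph{uniform} lower bound $\sigma_n(\bD\bF(\bx)|_{\Ts_\bx})\ge c_*\sqrt{\xi'_{\min}(\rho^2)}$ for all $\bx\in\Ball^d(\rho,1)$, with probability $\ge 1-C_*e^{-d/C_*}$. By Lemma \ref{lemma:HessianDistr}, pointwise $\bD\bF(\bx)\bU_\bx\ed\bS_1(\|\bx\|^2)\bZ$ with $\bZ\sim\GOE(n,d-1)$, giving $\sigma_n(\bD\bF(\bx)|_{\Ts_\bx})\ge\sqrt{\xi'_{\min}(\rho^2)}\,\sigma_n(\bZ)$; concentration for $\sigma_n(\bZ)$ around its Gordon mean of order $(d-n)/\sqrt d$ yields the pointwise tail of order $\exp(-c(d-n)^2/d)$. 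Combining this with an $\eta$-net on $\Ball^d(\rho,1)$ whose log-cardinality $O(d\log(d\pmax/\rho\xi'_{\min}^{1/2}))$ is tuned using the Lipschitz bound on $\bD\bF$ to transfer the singular-value bound off the net, one obtains the uniform bound precisely when $(d-n)^2/d$ dominates the net's log-cardinality, corresponding to $n\le d-A_\#(d\log d)^{1/2}$ with $A_\#$ as in Eq.\eqref{eq:Asharp}. For pairs $\bx_0,\bx_1$ with $\|\bx_0-\bx_1\|_2$ so large that $\epsilon\ge\sigma_n(\bD\bF(\bx_1)|_{\Ts_{\bx_1}})$, the conclusion is trivial by $|L_{s,1}(\bx_1;\balpha)-L_{m+s-1,m}(\bx_0;\balpha)|\le 2\|\bM(\bx_1;\balpha)\|_{\op}$ compared against the Lipschitz right-hand side.
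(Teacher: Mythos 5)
Your argument breaks down at the point you call the ``technically hardest ingredient'': the uniform lower bound on $\sigma_n(\bD\bF(\bx)|_{\Ts_{\bx}})$, the \emph{smallest non-zero} singular value of the projected Jacobian. Your projection $\bar U=\Pi_{V_1(\bx_1)}\tilde U$ genuinely needs this quantity: since $U^{**}\subseteq V_1(\bx_0)$ is transported and then projected onto the null space $V_1(\bx_1)$, the residual $\bw_\perp$ lies in the row space of $\bD\bF(\bx_1)|_{\Ts_{\bx_1}}$, and the bound $\|\bw_\perp\|\le\epsilon/\sigma_n$ is governed by the bottom of the positive spectrum. But under the hypothesis $d-n\asymp\sqrt{d\log d}$ this cannot be controlled uniformly. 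Pointwise, either by Gaussian concentration as you cite ($\exp(-c(d-n)^2/d)=d^{-cA^2}$, merely polynomial for $d-n=A\sqrt{d\log d}$) or by the sharper small-ball estimate of Corollary~\ref{cor:improvedRV} ($(C\eps)^{d-n}$), the tail exponent is at most of order $d-n\cdot\log(1/\eps)$. To survive a union bound over an $\eta$-net of $\Ball^d(\rho,1)$ of log-cardinality $\asymp d\log(1/\eta)\gtrsim d\log d$, you would need either $(d-n)^2/d\gtrsim d\log d$ (forcing $d-n\gtrsim d\sqrt{\log d}>d$, impossible) or $\eps\lesssim e^{-\sqrt{d\log d}}$, which would make $\sigma_n$ exponentially small and the resulting Lipschitz constant $\text{poly}(d,\pmax)/\sigma_n$ exponentially larger than the target $C_1 d^2\pmax^5/(\xi'_{\min}(\rho^2)^{1/2}\rho)$. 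Your claim that ``$(d-n)^2/d$ dominates the net's log-cardinality $\ldots$ corresponding to $n\le d-A_\#(d\log d)^{1/2}$'' is therefore an arithmetic error; the exponents are off by a factor of roughly $d$.

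The paper sidesteps this precisely by avoiding $\sigma_n$ altogether. Instead of constructing a subspace inside $V_1(\bx_0)$, transporting it, and projecting into the \emph{small} null space $V_1(\bx_1)$, the paper compares the null space $V_1(\bx_1)$ (basis $\bE_1$) against the \emph{complement} of the larger space $V_{0,m}=V_m(\bx_0)$ (basis $\obE_0$) through Wedin's theorem (Corollary~\ref{coro:Wedin}): $\|\bE_1^{\sT}\obE_0\|_{\op}\le\|\bD_0-\bD_1\|_{\op}/\sigma_{n-m+1}(\bD_0)$. The relevant gap is $\sigma_{n-m+1}$ with $m=\lfloor(d-n)/4\rfloor$, for which Lemma~\ref{lemma:SigmaDF} (relying on Lemma~\ref{lem:K-th-Eigengalue-Wishart}) gives a uniform constant-order lower bound with tail exponent $\asymp\ell(N-M+\ell)\asymp(d-n)^2\asymp A^2 d\log d$, which \emph{does} beat the net. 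This extra factor of $(d-n)$ in the exponent, obtained by pushing $m$ singular values below the threshold rather than one, is the crucial ingredient that makes the condition $n\le d-A_\#\sqrt{d\log d}$ sufficient; your construction has no analogue of this slack. A secondary, fixable issue: you describe $U^*$ of dimension $m+s-1$ as ``realizing the max'' in the variational characterization, but the correct statement is that it realizes the \emph{min} in the Courant--Fischer min-max, with the max being the inner Rayleigh quotient over $U^*$; the downstream use is consistent with the latter reading.
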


Before proving Lemma \ref{lemma:Event2}, it is useful to state and 
prove an auxiliary result.

\begin{lemma}\label{lemma:SigmaDF}
For $t\in (0,1)$, let  $\xi'_{\min}(t) := \min_{i\le n}\xi'_i(t)$. 
For $\bx\in\Ball^d(1)$,
denote by $\sigma_1(\bx)\ge \sigma_2(\bx)\ge\cdots\ge \sigma_{d-1}(\bx)$ the singular values of 
$\bD\bF(\bx)|_{\Ts_{\bx}}$ (including the vanishing ones). 

Then there exist absolute constants $C_*$, $C_{\#}$, $\eps_0>0$  such that the following holds,
with $A_{\#}$ defined as per Eq.~\eqref{eq:Asharp}.
For any $\eps\in (0,\eps_0]$, 
$\rho\in (0,1)$, 
$m\ge (d-n)\eps$  and $n\le d-A_{\#}\sqrt{\eps^{-1}d(\log d)}$, we have 
\begin{align}
\prob\Big(\forall \bx\in\Ball^d(\rho,1):\;
\sigma_{n-m+1}(\bx) \ge \frac{1}{C_*}\sqrt{\xi'_{\min}(\|\bx\|^2_2)} (\sqrt{d}-\sqrt{n})\Big) \ge
 1-C_* e^{-d/C_*}\,.\label{eq:SigmaMinDF}
\end{align}
\end{lemma}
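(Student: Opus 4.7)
My plan is to establish a pointwise large-deviation bound on $\sigma_{n-m+1}(\bD\bF(\bx)|_{\Ts_{\bx}})$ for fixed $\bx \in \Ball^d(\rho,1)$, and then upgrade it to a uniform bound via an $\eta$-net argument using the Lipschitz estimates of Lemma~\ref{lemma:BoundLip}, in a manner parallel to the proof of Theorem~\ref{thm:ImprovedHessian}.

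\textbf{Pointwise reduction.} For fixed $\bx$ with $q = \|\bx\|_2^2 \in [\rho^2, 1]$, Lemma~\ref{lemma:HessianDistr} gives $\bD\bF(\bx)\bU_{\bx} \ed \bS_1(q)\bZ$ with $\bZ \sim \GOE(n, d-1)$ and $\bS_1(q) = \diag(\sqrt{\xi'_i(q)})$. The multiplicative Weyl inequality then yields
\begin{align*}
\sigma_{n-m+1}\big(\bD\bF(\bx)|_{\Ts_{\bx}}\big) \ge \sigma_{\min}(\bS_1(q))\,\sigma_{n-m+1}(\bZ) = \sqrt{\xi'_{\min}(q)}\, \sigma_{n-m+1}(\bZ),
\end{align*}
reducing the problem to a lower-tail bound on the $(n-m+1)$-th singular value of a rectangular Gaussian matrix. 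The key probabilistic input is the large-deviation estimate
\begin{align*}
\prob\big(\sigma_{n-m+1}(\bZ) \le c_0(\sqrt{d}-\sqrt{n})\big) \le \exp\big(-c_1 \eps (d-n)^2\big),
\end{align*}
valid for $m \ge \eps(d-n)$ and $c_0$ a small absolute constant. This is the Laguerre analog of Lemma~\ref{lemma:LD-Symm-GOE}: the event above forces $m$ singular values to fall below the threshold simultaneously; each individual push costs rate $\sim (d-n)$ by Davidson--Szarek, and the collective event costs $\sim m(d-n) = \eps(d-n)^2$. This can be derived either by interlacing applied to many $(n-m+1)\times(d-1)$ row submatrices of $\bZ$, or by invoking LDP results for the empirical spectral measure of Laguerre ensembles.

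\textbf{Uniform extension via a net.} By Lemma~\ref{lemma:BoundLip}, $\Lip_{\perp}(\bD\bF;\Ball^d(\rho,1)) \le L := C_0\pmax^2\sqrt{d\log\pmax}/\rho$. Since the $k$-th singular value is $1$-Lipschitz in the operator norm, for $\bx_1,\bx_2 \in \Ball^d(\rho,1)$,
\begin{align*}
\big|\sigma_{n-m+1}(\bD\bF(\bx_1)\bU_{\bx_1}) - \sigma_{n-m+1}(\bD\bF(\bx_2)\bU_{\bx_1,\bx_2})\big| \le L\, \|\bx_1-\bx_2\|_2.
\end{align*}
Take an $\eta$-net $\calN$ of $\Ball^d(\rho,1)$ with $|\calN| \le (C/\eta)^d$, and choose $\eta$ so that (i) $L\eta \le (c_0/4)\sqrt{\xi'_{\min}(\rho^2)}(\sqrt{d}-\sqrt{n})$ and (ii) $\pmax^2\eta \le \xi'_{\min}(\rho^2)/4$. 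Condition (ii) ensures that $\xi'_{\min}(\|\bx\|_2^2)$ and $\xi'_{\min}(\|\bx_0\|_2^2)$ differ by at most a factor of $2$ within each net cell (using $|\xi'_i(q)-\xi'_i(q_0)| \le \pmax^2|q-q_0|$ and monotonicity of $\xi'_{\min}$), so the pointwise bound at a net point $\bx_0$ transfers, up to an absolute constant, to the target bound in terms of $\xi'_{\min}(\|\bx\|_2^2)$ at $\bx$. Both conditions give $\log(1/\eta) \le C'\big(\log d + \log\pmax + \log(1/\rho) + \log(1/\xi'_{\min}(\rho^2))\big)$.

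\textbf{Union bound and verification.} Union-bounding the pointwise LD estimate over $\calN$ gives a complement probability at most $(C/\eta)^d\exp(-c_1\eps(d-n)^2)$, which is $\le C_*e^{-d/C_*}$ provided $c_1\eps(d-n)^2 \ge d\log(C/\eta) + d/C_*$, i.e. $(d-n)^2/d \ge C''\eps^{-1}\big(\log d + \log\pmax + \log(1/\rho) + \log(1/\xi'_{\min}(\rho^2))\big)$. The hypothesis $n \le d - A_{\#}\sqrt{\eps^{-1}d\log d}$ with $A_{\#}$ as in Eq.~\eqref{eq:Asharp} is designed to give exactly this inequality, for $C_{\#}$ chosen large enough depending on $c_1$ and $C''$. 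The principal obstacle is obtaining the LD bound at rate $\eps(d-n)^2$: the naive rate $(d-n)^2/d$ coming from Gaussian concentration of $\sigma_{n-m+1}$ as a Lipschitz functional of $\bZ$ is too weak to absorb the net entropy $d\log(1/\eta) \sim d\log d$, and the amplification by a factor of $\eps d$ arises only because we must move $m = \Theta(\eps(d-n))$ eigenvalues in concert—mirroring the role of the bulk condition $k \le 3(d-n)/8$ in Lemma~\ref{lemma:LD-Symm-GOE}.
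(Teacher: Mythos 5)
Your overall structure — pointwise reduction to a Wishart singular value via Lemma~\ref{lemma:HessianDistr}, a lower-tail large-deviation estimate at rate $\eps(d-n)^2$, then a union bound over an $\eta$-net together with the Lipschitz estimates of Lemma~\ref{lemma:BoundLip} — is exactly the architecture of the paper's proof, and your bookkeeping for $\eta$, the net entropy $d\log(1/\eta)$, and the role of $A_{\#}$ in Eq.~\eqref{eq:Asharp} is correct and matches the paper. You have also correctly identified that Lipschitz concentration of $\sigma_{n-m+1}(\bZ)$ is too weak, and that the amplification must come from pushing $\Theta(\eps(d-n))$ eigenvalues at once.

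However, there is a genuine gap at the heart of the argument: the claimed large-deviation bound
\[
\prob\bigl(\sigma_{n-m+1}(\bZ)\le c_0(\sqrt{d}-\sqrt{n})\bigr)\le\exp\bigl(-c_1\eps(d-n)^2\bigr)
\]
is asserted but not proved, and neither of your two proposed routes delivers it. The interlacing route does not work as described: Cauchy interlacing applied to an $(n-m+1)$-row (or column) submatrix $\bZ_S$ gives $\sigma_{n-m+1}(\bZ)\ge\sigma_{\min}(\bZ_S)$, i.e.\ a \emph{lower} bound. Consequently $\{\sigma_{n-m+1}(\bZ)\le t\}$ is contained in the \emph{intersection} over $S$ of $\{\sigma_{\min}(\bZ_S)\le t\}$; bounding by a single $S$ yields only the exponent $d-n+m$, and union bounding over $S$ goes in the wrong direction. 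The factor of $m\approx\eps(d-n)$ in the exponent, which is precisely what makes the result usable for small $\eps$, is not obtained this way. The LDP route also fails quantitatively: the Hiai–Petz/Wishart analog of the Ben Arous–Guionnet LDP is a rate-$n^2$ LDP for \emph{macroscopic} deviations of the empirical measure, whereas here only a vanishing fraction $\eps(d-n)/n$ of the $n$ eigenvalues need to be pushed below a threshold near the soft left edge $(\sqrt{d}-\sqrt{n})^2$; this is below the LDP's resolution. (This is different from the application of Lemma~\ref{lemma:LD-Symm-GOE} in Theorem~\ref{thm:ImprovedHessian}, where the GOE matrix has size $\Theta(d-n)$ and the index $k$ is a \emph{constant} fraction of that size, so the rate-$N^2$ LDP applies.)

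The paper instead proves the needed estimate directly in Lemma~\ref{lem:K-th-Eigengalue-Wishart}: starting from the exact joint density of the ordered Wishart eigenvalues, it bounds the marginal density of the $\ell$ lowest eigenvalues, integrates, and evaluates the resulting expression via the Selberg integral, followed by Stirling-type estimates in Lemma~\ref{lemma:Constants}. The result is $\prob(\lambda_\ell(\bZ\bZ^\top)\le\Delta^2(\sqrt{N}-\sqrt{M-1})^2)\le(C_*\Delta)^{\ell(N-M+\ell)}$, which for $\ell=\lfloor\eps(d-n)\rfloor$ gives exactly the $\eps(d-n)^2$ rate you need. That exact-density computation, not interlacing or a spectral-measure LDP, is the missing piece of your proof. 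One more small point: the paper first reduces by interlacing to the regime $n\ge d/2$ (so that the Wishart lemma's hypothesis $c_0N<M$ is met with an absolute $c_0$); if you intend to apply a Wishart estimate whose constants depend on the aspect ratio, you need an analogous normalization.
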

\begin{proof}
If $\ba_i\in \reals^{(d-1)}$, $i\le n+1$, define $Q_{ij}:=\<\ba_i,\ba_j\>$,
and $\bQ_\ell:= (Q_{ij})_{i,j\le \ell}\in \reals^{\ell\times \ell}$, then, by the variational principle,
for any $i\le n$
\begin{align}
\lambda_i(\bQ_{n+1})\le \lambda_{i}(\bQ_n)\, .
\end{align}
Note that $\sigma_{n-m+1}(\bx)^2=\lambda_{m-1}(\bQ_n)$
for $\bQ_n:=\bD\bF(\bx)\bU_{\bx}(\bD\bF(\bx)\bU_{\bx})^{\sT}$. In particular, removing
one row of $\bD\bF(\bx)$ cannot increase these eigenvalues. 
Hence, without loss of generality we can assume throughout the proof that
$n\ge d/2$. Also note that since $m$ is an integer, we may assume that $\eps(d-n)\ge1$.

First consider a fixed point $\bx\in\Ball^d(1)$. 
Recall that, by Lemma \ref{lemma:HessianDistr}, 
$\bD\bF(\bx)\bU_{\bx} = \bS^{(1)}(\|\bx\|_2^2)\, \bZ$, with $\bZ\sim\GOE(n,d-1)$.
Denoting the eigenvalues of $\bZ\bZ^{\sT}$ by  $\lambda_1(\bZ\bZ^{\sT})\le \lambda_2(\bZ\bZ^{\sT})\le
 \dots\le \lambda_{n}(\bZ\bZ^{\sT})$, we have that, for any $1\le m\le n$,
 \begin{equation}\label{eq:lambda_sigma}
 \xi_{\min}'(\|\bx\|^2)\lambda_{k(\eps)}(\bZ\bZ^{\sT})\le\xi_{\min}'(\|\bx\|^2)\lambda_{m}(\bZ\bZ^{\sT})\le \sigma_{n-m+1}(\bx)^2\,,
 \end{equation}
where we define
 $k(\eps)=\lfloor\eps(d-n)\rfloor$.
 
By Lemma \ref{lem:K-th-Eigengalue-Wishart}, with  
$N=d-1$, $M=n$, $\ell=k(\eps)$,
there exists constants $\eps_0>0$, $\Delta_*>0$, such that for all $\eps\in (0,\eps_0)$,
\begin{align}
\prob\Big(\lambda_{k(\eps)}(\bZ\bZ^{\sT}) \le 4\Delta^2_*(\sqrt{d}-\sqrt{n})^2\Big) 
\le  e^{-\eps(d-n)^2}\, ,
\end{align}
where we used that $\sqrt{d}-\sqrt{n}\le \sqrt{d-1}-\sqrt{n-1}$.
Therefore, for  $n\le d-A d^{1/2}(\log d)^{1/2}$, we have
\begin{align}
\prob\Big(\lambda_{k(\eps)}(\bZ\bZ^{\sT}) \le  4\Delta^2_*(\sqrt{d}-\sqrt{n})^2\Big) 
\le \exp\Big(-A^2\eps (d\log d)\Big)\,.\label{eq:ClaimDF}
\end{align}

Let $N^d(\eta)$ be an $\eta$-net in $\Ball^d(\rho,1)$. For $\bx\in \Ball^d(\rho,1)$, let
$\bx^{\eta}= \argmin_{\by\in N^d(\eta)}\|\bx-\by\|_2$ be its projection onto the net.
Then, using $\sqrt{d}-\sqrt{n}\ge 1/(2\sqrt{d})$,
\begin{align*}
\prob\Big(\exists\bx\in\Ball^d(\rho,1):\;\big|\sigma_{n-m+1}(\bx)-\sigma_{n-m+1}(\bx^{\eta})\big|&
\ge \Delta_*\sqrt{\xi'_{\min}(\|\bx\|^2_2)} (\sqrt{d}-\sqrt{n})\Big) \\
&\le \prob\Big(\Lip_{\perp}(\bD\bF;\Ball^d(1)) \ge \frac{\Delta_*}{2\eta\sqrt{d}}
\sqrt{\xi'_{\min}(\rho^2)}\Big)\\
&\le C_*e^{-d/C_*}\, ,
\end{align*}
where the last inequality holds by Lemma \ref{lemma:BoundLip} for
 $\eta< C\, d^{-1}\pmax^{-3}\rho\sqrt{\xi'_{\min}(\rho^2)}$
with $C$ a sufficiently small constant, and $C_*>0$ an absolute constant. 
For any $m\geq k(\eps)$ we therefore get, using \eqref{eq:lambda_sigma}  and \eqref{eq:ClaimDF},
\begin{align*}
\prob\Big(\exists\bx\in\Ball^d(\rho,1):\; &
\sigma_{n-m+1}(\bx) \le \Delta_*\sqrt{\xi'_{\min}(\|\bx\|_2^2)} (\sqrt{d}-\sqrt{n})\Big) 
\\ 
&\le \prob\Big(\exists\bx\in N^d(\eta):\;
\sigma_{n-m+1}(\bx) \le 2\Delta_*\sqrt{\xi^{(1)}_{\min}(\|\bx\|_2^2)} (\sqrt{d}-\sqrt{n})\Big) +
 C_*e^{-d/C_*}\\
 & \le | N^{d}(\eta)|\exp\Big(-A^2\eps(d\log d)\Big) +
 C_*e^{-d/C_*}\\
 & \le \exp\Big(d\log\frac{10}{\eta}-A^2\eps d\log d\Big) + 
 C_*e^{-d/C_*}\, .
\end{align*}
%
We then choose $\eta= C\, d^{-1}\pmax^{-3}\rho\sqrt{\xi'_{\min}(\rho^2)}$ and $A=A_{\#}/\sqrt{\eps}$,
with $C_{\#}$ a sufficiently large absolute constant
to conclude the proof. 
\end{proof}
We are now in position to prove Lemma \ref{lemma:Event2}.
\begin{proof}[Proof of Lemma \ref{lemma:Event2}]
Denote by $\cG_1$ and $\cG_2$ the high probability events of 
Lemmas \ref{rmk:MaxHessian} and  \ref{lemma:BoundLip}.
Further, let $\cG_3$ be the high probability event of 
Lemma \ref{lemma:SigmaDF} with $m= \lfloor(d-n)/4\rfloor$ and some fixed $\eps<1/4$. 
We suppose that $n\le d-A_{\#}\sqrt{d(\log d)}$. Since $\eps$ is fixed, by increasing $C_{\#}$ if needed, we may assume that the conclusion of Lemma \ref{lemma:SigmaDF} holds. Namely, that $\cG_3$ holds with probability at least $1-C_* e^{-d/C_*}$. To complete the proof of the lemma,
we will prove that the constant $c_2>0$ can be chosen so that 
$\cE_2(c_2,\rho)\supseteq  \cG_1\cap \cG_2\cap \cG_3$.  Hence, we hereafter assume
that events  $\cG_1$, $\cG_2$, $\cG_3$ hold.

In the following we will identify
$\Ts_{\bx}$ with $V^{\perp}(\bx)$ (the orthogonal complement of $\bx$) in the obvious 
way, and therefore identify $\Ts_{\bx_1}$, $\Ts_{\bx_2}$ when $\bx_1=\alpha\bx_2$
for $\alpha \in \reals\setminus\{0\}$. We also recall the rotation 
$\bR_{\bx_1,\bx_2}\in\reals^{d\times d}$
defined Section \ref{sec:Preliminaries}, which maps $\Ts_{\bx_1}$ to $\Ts_{\bx_2}$.
We will use the same notation for the restriction $\bR_{\bx_1,\bx_2}:\Ts_{\bx_1}\to\Ts_{\bx_2}$, which
is the parallel transport on the sphere, along the
geodesic connecting the two points $\bx_1/\|\bx_1\|$ and $\bx_2/\|\bx_2\|$.
Note that $\bR_{\bx_1,\bx_2}= \bR_{\bx_2,\bx_1}^{-1} = \bR_{\bx_2,\bx_1}^{\sT}$

Fix two points $\bx_0,\bx_1\in \Ball^d(1)$, and define the following linear
operators:
\begin{align}
\bD_0 &:= \bD\bF(\bx_0)\big|_{\Ts_{\bx_0}}\, ,\;\;\;\; \bD_1:= \bD\bF(\bx_1)\big|_{\Ts_{\bx_1}}\bR_{\bx_0,\bx_1}\, ,
\label{eq:Ddef}\\
\bM_0 & :=\bM(\bx_0;\balpha)\, ,\;\;\;\;  \bM_1 :=\bR_{\bx_0,\bx_1}^{\sT}
\bM(\bx_1;\balpha)\bR_{\bx_0,\bx_1}\, .\label{eq:Mdef}
\end{align}
 We view $\bD_0,\bD_1$ as linear operators $\bD_0,\bD_1:\Ts_{\bx_0}\to\reals^n$,
 and $\bM_0,\bM_1$ as symmetric forms $\bM_0,\bM_1:\Ts_{\bx_0}\times \Ts_{\bx_0}\to\reals$. We note that below restrictions $M_i|_V$ to a linear subspace $V\subset \Ts_{\bx_0}$ are defined  as $V\times V\to\R$ symmetric forms.

Recall definition \eqref{eq:VmDef}, which we repeat here for the reader's convenience
\begin{align} 
V_m(\bx):={\rm span}\big(\bv_{n-m+2}(\bx),\dots,\bv_{d-1}(\bx)\big)\subseteq \Ts_{\bx}\, ,
\end{align}
where $\bv_{1}(\bx),\dots,\bv_{d-1}(\bx)$ denote the right singular vectors of $\bD\bF(\bx)|_{\Ts_{\bx}}$
 (corresponding to the singular values in decreasing order).
 In particular, letting $\bv_{a,1},\dots,\bv_{a,d-1}$ denote the right singular vectors of $\bD_a$, 
 we  introduce the shorthands
 \begin{align}
 V_{0,m} &:= V_m(\bx_0)= {\rm span}\big(\bv_{0,n-m+2},\dots,\bv_{0,d-1}\big)\, ,\\
V_{1,s} & := {\rm span}\big(\bv_{1,n-s+2},\dots,\bv_{1,d-1}\big)\, .
 \end{align}
 Note that we have $V_{1,s} = \bR_{\bx_1,\bx_0}V_{s}(\bx_1)$ and therefore
 we obtain the following identities:
 \begin{align}
 L_{k,m}(\bx_0;\balpha) & = \lambda_k(\bM_0|_{V_{0,m}}) \, ,\;\;\;
 L_{s,1}(\bx_1;\balpha) =  \lambda_s(\bM_1|_{V_{1,1}})\, .
 \end{align}
 Recall that $d_m:=\dim(V_{0,m}) = d-n+m-2$.
 Let $\bE_0\in\reals^{d\times d_m}$ be a matrix whose columns form an orthonormal basis of 
 $V_{0,m}$, and $\bE_1\in \reals^{d\times d_1}$ a  matrix whose columns 
 form an orthonormal basis of $V_{1,1}$. Then we can rewrite the above formulas as
 \begin{align}
 L_{k,m}(\bx_0;\balpha) & = \lambda_k(\bE_0^{\sT}\bM_0\bE_0) \, ,\;\;\;
 L_{s,1}(\bx_1;\balpha) =  \lambda_s(\bE_1^{\sT}\bM_1\bE_1)\, .\label{eq:LmL1}
 \end{align}
By the variational representation of eigenvalues, for $m>s$, $k\ge s$,
 we have
 \begin{align}
 L_{k,m}(\bx_0;\balpha) & = \max_{\bQ\in \cO(d_m,d_m-k+s)}\lambda_s\big(\bQ^{\sT}\bE_0^{\sT}\bM_0\bE_0\bQ\big) \,,
 \label{eq:LMVar}
 \end{align}
 where the maximization is over the Stiefel manifold
 $\cO(d_m,d_m-k+s)\subseteq\reals^{d_m\times (d_m-k+s)}$ of  matrices whose columns form an orthonormal frame.
 
 Let $\obE_0\in\reals^{d\times (d-1-d_m)}$ be such that $[\bE_0|\obE_0]\in\reals^{d\times(d-1)}$
 is an orthonormal basis of $\Ts_{\bx_0}$. Note that $\bE_0\bE_0^{\sT}$ and $\obE_0\obE_0^{\sT}$ are the projections to the column space of $\bE_0$ and $\obE_0$, respectively. We then have
 \begin{align*}
 \bE_1^{\sT}\bM_1\bE_1 =\; & \bE_1^{\sT}\bE_0\bE_0^{\sT}\bM_1\bE_0\bE_0^{\sT}\bE_1 
 + \bE_1^{\sT}\obE_0\obE_0^{\sT}\bM_1\bE_0\bE_0^{\sT}\bE_1\\
&+\bE_1^{\sT}\bE_0\bE_0^{\sT}\bM_1\obE_0\obE_0^{\sT}\bE_1+
\bE_1^{\sT}\obE_0\obE_0^{\sT}\bM_1\obE_0\obE_0^{\sT}\bE_1\\
 \preceq\; &\bA^{\sT}\bE_0^{\sT}\bM_1\bE_0 \bA+
2\|\bE_1^{\sT}\obE_0\|_{\op}\|\bM_1\|_{\op}\cdot \id + \|\bE_1^{\sT}\obE_0\|^2_{\op}\|\bM_1\|_{\op}
\cdot\id\\
 \preceq\; &\bA^{\sT}\bE_0^{\sT}\bM_1\bE_0 \bA+
3\|\bE_1^{\sT}\obE_0\|_{\op}\|\bM_1\|_{\op}\cdot \id \, ,
\end{align*}
where we defined $\bA := \bE_0^{\sT}\bE_1\in\reals^{d_m\times d_1}$, $\|\bA\|_{\op}\le 1$ and we write $\bB_1\preceq\bB_2$ if $\bB_2-\bB_1$ is non-negative definite.
Continuing from the previous sequence of inequalities,
 \begin{align}
 \bE_1^{\sT}\bM_1\bE_1 \preceq \bA^{\sT}\bE_0^{\sT}\bM_0\bE_0 \bA+
 \|\bM_0-\bM_1\|_{\op}\cdot \id+
3\|\bE_1^{\sT}\obE_0\|_{\op}\|\bM_1\|_{\op}\cdot \id\,.
\label{eq:E1ME1}
\end{align}
%
%
%

Let $\bA = \bQ_*\bS\bO^{\sT}$, $\bQ_*\in\cO(d_m,d_1)$,
 $\bO\in\cO(d_1,d_1)$, $\bS\succeq \bfzero$ diagonal,
 be the reduced singular value 
 decomposition of $\bA$ (if ${\rm rank}(\bA)<d_1$, orthogonal columns
 are added to $\bQ_*$, $\bO$), and define $\bA_0: =  \bQ_*\bO^{\sT}$.
 Noting that $\|\bS\|_{\op}\le 1$ we have
 \begin{align*}
 \|\bA-\bA_0\|_{\op} &= \|\id-\bS\|_{\op}
 \le \|\id-\bS^2\|_{\op} \\
 &= \|\id-\bA^{\sT}\bA\|_{\op} = \|\id-\bE_1^{\sT}\bE_0\bE_0^{\sT}\bE_1\|_{\op} \\
 &= \|\obE_0^{\sT}\bE_1\|^2_{\op}\, ,
 \end{align*}
where for the last equality we used that $\id=\bE_0\bE_0^{\sT}+\obE_0\obE_0^{\sT}+\bx_0\bx_0^{\sT}/\|\bx_0\|^2$ and $\bx_0^{\sT}E_1=\bzero$.
Hence
 \begin{align*}
  \bA^{\sT}\bE_0^{\sT}\bM_0\bE_0 \bA
  \le \bA_0^{\sT}\bE_0^{\sT}\bM_0\bE_0 \bA_0+
  3 \|\obE_0^{\sT}\bE_1\|^2_{\op}\|\bM_0\|_{\op}\cdot\id\, .
 \end{align*}
Substituting this in Eq.~\eqref{eq:E1ME1} and using
 $\|\bE_0^{\sT}\bE_1\|^2_{\op}\le 1$, we get
 \begin{align*}
 \bE_1^{\sT}\bM_1\bE_1 \preceq \bO\bQ_{*}^{\sT}\bE_0^{\sT}\bM_0\bE_0 \bQ_*\bO^{\sT}+
 \|\bM_0-\bM_1\|_{\op}\cdot \id+
3\|\bE_1^{\sT}\obE_0\|_{\op}(\|\bM_0\|_{\op}+\|\bM_1\|_{\op})\cdot \id\,,
\end{align*}
 and therefore
 \begin{align*}
 \lambda_s(\bE_1^{\sT}\bM_1\bE_1) \le  \lambda_s(\bQ_*^{\sT}\bE_0^{\sT}\bM_0\bE_0 \bQ_*)+
 \|\bM_0-\bM_1\|_{\op}+
3\|\bE_1^{\sT}\obE_0\|_{\op}(\|\bM_0\|_{\op}+\|\bM_1\|_{\op}) \, .
\end{align*}
By Eq.~\eqref{eq:LmL1} we have $\lambda_s(\bE_1^{\sT}\bM_1\bE_1)=L_{s,1}(\bx_1;\balpha)$.
Further, by  Eq.~\eqref{eq:LMVar},
$\lambda_s(\bQ_*^{\sT}\bE_0^{\sT}\bM_0\bE_0 \bQ_*)\le L_{m+s-1,m}(\bx_0;\balpha)$. 
We thus obtained
 \begin{align}
 L_{s,1}(\bx_1;\balpha)\le  L_{m+s-1,m}(\bx_0;\balpha)+
 \|\bM_0-\bM_1\|_{\op}+
3\|\bE_1^{\sT}\obE_0\|_{\op}(\|\bM_0\|_{\op}+\|\bM_1\|_{\op})\, .\label{eq:L1LmBound}
\end{align}

We finally bound the error terms on the right-hand side of Eq.~\eqref{eq:L1LmBound}.
Recall the definitions of Eqs.~\eqref{eq:Ddef}, \eqref{eq:Mdef}.  We now choose 
an orthonormal basis on $\Ts_{\bx_0}$, which we write as an orthogonal matrix 
$\bU_{0}\in \cO(d,d-1)$, and an orthonormal basis for $\Ts_{\bx_1}$, given by
$\bU_{1} = \bR_{\bx_0,\bx_1}\bU_{0}$. We then obtain (identifying operators with their matrix representation)
\begin{align}
\bD_0 &:= \bD\bF(\bx_0)\bU_0\, ,\;\;\;\; \bD_1:= \bD\bF(\bx_1)\bR_{\bx_0,\bx_1}\bU_0 = \bD\bF(\bx_1)\bU_1\, ,
\label{eq:Ddef2}\\
\bM_0 & := \sum_{\ell=1}^n\alpha_{\ell}\bU_0^{\sT}\nabla^2F_{\ell}(\bx_0)\bU_0
\, ,\;\;\;\;  \bM_1 :=\sum_{\ell=1}^n\alpha_{\ell}\bU_1^{\sT}\nabla^2F_{\ell}(\bx_1)\bU_1
\, .\label{eq:Mdef2}
\end{align}
 We then have
 \begin{align}
 \|\bM_0-\bM_1\|_{\op} 
 & \le \|\balpha\|_1\max_{\ell\le n}\big\|\bU_0^{\sT}\nabla^2F_{\ell}(\bx_0)\bU_0-
 \bU_1^{\sT}\nabla^2F_{\ell}(\bx_1)\bU_1\big\|_{\op}\\
 & \le \frac{C_0}{\rho} d \pmax^{3} \sqrt{\log(\pmax)} \|\bx_0-\bx_1\|\, , \label{eq:BoundM0M1}
 \end{align}
 where we used $\|\balpha\|_1\le \sqrt{n}\le \sqrt{d}$.
 The last inequality holds on  the event  $\cG_2$
 of Lemma  \ref{lemma:BoundLip}.
 
 By a similar argument,
 \begin{align}
 \|\bM_1\|_{\op} 
 & \le  \|\balpha\|_1\max_{\ell\le n}\|\nabla^2F_{\ell}(\bx_0)\|_{\op}\le C_1 d \pmax^2\sqrt{\log(\pmax)}\, .
 \label{eq:BoundM1}
 \end{align}
 on the event $\cG_1$ of Lemma \ref{rmk:MaxHessian},
 and similarly for $\|\bM_0\|_{\op}$.
  
 Further, 
 \begin{align}
 \|\bD_0-\bD_1\|_{\op}&= \big\|\bD\bF(\bx_0)\bU_0-\bD\bF(\bx_1)\bU_1\big\|_{\op}\label{eq:BoundD0D1}\\
 &\le \frac{C_0}{\rho}\pmax^2\sqrt{d\log(\pmax)}\|\bx_0-\bx_1\|\, ,\nonumber
 \end{align}
 where the last step we used once more the fact that event $\cG_2$
 of Lemma \ref{lemma:BoundLip} holds.

Finally, by Corollary \ref{coro:Wedin} (applied with $\bA_i=\bD_i$, and $d$ replaced by $d-1$),
we get
\begin{align}
\big\|\bE_1^{\sT}\obE_0\big\|_{\op}&\le \frac{1}{\sigma_{n-m+1}(\bD_0)}
\|\bD_0-\bD_1\|_{\op}\nonumber\\
& \stackrel{(a)}{\le} \frac{C_0/C_*}{(\sqrt{d}-\sqrt{n})\xi'_{\min}(\|\bx_0\|^2)^{1/2}\rho}\, \pmax^2\sqrt{d\log(\pmax)}\|\bx_0-\bx_1\|\\
& \stackrel{(b)}{\le} \frac{2C_0/C_* d\pmax^2\sqrt{\log(\pmax)}}{\xi'_{\min}(\|\bx_0\|^2)^{1/2}\rho} \|\bx_0-\bx_1\|\, .
\label{eq:BoundEE}
\end{align}
Here $(a)$ holds by the bound \eqref{eq:BoundD0D1}, on the event $\cG_3$  
of Lemma \ref{lemma:SigmaDF} (with $m= \lfloor(d-n)/4\rfloor$) and $(b)$ since,
we have $\sqrt{d}-\sqrt{n}\ge (d-n)/(2\sqrt{d})\ge 1/(2\sqrt{d})$.
 
Using the bounds \eqref{eq:BoundM0M1}, \eqref{eq:BoundM1},
and \eqref{eq:BoundEE} in Eq.~\eqref{eq:L1LmBound}, we finally obtain
\begin{align*}
 L_{s,1}(\bx_1;\balpha)&\le  L_{m+s-1,m}(\bx_0;\balpha) \frac{C_0}{\rho} d \pmax^{3} \sqrt{\log(\pmax)} \|\bx_0-\bx_1\| 
 + \frac{C d^2 \pmax^4 \log(\pmax) }{\xi'_{\min}(\|\bx_0\|^2)^{1/2}\rho} \|\bx_0-\bx_1\|\\
 &\le L_{m+s-1,m}(\bx_0;\balpha)+ \frac{C'd^2\pmax^5}{\xi'_{\min}(\rho^2)^{1/2}\rho}\|\bx_0-\bx_1\|\, ,
\end{align*}
for some absolute constants $C,C'>0$, which concludes the proof.
\end{proof}

\section{Minimal singular value at solutions}

In Section \ref{sec:Newton} we will prove that w.h.p. any point sufficiently close to a solution is an approximate solution. The distance from the solution will be controlled by Lipschitz constants which we have already analyzed in Section \ref{sec:Preliminaries},
 and the minimal singular value of the Jacobian $\bD\bF(\bx)$ at the solution which we study in this section.
The following are the two main results we prove, treating separately the case of $n=d-1$ and $n< d-1$ (which will be used for Algorithm \ref{algo:MMS} and Algorithm \ref{algo:HD}, respectively).
\begin{proposition}\label{prop:sigmamin_MSS}
    Suppose $n=d-1$. For some  universal constant $C>0$ and any $u>0$,
    \begin{equation}
    \E\Big|\Big\{ \bx\in\S^{d-1}:\,\bF(\bx)=\bzero,\,\sigma_{\min}(\bD\bF(\bx)\bU_{\bx})\le \sqrt{u}d^{-\frac32}\pmax^{-\frac{d}{4}}
    \Big\}\Big| \le C u\,.       
    \end{equation}
\end{proposition} 

\begin{proposition}\label{prop:sigmamin}
    There exist universal constants $C,C'>0$ such that for $d/2\le n< d-1$, defining
    \begin{equation}\label{eq:tau}
        \tau:= \left(e^{C'd} \pmax^{d-\frac{3}{4}n-1}
    \log(\pmax)^{\frac14(d-n-1)}
	d^{\frac{1}{4}(d-n)}\right)^{-1}\,,
    \end{equation}
    we have that
	\begin{equation}\label{eq:sigmaminbd3}
		\P\left(\exists \bx\in \S^{d-1}:\,\bF(\bx)=\bzero,\,\sigma_{\min}(\bD\bF(\bx)\bU_{\bx})\le \tau
		\right) \le  C\exp(-d/C).
	\end{equation}
\end{proposition}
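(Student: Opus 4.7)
The plan is a Kac--Rice argument for the expected $(d{-}1{-}n)$-dimensional Hausdorff measure of the ``bad set'' $B_\tau:=\bF^{-1}(0)\cap\{\sigma_{\min}(\bD\bF(\bx)\bU_\bx)\le\tau\}\subset\S^{d-1}$, combined with a Lipschitz-based conversion from volume to existence and concluded by Markov's inequality. By Lemma~\ref{lemma:HessianDistr}, at any fixed $\bx\in\S^{d-1}$ the vectors $\bF(\bx)\sim\normal(0,\id_n)$ and $\bJ(\bx):=\bD\bF(\bx)\bU_\bx\ed \bS_1\bZ$, with $\bS_1=\diag(\sqrt{p_i})$ and $\bZ\in\reals^{n\times(d-1)}$ of i.i.d.\ $\normal(0,1)$ entries, are jointly Gaussian and independent, the latter because $\EE[F_i(\bx)\partial_\bv F_i(\bx)]=p_i\<\bx,\bv\>=0$ for $\bv\in\Ts_{\bx}$. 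The Kac--Rice formula thus gives
\begin{equation*}
\EE\bigl[\mathcal{H}^{d-1-n}(B_\tau)\bigr]=\int_{\S^{d-1}}(2\pi)^{-n/2}\,\EE\bigl[\sqrt{\det(\bJ\bJ^\sT)}\,\I\{\sigma_{\min}(\bJ)\le\tau\}\bigr]\,d\bx.
\end{equation*}

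The integrand is bounded by H\"older's inequality with exponents $(4/3,4)$:
\begin{equation*}
\EE\bigl[\sqrt{\det(\bJ\bJ^\sT)}\,\I\{\sigma_{\min}(\bJ)\le\tau\}\bigr]\le\EE\bigl[\det(\bJ\bJ^\sT)^{2/3}\bigr]^{3/4}\,\P(\sigma_{\min}(\bJ)\le\tau)^{1/4}.
\end{equation*}
Jensen and the Wishart mean $\EE[\det(\bZ\bZ^\sT)]=\prod_{i=0}^{n-1}(d-1-i)\le d^n$ give the first factor at most $\pmax^{n/2}d^{n/2}$, while the small-ball bound $\P(\sigma_{\min}(\bJ)\le\tau)\le(C\tau\sqrt d)^{d-n}$ follows from standard Szarek/Sankar--Spielman--Teng-type bounds for rectangular Gaussians together with $\sigma_{\min}(\bJ)\ge\sqrt{p_{\min}}\sigma_{\min}(\bZ)\ge\sqrt2\,\sigma_{\min}(\bZ)$. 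The choice $q=4$ of H\"older exponent is precisely what yields the $1/4$ exponents in~\eqref{eq:tau}.

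For $n=d-1$ the bad set is $0$-dimensional and Markov applied to $\mathcal{H}^0(B_\tau)=\#B_\tau$ concludes directly. For $n<d-1$, I convert the volume bound to an existence bound via Lipschitz: on the high-probability event from Lemma~\ref{lemma:BoundLip} that $\Lip_\perp(\bD\bF;\S^{d-1})\le C_0\pmax^2\sqrt{d\log\pmax}$, a bad point $\bx_0$ forces $\sigma_{\min}(\bJ)\le 2\tau$ on a spherical $\delta$-ball around $\bx_0$ with $\delta\asymp\tau/(\pmax^2\sqrt{d\log\pmax})$, and transversality of $V=\bF^{-1}(0)$ there gives $\mathcal{H}^{d-1-n}(B_{2\tau})\gtrsim\delta^{d-1-n}$. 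Markov then yields $\P(B_\tau\ne\emptyset)\lesssim\EE[\mathcal{H}^{d-1-n}(B_{2\tau})]/\delta^{d-1-n}$. Substituting $\Vol(\S^{d-1})(2\pi)^{-n/2}\lesssim e^{Cd}/d^{d/2}$ together with the H\"older bound, and solving for the $\tau$ that makes this probability at most $C'e^{-d/C'}$, reproduces~\eqref{eq:tau}: the exponent $d-3n/4-1$ on $\pmax$ arises from balancing $\pmax^{n/2}$ (from $\bS_1$) against $\pmax^{2(d-1-n)}$ (from the Lipschitz constant raised to the codimension), and the $\log\pmax^{(d-n-1)/4}$ and $d^{(d-n)/4}$ factors appear analogously.

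The main obstacle is the volume-to-existence step when $n<d-1$. A high-probability transversality statement for $V=\bF^{-1}(0)$ itself requires a lower bound on $\sigma_{\min}$, which is exactly what we are trying to prove, so a two-scale bootstrap seems necessary: first establish a crude lower bound (with a much larger $\tau$) that secures transversality, then refine on that event to obtain the sharp exponents in~\eqref{eq:tau}. The H\"older and Wishart-moment calculations are fairly routine; the delicate point is this geometric passage, with the correct powers of $\pmax$, $\log\pmax$, and $d$ tracked throughout.
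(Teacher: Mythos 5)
Your high-level plan (Kac--Rice for the expected volume, Lipschitz-based volume-to-existence, Markov) is the same as the paper's. But the H\"older step is a genuine gap, and it is fatal for the general case $d-n\ge 2$.

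Track the $\tau$-dependence. Your bound is
\[
\E\bigl[\det(\bJ\bJ^\sT)^{2/3}\bigr]^{3/4}\,\P\bigl(\sigma_{\min}(\bJ)\le\tau\bigr)^{1/4}\;\lesssim\;\pmax^{n/2}d^{n/2}\,\tau^{(d-n)/4}\,,
\]
while the volume-to-existence conversion divides by $\delta^{d-n-1}$ with $\delta\asymp\tau$. The net $\tau$-exponent in your probability bound is therefore $\tfrac{d-n}{4}-(d-n-1)$, which is negative as soon as $d-n\ge 2$ (it equals $-\tfrac{3}{4}(d-n)+1$). So shrinking $\tau$ makes your bound \emph{worse}, and ``solving for the $\tau$ that makes this probability at most $C'e^{-d/C'}$'' cannot be done. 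H\"older with $q=4$ does not work; in fact no fixed $q\ge 1$ recovers the paper's $\tau^2$ scaling, since one would need $q<(d-n)/(d-n-1)$. Your claim that $q=4$ ``is precisely what yields the $1/4$ exponents'' is a coincidence: the $1/4$'s in~\eqref{eq:tau} actually arise because $\tau$ is set proportional to $A^{-1/2}$, where $A$ carries factors raised to the power $(d-n-1)/2$, so the $1/4$ is $(1/2)\times(1/2)$.

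What the paper does instead (Lemma~\ref{lem:smallsigmaE}) is \emph{condition} on $\sigma_{\min}(\bZ)^2=\lambda_1\le\tau^2$ and show, via the Wishart joint eigenvalue density, that conditioning on a small bottom eigenvalue also shrinks $\det(\bZ\bZ^\sT)^{1/2}$, yielding
\[
\E\bigl[\det^{1/2}\ind\{\sigma_{\min}\le\tau\}\bigr]\le \E\bigl[\det^{1/2}\bigr]\cdot C_1 d^{5/2}\tau\,\P(\sigma_{\min}\le\tau)\,.
\]
This keeps the full power $\tau^{d-n}$ from the small-ball probability \emph{and} an extra factor $\tau$, giving $\tau^{(d-n)+1-(d-n-1)}=\tau^2$ after dividing by $\delta^{d-n-1}$, which is then summable over dyadic scales. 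H\"older throws away exactly the correlation between $\det$ and $\sigma_{\min}$ that makes this work.

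Finally, your worry about circularity in the volume-to-existence step is correct as a concern, and the paper's resolution is worth noting: Lemma~\ref{lem:smallsigmaLB} only needs a point $\bx$ with $\sigma_{\min}(\bD\bF(\bx)\bU_\bx)\in[\tau,2\tau]$ — the lower end of the window provides exactly the local lower bound needed to run the implicit function theorem, so there is no bootstrap. The price is a two-sided condition, which is why the final step is a union bound over dyadic scales $\{[2^{-i}\tau,2^{-i+1}\tau]\}_{i\ge 1}$; the convergence of this sum is exactly where the $\tau^2$ (not $\tau^{\rm negative}$) is needed.
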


\subsection{\label{subsec:smallsigmaE} Upper bound on volume in expectation}
Consider the set
\[
\Psi(\tau):= \Big\{\bx\in\S^{d-1}:\,\bF(\bx)=0,\, \sigma_{\min}(\bD\bF(\bx)\bU_{\bx})\le \tau  \Big\}\,.
\]
Denote by $\Vol_{k}$ the $k$-dimensional Hausdorff measure. We start by proving an upper bound for the volume of $\Psi(\tau)$ in expectation. This bound will be used in the proof of both propositions above.
\begin{lemma}\label{lem:smallsigmaE}
	Let $\bZ\sim\GOE(n,d-1)$. Then, 
 \begin{equation}\label{eq:Psiub1}
	\E \,\Vol_{d-1-n}(\Psi(\tau)) \le \Vol_{d-1}(\S^{d-1}) 
	\prod_{i\leq n} \sqrt{\frac{p_i}{2\pi }}
	\E \Big(
    |\det(\bZ\bZ^{\top})|^{1/2} \ind\{\sigma_{\min}(\bZ)\leq \tau\}
		\Big)
	\,.
\end{equation}
 Moreover, if we assume that $\tau\le 1/n$, then for some universal constant $C_1>0$,
 \begin{align}
	\E \, \Vol_{d-1-n}(\Psi(\tau)) \le \Vol_{d-1}(\S^{d-1})
	\prod_{i\leq n} \sqrt{\frac{p_i}{2\pi }}
	\E \Big(
    |\det(\bZ\bZ^{\top})|^{1/2}\Big)\cdot C_1 d^{5/2}\tau \P\Big(\sigma_{\min}(\bZ)\leq \tau
		\Big)
	\,.\label{eq:Psiub2}
	\end{align}
\end{lemma}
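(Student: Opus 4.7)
The proof naturally splits along the two stated bounds; both rely on the joint distribution identified in Lemma~\ref{lemma:HessianDistr}.

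\smallskip

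\textbf{Step 1: Kac--Rice to obtain \eqref{eq:Psiub1}.} Apply the Kac--Rice (co-area) formula to the smooth random map $\bF:\S^{d-1}\to\reals^n$:
\begin{align*}
\E\,\Vol_{d-1-n}(\Psi(\tau)) = \int_{\S^{d-1}} p_{\bF(\bx)}(0)\,\E\Big[|\det(J_\bx J_\bx^{\top})|^{1/2}\,\ind\{\sigma_{\min}(J_\bx)\le\tau\}\,\Big|\,\bF(\bx)=0\Big]\,d\bx,
\end{align*}
with $J_\bx:=\bD\bF(\bx)\bU_\bx$. By Lemma~\ref{lemma:HessianDistr} at $q=\|\bx\|_2^2=1$, one has $\bF(\bx)\sim\normal(0,\id_n)$ (so $p_{\bF(\bx)}(0)=(2\pi)^{-n/2}$), independently $J_\bx=\diag(\sqrt{p_i})\bZ$ with $\bZ\sim\GOE(n,d-1)$. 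Independence collapses the conditional expectation. Then $|\det(J_\bx J_\bx^{\top})|^{1/2}=\sqrt{\prod_i p_i}\,|\det(\bZ\bZ^{\top})|^{1/2}$, and, since $p_i\ge 1$, $\sigma_{\min}(J_\bx)\ge\sigma_{\min}(\bZ)$, so the indicator is dominated by $\ind\{\sigma_{\min}(\bZ)\le\tau\}$. Integrating over $\S^{d-1}$ yields \eqref{eq:Psiub1}.

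\smallskip

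\textbf{Step 2: A density-tilting identity.} The joint density of the ordered eigenvalues $\lambda_1\ge\cdots\ge\lambda_n\ge 0$ of $\bZ\bZ^{\top}$ for $\bZ\sim\GOE(n,m)$ is proportional to $\prod_{i<j}(\lambda_i-\lambda_j)\prod_i\lambda_i^{(m-n-1)/2}e^{-\lambda_i/2}$. Increasing $m$ from $d-1$ to $d$ multiplies this density by $\prod_i\sqrt{\lambda_i}=|\det(\bZ\bZ^{\top})|^{1/2}$, so the Radon--Nikodym derivative of the law of the eigenvalues of $\bZ'\bZ'^{\top}$ (with $\bZ'\sim\GOE(n,d)$) with respect to the law of those of $\bZ\bZ^{\top}$ is $|\det(\bZ\bZ^{\top})|^{1/2}/\E[|\det(\bZ\bZ^{\top})|^{1/2}]$. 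Applying this identity to $\phi=\ind\{\sigma_{\min}\le\tau\}$ gives the clean equality
\begin{align*}
\E\big[|\det(\bZ\bZ^{\top})|^{1/2}\ind\{\sigma_{\min}(\bZ)\le\tau\}\big] = \E\big[|\det(\bZ\bZ^{\top})|^{1/2}\big]\cdot\P(\sigma_{\min}(\bZ')\le\tau).
\end{align*}
Combining with \eqref{eq:Psiub1} reduces \eqref{eq:Psiub2} to proving $\P(\sigma_{\min}(\bZ')\le\tau)\le C_1 d^{5/2}\tau\,\P(\sigma_{\min}(\bZ)\le\tau)$ for $\tau\le 1/n$.

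\smallskip

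\textbf{Step 3: Comparing smallest-eigenvalue densities.} For $\bZ\sim\GOE(n,m)$, the marginal density of $\lambda_n=\sigma_{\min}^2$ has the form $p_m(s)=\kappa_{n,m}\,s^{(m-n-1)/2}J_m(s)$, where
\[
J_m(s)=\int_{\lambda_i\ge s}\prod_{i<j<n}(\lambda_i-\lambda_j)\prod_{i<n}(\lambda_i-s)\lambda_i^{(m-n-1)/2}e^{-\lambda_i/2}\,d\lambda_1\cdots d\lambda_{n-1}.
\]
Under $s\le\tau^2\le 1/n^2$, one has $\lambda_i-s=\lambda_i(1+O(1/n))$ uniformly on the bulk of the integration region, so $J_m(s)$ stays within an $O(1)$ factor of $J_m(0)$. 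The constants $\kappa_{n,m}$ are ratios of multivariate gamma functions (from the Selberg/Wishart normalization) and the ratio $\kappa_{n,d}J_d(0)/(\kappa_{n,d-1}J_{d-1}(0))$ is polynomial in $d$, bounded by $Cd^{5/2}$. This yields the pointwise inequality $p_d(s)\le Cd^{5/2}\sqrt{s}\,p_{d-1}(s)$ for $s\le 1/n^2$. Integrating from $0$ to $\tau^2$ and using $\sqrt{s}\le\tau$ gives the required bound, completing \eqref{eq:Psiub2}.

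\smallskip

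The main obstacle is the quantitative comparison in Step~3: one must track explicit values of $\kappa_{n,d}/\kappa_{n,d-1}$ via multivariate gamma function ratios and bound $J_m(s)/J_m(0)$ uniformly on $(0,1/n^2]$ to obtain the polynomial factor $Cd^{5/2}$. Steps~1 and~2 are structural and essentially automatic once Lemma~\ref{lemma:HessianDistr} and the tilting identity are in hand.
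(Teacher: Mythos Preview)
Your Step~1 matches the paper's. Your Step~2 tilting identity is correct and is a genuinely different (and cleaner) reduction than what the paper does. The paper instead conditions on $\lambda_1=\sigma_{\min}(\bZ)^2=t$, observes that the remaining eigenvalues have density proportional to $\phi_{n-1,d-2}(\lambda_2,\dots,\lambda_n)\prod_{i\ge 2}(\lambda_i-t)$, and writes $\E[|\det(\bZ\bZ^{\top})|^{1/2}\mid\sigma_{\min}(\bZ)\le\tau]$ as an integral over $t\in[0,\tau^2]$ of $\sqrt t$ times a ratio of expectations over $\bar\bZ\sim\GOE(n-1,d-2)$. They then bound the numerator crudely by $\E[|\det(\bar\bZ\bar\bZ^{\top})|^{3/2}]$, bound the denominator below by $\tfrac12\E[|\det(\bar\bZ\bar\bZ^{\top})|]$ via the change of variables $\lambda_i\mapsto\lambda_i+t$ (this is where $t\le 1/n$ enters, giving $e^{-(n-1)t/2}\ge 1/2$), and finish with explicit chi-squared moment ratios. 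Your tilting buys a transparent exact identity; the paper's conditioning avoids your final comparison step but introduces the auxiliary $(n-1,d-2)$ ensemble.

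Your Step~3, however, has a real gap as written. The heuristic ``$\lambda_i-s=\lambda_i(1+O(1/n))$ on the bulk'' does not control the full integral defining $J_m(s)$: the domain is $\{\lambda_i\ge s\}$, and when some $\lambda_i$ is of order $s$ (not $\gg ns$) this approximation fails. What you actually need is a one-sided bound: since $J_m$ is monotone decreasing in $s$ (both the domain and the factors $(\lambda_i-s)$ shrink), the upper bound $J_d(s)\le J_d(0)$ is free, but the lower bound $J_{d-1}(s)\ge c\,J_{d-1}(0)$ for $s\le 1/n^2$ is precisely the content of the paper's shift-of-variables argument and cannot be obtained by a bulk heuristic. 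Similarly, the claimed polynomial bound on $\kappa_{n,d}J_d(0)/(\kappa_{n,d-1}J_{d-1}(0))$ needs an actual computation; the paper extracts the corresponding factor from the chi-squared product representation $|\det(\bZ\bZ^{\top})|\ed\prod_{i=1}^n\chi^2_{d-i}$. Your route is workable, but Step~3 as stated is an outline rather than an argument, and completing it amounts to reproving the paper's two key estimates in slightly repackaged form.
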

\begin{proof}
    The expectation $\E  \Vol_{d-1-n}(\Psi(\tau))$ can be expressed by a variant of the Kac-Rice formula, e.g. as in \cite[Theorem 6.8]{AzaisWschebor} (see  \cite[Remark 3]{SubagConcentrationPoly} for a discussion on the regularity conditions required for its application).
	Precisely, abbreviating $\bDel(\bx)=\bD\bF(\bx)\bU_{\bx}$, the  formula gives
	\begin{align*}
		&\E \, \Vol_{d-1-n}(\Psi(\tau))\\ 
		&=\int_{\S^{d-1}}
		\varphi_{\bF(\bx)}(0)\E \Big[
		|\det(\bDel(\bx)\bDel(\bx)^{\top})|^{1/2} \ind\{\sigma_{\min}(\bDel(\bx))\leq \tau\}
		\,\Big|\,\bF(\bx)=0
		\Big]\de \Vol_{d-1}(\bx)\\
		&=\Vol_{d-1}(\S^{d-1})
		(2\pi )^{-n/2}\E \Big[
    |\det(\bDel(\bx)\bDel(\bx)^{\top})|^{1/2} \ind\{\sigma_{\min}(\bDel(\bx))\leq \tau\}
		\,\Big|\,\bF(\bx)=0
		\Big]\,,
	\end{align*}
	where $\varphi_{\bF(\bx)}(0) = (2\pi )^{-n/2}$ is the density of the random vector $\bF(\bx)$ evaluated at the origin and, using symmetry, in the second line $\bx$ is an arbitrary point in $\S^{d-1}$.

	Recall that by Lemma \ref{lemma:HessianDistr}, conditional on $\bF(\bx)=0$, $\bDel(\bx)\ed\bS_{1}(1)\bZ$
	where $\bS_{1}(1):= 
	\diag(\sqrt{p_i}:\, i\le n)$ and $\bZ\sim\GOE(n,d-1)$. 
	Note that $\sigma_{\min}(\bS_{1}(1)\bZ)\ge \sigma_{\min}(\bZ)$, since $\sigma_{\min}(\bA)=\min_{\bv\in\S^{n-1}}\max_{\bu\in\S^{d-2}}\bv^{\top}\bA\bu$ for any $\bA\in \R^{n\times (d-1)}$.  This proves \eqref{eq:Psiub1}.
	
    The random matrix $\bZ\bZ^{\top}$ is a Wishart matrix. Recall that its eigenvalues $\lambda_1\le \cdots\le \lambda_n$ have density $c_d\phi_{n,d-1}$ where $c_d$ is a normalizing constant and
 \begin{equation}\label{eq:Wishartdensity}
    \phi_{n,d-1}(x_1,\ldots,x_n)=\prod_{i} x_i^{(d-n-2)/2}e^{-x_i/2} \prod_{i<j}|x_i-x_j|
 \end{equation}
 on $B_n:=\{(x_1,\ldots,x_n):\,0<x_1<\cdots<x_n\}$. 
 Since 
 \[
 \phi_{n,d-1}(x_1,\ldots,x_n) = x_1^{(d-n-2)/2}e^{-x_1/2} \prod_{i=2}^n|x_i-x_1|\phi_{n-1,d-2}(x_2,\ldots,x_n)\,,
 \]
 by conditioning on $\sigma_{\min}(\bZ)^2=\lambda_1$,
 \begin{equation}\label{eq:detconditional}
 \E \Big[
    |\det(\bZ\bZ^{\top})|^{1/2} \,\Big|\,\sigma_{\min}(\bZ)\leq \tau
		\Big]=\int_0^{\tau^2} \frac{\E \Big[\prod_{i=1}^{n-1}\bar{\lambda}_i^{1/2}|\bar{\lambda}_i-t|\ind\{\bar{\lambda}_1>t\}\Big]}{\E \Big[\prod_{i=1}^{n-1}|\bar{\lambda}_i-t|\ind\{\bar{\lambda}_1>t\}\Big]} f_{n,d-1}(t)\sqrt{t}\, \de t\,,
 \end{equation}
 where $\bar{\lambda}_1\le\cdots\le \bar{\lambda}_{n-1}$ are the eigenvalues of $\bar\bZ\bar\bZ^{\top}$ with $\bar\bZ\sim\GOE(n-1,d-2)$ and $f_{n,d-1}(t)$ is the density of $\lambda_1$ normalized so that $\int_0^{\tau^2} f_{n,d-1}(t)\de t=1$.

 The numerator on the right-hand side of \eqref{eq:detconditional} is bounded from above by 
 \begin{align*}
\E\left[|\det(\bar\bZ\bar\bZ^{\top})|^{3/2}\right]
 \,.
 \end{align*}
Using the density \eqref{eq:Wishartdensity}, we can write the denominator as
\begin{align*}
    &c_{d-1}\int_{B_{n-1}+t} \prod_{i=1}^{n-1}|x_i-t| x_i^{(d-n-2)/2}e^{-x_i/2} \prod_{i<j}|x_i-x_j|\de x_1\cdots \de x_{n-1}\\
    &\ge c_{d-1}\int_{B_{n-1}} \prod_{i=1}^{n-1}|x_i| x_i^{(d-n-2)/2}e^{-(x_i+t)/2} \prod_{i<j}|x_i-x_j|\de x_1\cdots \de x_{n-1} \ge \frac12 \E\Big[|\det(\bar\bZ\bar\bZ^{\top})|\Big]\,,
\end{align*}
where by an abuse of notation we denote by $B_{n-1}+t$ the Minkowski sum of $B_{n-1}$ and $\{(t,\ldots,t)\}$ and the inequality holds if $t<1/n$.

Recall that for any matrix $\bA\in\R^{M\times N}$, $|\det(\bA\bA^{\top})|^{1/2} = \prod_{i\leq M} \|\Theta_i(\bA)\|$, where $\Theta_i(\bA)$ is the projection of the $i$-th row of $\bA$ onto the orthogonal space to its first $i-1$ rows. Hence, $|\det(\bar\bZ\bar\bZ^{\top})|=\prod_{i=1}^{n-1}\chi_{d-1-i}^2$ and $|\det(\bZ\bZ^{\top})|=\prod_{i=1}^{n}\chi_{d-i}^2$ in distribution, 
where $\chi^2_k$ are i.i.d. chi-squared variables of $k$ degrees of freedom. Using this, one can verify that, for absolute $C>0$,
\begin{align*}
\frac{\E\left[|\det(\bar\bZ\bar\bZ^{\top})|^{3/2}\right]}{\E\Big[|\det(\bar\bZ\bar\bZ^{\top})|\Big] \E\Big[|\det(\bar\bZ\bar\bZ^{\top})|^{1/2}\Big]}\le C d^2
 \mbox{\, \, and\, \, }
 \frac{\E\left[|\det(\bar\bZ\bar\bZ^{\top})|^{1/2}\right]}{\E\left[|\det(\bZ\bZ^{\top})|^{1/2}\right]}\le  C/\sqrt d\,.
\end{align*}
 %
Combining the above with \eqref{eq:detconditional} and plugging back into \eqref{eq:Psiub1} completes the proof.
\end{proof}

\subsection{\label{subsec:smallsigma_at_a_point} Lower bound on the volume of a neighborhood}

Recall the definitions of $\Lip(\bD\bF;\Omega)$ and $\Lip_{\perp}(\bD\bF;\Omega)$
in Eqs.~\eqref{eq:LipD}, \eqref{eq:LipDF_perp}.
\begin{lemma}\label{lem:smallsigmaLB}
	Let $c,\tau,r>0$ such that $r<1/10$ and $(1-r)^{\pmax}\tau-rc\sqrt{d}\ge \tau/2$. On the event that
	\begin{equation}\label{eq:LipDBd}
	\Lip_{\perp}(\bD\bF;\S^{d-1}(1)) \vee \Lip(\bD\bF;\Ball^d(1))\leq c \sqrt d,
	\end{equation}
	if there exists a point $\bx\in \S^{d-1}$  such that $\bF(\bx)=\bzero$ and $\sigma_{\min}(\bD\bF(\bx)\bU_{\bx}) \in [\tau,2\tau]$ then
	\begin{equation}\label{eq:Vollb}
	\Vol_{d-n-1} ( \Psi( 3\tau )
	)
	\ge \Big(\frac{r}{2\sqrt 2}\Big)^{d-n-1} \Vol_{d-n-1} \big(
\Ball^{d-n-1}(1)
\big)\,.
	\end{equation}	
\end{lemma}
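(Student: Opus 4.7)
The plan is to construct explicitly a $(d{-}1{-}n)$-dimensional disk of solutions through $\bx$ on which $\sigma_{\min}(\bD\bF(\cdot)\bU_{(\cdot)})$ stays $\le 3\tau$, and then estimate its volume via bi-Lipschitz control. Let $K:=\ker(\bD\bF(\bx)|_{\Ts_\bx})\subseteq \Ts_\bx$; since $\sigma_{\min}(\bD\bF(\bx)\bU_\bx)\ge\tau>0$, the kernel has dimension exactly $d-1-n$, and the restriction $T:=\bD\bF(\bx)|_{K^\perp}:K^\perp\to\R^n$ is a bijection with $\|T^{-1}\|_{\op}\le 1/\tau$.

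The main step is a standard implicit-function / Banach fixed-point construction. For each $\bu\in K$ with $\|\bu\|\le r$, write $\bF(\bx+\bu+\bv)=\bD\bF(\bx)\bv+R(\bu+\bv)$ using $\bF(\bx)=0$ and $\bD\bF(\bx)\bu=0$, where the Taylor remainder obeys $\|R(\bh)\|\le \tfrac12 c\sqrt d\,\|\bh\|^2$ and $\|R(\bh_1)-R(\bh_2)\|\le c\sqrt d(\|\bh_1\|\vee\|\bh_2\|)\|\bh_1-\bh_2\|$ by the Lipschitz bound \eqref{eq:LipDBd}. Thus $\Phi_\bu(\bv):=-T^{-1}R(\bu+\bv)$ is a self-map of a small ball in $K^\perp$ whose contraction constant is $\le c\sqrt d\cdot \sqrt{2}\,r/\tau$. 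The hypothesis $(1-r)^{\pmax}\tau-rc\sqrt d\ge \tau/2$ forces $rc\sqrt d\le \tau/2$, which makes $\Phi_\bu$ a strict contraction and gives a unique fixed point $\bv(\bu)$ with $\|\bv(\bu)\|\lesssim c\sqrt d\,r^2/\tau$. Set $\bz(\bu):=\bx+\bu+\bv(\bu)$ and $\by(\bu):=\bz(\bu)/\|\bz(\bu)\|$; by homogeneity $F_i(\by(\bu))=\|\bz(\bu)\|^{-p_i}F_i(\bz(\bu))=0$, so $\by(\bu)\in\S^{d-1}\cap \bF^{-1}(0)$.

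To see $\by(\bu)\in\Psi(3\tau)$, I will bound $\|\by(\bu)-\bx\|^2=2(1-\|\bz(\bu)\|^{-1})\le \|\bu\|^2+\|\bv(\bu)\|^2\le 2r^2$. Then by Weyl's inequality and \eqref{eq:LipDBd}, after parallel-transport of the tangent basis via $\bR_{\bx,\by(\bu)}$,
\begin{align*}
\sigma_{\min}(\bD\bF(\by(\bu))\bU_{\by(\bu)})
\le \sigma_{\min}(\bD\bF(\bx)\bU_\bx)+c\sqrt d\,\|\by(\bu)-\bx\|
\le 2\tau+c\sqrt{2d}\,r \le 3\tau,
\end{align*}
again using $rc\sqrt d\le\tau/2$. (The role of the stronger condition $(1-r)^{\pmax}\tau-rc\sqrt d\ge\tau/2$ is to give enough slack that the small homogeneity/rescaling factors picked up in the tangent-space transport do not destroy this bound.)

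Finally, for the volume, since $\bu\in K$ and $\bv(\bu)\in K^\perp$ are orthogonal, $\|\bz(\bu_1)-\bz(\bu_2)\|^2=\|\bu_1-\bu_2\|^2+\|\bv(\bu_1)-\bv(\bu_2)\|^2$, and the contraction rate of $\Phi_\bu$ gives $\|\bv(\bu_1)-\bv(\bu_2)\|\le \tfrac12\|\bu_1-\bu_2\|$. Combining with the identity
\begin{align*}
\|\by(\bu_1)-\by(\bu_2)\|^2
=\frac{\|\bz(\bu_1)-\bz(\bu_2)\|^2-(\|\bz(\bu_1)\|-\|\bz(\bu_2)\|)^2}{\|\bz(\bu_1)\|\,\|\bz(\bu_2)\|}
\end{align*}
and the bound $\|\bz(\bu_i)\|\le\sqrt{1+2r^2}\le\sqrt 2$ (using $r<1/10$), one obtains $\|\by(\bu_1)-\by(\bu_2)\|\ge (2\sqrt 2)^{-1}\|\bu_1-\bu_2\|$. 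Therefore the map $\bu\mapsto \by(\bu)$ is a bi-Lipschitz injection into $\Psi(3\tau)$, and the area formula (or Hausdorff-measure change of variables) yields \eqref{eq:Vollb}. The main bookkeeping obstacle is the parallel-transport identification of $\Ts_\bx$ with $\Ts_{\by(\bu)}$ when applying Weyl in Step 4; everything else is a routine contraction/inverse-function estimate.
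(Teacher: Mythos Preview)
Your approach is the same in spirit as the paper's (implicit-function/fixed-point construction of a $(d{-}1{-}n)$-disk of solutions, then project to $\S^{d-1}$ and use a bi-Lipschitz estimate), but there is a genuine technical gap. You start at $\bx\in\S^{d-1}$ and build $\bz(\bu)=\bx+\bu+\bv(\bu)$ with $\bu,\bv(\bu)\perp\bx$, so $\|\bz(\bu)\|^2=1+\|\bu\|^2+\|\bv(\bu)\|^2>1$. Your Taylor remainder bounds $\|R(\bh)\|\le \tfrac12 c\sqrt d\,\|\bh\|^2$ and $\|R(\bh_1)-R(\bh_2)\|\le c\sqrt d\,(\|\bh_1\|\vee\|\bh_2\|)\|\bh_1-\bh_2\|$ require the Lipschitz control of $\bD\bF$ along segments from $\bx$ to $\bx+\bh$, all of which lie \emph{outside} $\Ball^d(1)$; but the hypothesis \eqref{eq:LipDBd} only gives $\Lip(\bD\bF;\Ball^d(1))\le c\sqrt d$. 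You never address this, and your parenthetical remark misidentifies the role of the $(1-r)^{\pmax}$ factor.

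The paper fixes this by first scaling \emph{inward} to $\bz=(1-r)\bx$, so that $\|\bz+\bK\ba+\bT\bb\|^2=(1-r)^2+\|(\ba,\bb)\|^2<1$ whenever $\|(\ba,\bb)\|\le r$, and the whole construction stays inside $\Ball^d(1)$. The price of scaling inward is that $\sigma_{\min}(\bD\bF(\bz)\bU_\bz)$ drops by a factor $\ge(1-r)^{\pmax}$ (homogeneity), and the hypothesis $(1-r)^{\pmax}\tau-rc\sqrt d\ge\tau/2$ is exactly what guarantees $\sigma_{\min}(\bD_2\hat\bF)\ge\tau/2$ throughout the relevant region. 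That is where the full condition is used, not in ``tangent-space transport.'' Your argument can be repaired by inserting this inward-scaling step; alternatively you could observe that the hypothesis forces $(1-r)^{\pmax}\ge 1/2$, hence $r\lesssim 1/\pmax$, and then use homogeneity to extend the Lipschitz bound to $\Ball^d(\sqrt{1+2r^2})$ with only a bounded multiplicative loss---but you would need to carry that constant through all subsequent estimates. A secondary issue: your claimed bound $\|\bv(\bu_1)-\bv(\bu_2)\|\le\tfrac12\|\bu_1-\bu_2\|$ does not follow from a contraction rate of order $\sqrt{2}\,rc\sqrt d/\tau\le 1/\sqrt 2$; the paper instead bounds $\|\bD g\|_{\op}<1$ directly, which gives the cleaner Lipschitz constant needed for the volume bound.
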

\begin{proof}
	Assume throughout the proof that the event in \eqref{eq:LipDBd} occurs and let $\bx$ be a point as in the lemma. Since the $F_i$ are homogeneous, $\bD\bF(\bx)\bx=\bzero$. Setting $t=1-r$, for $\bz=t\bx$ we have that $\bD\bF(\bz)=\diag(t^{p_i-1}:\,1\le i\le n)\bD\bF(\bx)$  and $\sigma_{\min}(\bD\bF(\bz))= \sigma_{\min}(\bD\bF(\bz)\bU_{\bz}) \in t^{\pmax}[\tau,2\tau]$.
	

	Let $V:=\{\bD\bF(\bz)^{\top}\cdot \bv:\,\bv\in\R^n\}\subset \Ts_{\bz}$, let $\bT \in \R^{d\times n}$ be a matrix whose columns are an orthonormal basis of $V$ and let $\bK \in \R^{d\times (d-n-1)}$ be a matrix whose columns are an orthonormal basis of $V^\perp\cap \Ts_{\bz}$. For $(\ba,\bb)\in\R^{d-n-1}\times \R^n$, define
	\[
	\hat \bF(\ba,\bb) = \bF(\bz+\bK\cdot \ba+\bT\cdot \bb),
	\]
	and note that
	\begin{align*}
		\bD_1\hat\bF(\ba,\bb) &:= \Big(\frac{\de}{\de a_j} \hat F_i(\ba,\bb)\Big)_{i\le n,\, j\le d-n-1} =  \bD\bF(\bz+\bK\cdot \ba+\bT\cdot \bb)\cdot \bK,\\
		\bD_2\hat\bF(\ba,\bb) &:= \Big(\frac{\de}{\de b_j} \hat F_i(\ba,\bb)\Big)_{i\le n,\, j\le n} =  \bD\bF(\bz+\bK\cdot \ba+\bT\cdot \bb)\cdot \bT.
	\end{align*}

Since $\lambda_{\min}(\bD_2\hat\bF(0,0))=\sigma_{\min}(\bD\bF(\bz))$ and, for $i=1,2$,
\begin{align*}
\|\bD_i\hat\bF(\ba,\bb)-\bD_i\hat\bF(0,0)\|_{\op}&\leq \|\bD \bF(\bz+\bK\cdot \ba+\bT\cdot \bb) - \bD \bF(\bz)\|_{\op}\\
&\le \|\bK\cdot \ba+\bT\cdot \bb\| \Lip(\bD\bF;\Ball^d(1))\le c \sqrt d\, \|(\ba,\bb)\|\,,
\end{align*}
if $\|(\ba,\bb)\|=\|\bK\cdot \ba+\bT\cdot \bb\|\le r$ then, using the assumption
 $(1-r)^{\pmax}\tau-rc\sqrt{d}\ge \tau/2$,
 \begin{align}
 \label{eq:Sigma_min_D2}
\sigma_{\min}(\bD_2\hat\bF(\ba,\bb)),\, \sigma_{\min}(\bD \bF(\bz+\bK\cdot \ba+\bT\cdot \bb)) \ge \tau/2.
\end{align}

Suppose that $(\ba,\bb)$ is a point such that $\|(\ba,\bb)\|< r$ and $\hat \bF(\ba,\bb)=\bzero$. Then by the implicit function theorem, there is a unique continuously differentiable function $g$ from an open neighborhood of $\ba$ to $\R^n$ such that $g(\ba)=\bb$ and for any $\ba'$ in this neighborhood $\hat \bF(\ba',g(\ba'))=\bzero$. From compactness, 
there exists a closed set $A\subset \R^{d-n-1}$ containing the origin and a unique $g:A\to\R^n$ continuous on $A$ and continuously differentiable on its interior $A^o$  such that $g(\bfzero)=\bfzero$ and $\hat \bF(\ba,g(\ba))=\bzero$ for any $\ba\in A$,  and such that $\|(\ba,g(\ba))\| <r$ on $A^o$ and $\|(\ba,g(\ba))\| =r$ on $\partial A=A\setminus A^o$. Moreover, by the implicit function theorem, on $A^o$,
\[
\bD g(\ba):= \Big(\frac{\de}{\de a_j} g_i(\ba)\Big)_{i\le n,\, j\le d-n-1} = - \bD_2\hat\bF(\ba,g(\ba))^{-1} \cdot \bD_1\hat\bF(\ba,g(\ba)).
\]
Since $\bD_1\hat\bF(\bfzero,g(\bfzero))=\bfzero \in \R^{n\times (d-n-1)}$, on $\|(\ba,g(\ba))\|< r$,
$$\|\bD_1\hat\bF(\ba,g(\ba))\|_{\op} \le c \sqrt{d}\cdot \|(\ba,g(\ba))\| < \tau/2$$ 
and therefore, by Eq.~\eqref{eq:Sigma_min_D2},
$$\|\bD g(\ba)\|_{\op} < 1.$$ 

Define $B = \big\{\ba\in \R^{d-n-1}: \|\ba\|< r/\sqrt2\big\}$. Obviously, $A\supset B$ for otherwise we would have some $\ba\in \partial A\cap B$ for which $\|(\ba,g(\ba))\|=r$ while $\|\ba\|\vee \|g(\ba)\|<r/\sqrt{2}$. 
Define $P(\by):=\by/\|\by\|$ and $\by(\ba)=\bz+\bK\cdot \ba+\bT\cdot g(\ba)$. We have that
\begin{equation}
\label{eq:F0supsetB}
\Big\{\by\in \S^{d-1}: \bF(\by)=\bzero\Big\}
\supset
P\Big(\Big\{
\by(\ba):\, \ba\in B
\Big\}\Big).
\end{equation}
We note that (using that $r<1/10)$)  for any $\ba_1,\ba_2\in B$, 
\begin{equation}\label{eq:inflation}
    \|P(\by(\ba_1))-P(\by(\ba_2))\|\ge \frac12\Big\|\frac{\by(\ba_1)}{\|\by(\ba_1)\|\wedge\|\by(\ba_2)\|}-\frac{\by(\ba_2)}{\|\by(\ba_1)\|\wedge\|\by(\ba_2)\|}\Big\| \ge\frac12   
    \|\ba_1-\ba_2\|\,.
\end{equation}
It thus follows that 
\begin{align*}
\Vol_{d-n-1}\left(P\Big(\Big\{
\by(\ba):\, \ba\in B
\Big\}\Big)\right) &\ge 2^{-(d-n-1)}\Vol_{d-n-1}(B)\\
& = \Big(\frac{r}{2\sqrt 2}\Big)^{d-n-1} \Vol_{d-n-1} \big(
\Ball^{d-n-1}(1)
\big)\, .
\end{align*}
Moreover, for any $\ba$ in the set on the right-hand side of \eqref{eq:F0supsetB}, if $\bw=P(\by(\ba))$ then $\|\bw-\bx\|\le 2r$ and thus $\sigma_{\min}(\bD\bF(\bw)\bU_{\bw})\le 2\tau+2rc\sqrt d\leq 3\tau$.
\end{proof}

\subsection{\label{sec:sigmamin_bruteforce} Proof of Proposition \ref{prop:sigmamin_MSS}}
We assume throughout that 
$n=d-1$, and therefore $\Vol_{d-1-n}(\Psi(\tau)) =:|\Psi(\tau)|$
is just the cardinality of the set $\Psi(\tau)$. Formally,
$\Psi(\infty)=\{\bx\in\S^{d-1}:\,\bF(\bx)=\bzero  \}$. By 
Eq. (2.2) of \cite{SubagConcentrationPoly},  
\begin{equation}
	\E \,|\Psi(\infty)| = \Vol_{d-1}(\S^{d-1}) 
	\prod_{i\leq n} \sqrt{\frac{p_i}{2\pi }}
	\E \Big(
    |\det(\bZ\bZ^{\top})|^{1/2} 
		\Big)=\prod_{i=1}^{d-1}p_i\le\pmax^{\frac{d}{2}}
	\,.
\end{equation}%
Using Eq.~\eqref{eq:Psiub2} in Lemma \ref{lem:smallsigmaE} 
and Corollary \ref{cor:improvedRV}, we obtain that, for some absolute constant $C$,
\[
\frac{\E |\Psi(\tau)|}{\E |\Psi(\infty)|}\le C d^{3}\tau^2 \,.
\]
The proposition follows by substituting $\tau=\sqrt{u}d^{-\frac32}\pmax^{-\frac{d}{4}}$.

\subsection{\label{sec:smallsigmaP} Proof of Proposition \ref{prop:sigmamin}}
 
Let $C_0$ and $C_1$ be the  
absolute constants as in Lemmas
\ref{lemma:BoundLip} and \ref{lem:smallsigmaE} and denote by $\calE$ the event that \eqref{eq:LipDf} occurs.  
For any $\tau<1/n$, from Lemmas \ref{lem:smallsigmaE} and \ref{lem:smallsigmaLB}  and Markov's inequality, 
\begin{align*}
\prob\left(\calE\cap\left\{\exists \bx\in\S^{d-1}:\,\bF(\bx)=\bzero,\,\sigma_{\min}(\bD\bF(\bx)\bU_{\bx})\in[\tau,2\tau] \right\}\right)\le M_{d,n}(\tau)\, ,
\end{align*}
where
\begin{equation}\label{eq:sigmaminbd}
M_{d,n}(\tau) :=\frac{\Vol_{d-1}(\S^{d-1})
	\prod_{i\leq n} \sqrt{\frac{p_i}{2\pi }}
	\E \Big(
    |\det(\bZ\bZ^{\top})|^{1/2}\Big)\cdot C_1 d^{5/2}3\tau \P\Big(\sigma_{\min}(\bZ)\leq 3\tau
		\Big)}{\Big(\frac{r}{2\sqrt 2}\Big)^{d-n-1} \Vol_{d-n-1} \big(
\Ball^{d-n-1}(1)
\big)}\,,
\end{equation}
 provided that that  
\begin{equation}
\label{eq:smallr}
r<\frac{1}{10},\quad\quad(1-r)^{\pmax}\tau-rC_0\pmax^2\sqrt{\log(\pmax)d}\ge \frac{\tau}{2}\,.
\end{equation}
To satisfy these two inequalities, we shall assume without loss of generality that $C_0>1$, and set 
\begin{equation}\label{eq:r}
r=
\frac{\tau}{4C_0\pmax^2\sqrt{\log(\pmax)d}}\,.    
\end{equation}

Recall that $\Vol_{k-1}(\S^{k-1})=2\pi^{k/2}/\Gamma(\frac{k}{2})$ and $\Vol_{k} (
\Ball^{k}(1)
)=\pi^{(k/2)}/\Gamma(k/2+1)$ and $\E|\det(\bZ\bZ^{\top})|^{1/2}=2^{n/2}\Gamma(d/2)/\Gamma((d-n)/2)$ (e.g., by the relation to chi-variables as in the proof of Lemma \ref{lem:smallsigmaE}).
Hence,
\[
\frac{\Vol_{d-1}(\S^{d-1})
	\prod_{i\leq n} \sqrt{\frac{1}{2\pi }}
	\E \Big(
    |\det(\bZ\bZ^{\top})|^{1/2}\Big)}{ \Vol_{d-n-1} \big(
\Ball^{d-n-1}(1)
\big)}\le c\sqrt{d-n}
\,,
\]
for some universal $c>0$. By Corollary \ref{cor:improvedRV}, for universal $c'>0$,
\[
\P\Big(\sigma_{\min}(\bZ)\leq 3\tau
		\Big)\le \Big(\frac{c'\tau}{\sqrt{d-1}-\sqrt{n-1}}
		\Big)^{d-n}\,.
\]
Combining the above and substituting $r$ as in \eqref{eq:r}, we have that 
$M_{d,n}(\tau)$, defined in Eq.~\eqref{eq:sigmaminbd}, is bounded as
\begin{align*}
M_{d,n}(\tau)&\le c_0^{d-n}\pmax^{\frac{n}{2}}
	d^{5/2}\sqrt{d-n}\tau\Big(\frac{\pmax^2\sqrt{\log(\pmax)d}}{\tau}\Big)^{d-n-1}  \Big(\frac{\tau}{\sqrt{d-1}-\sqrt{n-1}}
		\Big)^{d-n}\\
  &\le \tau^2\cdot c_0^{d-n}\pmax^{\frac{n}{2}}
	d^{3}\Big(\pmax^2\sqrt{\log(\pmax)d}\Big)^{d-n-1}=:\tau^2 A \,,
\end{align*}
for some universal constant $c_0>0$.

By a union bound, 
\begin{align*}
\prob\left(
\calE\cap\left\{\exists \bx\in\S^{d-1}:\,\bF(\bx)=\bzero,\,\sigma_{\min}(\bD\bF(\bx)\bU_{\bx})\le \tau  \right\}
\right)
\le \sum_{i=1}^{\infty}(2^{-i}\tau)^2 A < \tau^2 A \,.
\end{align*}
Since $\P(\calE) \ge 1-C_*\exp(-d/C_*)$ by Lemma \ref{lemma:BoundLip}, taking $\tau$ as in \eqref{eq:tau} with large enough $C'$, this completes the proof. \qed

\section{Good events}
\label{sec:goodevent}

Our  algorithms are correct on an event of high probability, which we now define precisely. 
Recall Definition \ref{def:Lip} and let
\begin{equation}\label{eq:Lip^(i)}
	\begin{aligned}
		\Lip^{(0)}(\bF)  &:= \Lip(\bF;\Ball^d(1)) \, ,
		\\ 
		\Lip^{(1)}(\bF)  &:= \Lip(\bD\bF;\Ball^d(1))\vee \Lip_{\perp}(\bD\bF;\S^{d-1})  \, ,
		\\ 
		\Lip^{(2)}(\bF) &:= \Lip(\nabla^2\bF;\Ball^d(1))\vee 	\Lip_{\perp}(\nabla^2\bF;\S^{d-1})
		\, .
	\end{aligned}
\end{equation}
With $\rho=1$ and $A_{\#}$ and $B_{\#}$  as in Theorem \ref{thm:ImprovedHessian}, define $A=4A_{\#}\vee B_{\#}$ so that $A(d\log d)^{1/2}\ge 4A_{\#}(d\log d)^{1/2}$ and $A(d\log d)^{1/2}\ge B_{\#}d^{1/2}$.
Let $C'$, $C_0$ and $C_1$  be the absolute constants from Proposition \ref{prop:sigmamin} and  Lemmas \ref{lemma:BoundLip} and \ref{rmk:MaxHessian}.
As in  \eqref{eq:tau}, we define
\begin{equation*}
        \tau:= \left(e^{C'd} \pmax^{d-\frac{3}{4}n-1}
    \log(\pmax)^{\frac14(d-n-1)}
	d^{\frac{1}{4}(d-n+5)}\right)^{-1}\,.
\end{equation*}

\begin{definition}[Good event for Hessian Descent] \label{def:good} 
Suppose that  $\pmax\le d^2$ and $\frac{d}{2}\le n\le d-A (d\log d)^{1/2}$. We denote by $\calE=\calE(d,n)$ the event that: 
\begin{enumerate}
    \item \label{enum:good1} $\displaystyle \max_{\bx\in\Ball^d(1)}\|\bF(\bx)\|_{2}\le C_1\sqrt{d\log(\pmax)}\,$.
    \item \label{enum:good2} $\displaystyle \Lip^{(i)}(\bF)\le C_0\pmax^{1+i}\sqrt{d\log(\pmax)}\,$, for $i=0,1,2$.
    \item \label{enum:good3} $\displaystyle 
		\min\Big\{\sigma_{\min}(\bD\bF(\bx)\bU_{\bx}):\,\bx\in\S^{d-1},\,\bF(\bx)=0 \Big\} \ge \tau$.
    \item \label{enum:good4} $\displaystyle \forall \bx\in\S^{d-1}: \;\;
\lambda_{1}(\bcH(\bx))\le - \frac{1}{10}\sqrt{(d-n) H(\bx)}$. 
    \item \label{enum:good5} $\displaystyle \forall \bx\in\Ball^d(1,1+1/\pmax),\,2\le k\le 4: \;\;
\big\|\nabla^k H(\bx)\big\|_{\op} \le 9 C_1 d \pmax^k \log(\pmax) \, $.
 \item \label{enum:good6} For $\bx^0=(1,0,\ldots,0)$, $H(\bx^0)\le d$. 
\end{enumerate}
\end{definition}

\begin{lemma}\label{lem:goodeventprob}
    Assume the setting of Definition \ref{def:good}. Then, for some universal constant $C>0$,
    \begin{equation}\label{eq:goodeventprob}
    \P\left(\calE(d,n)\right) \ge 1-C\exp(-d/C)\,.
    \end{equation}
\end{lemma}
\begin{proof}
    It is enough to prove the same bound as in \eqref{eq:goodeventprob} for each of the events in Items \ref{enum:good1}-\ref{enum:good6} of Definition \ref{def:good} separately. For Items \ref{enum:good1}-\ref{enum:good5} the bound follows from Lemma \ref{rmk:MaxHessian},
    Lemma \ref{lemma:BoundLip},  Proposition \ref{prop:sigmamin} and Theorem \ref{thm:ImprovedHessian}, where for Item \ref{enum:good5} we used the fact that for $t>1$,
    $
    \big\|\nabla^4 H(t \bx)\big\|_{\op}\le t^{2\pmax} \big\|\nabla^4 H(\bx)\big\|_{\op}
    $, 
    since $\nabla^k F_i(t \bx)=t^{p_i-k}\nabla^k F_i(\bx)$.
    The bound for Item \ref{enum:good6} is trivial. 
\end{proof}

Recall the definition \eqref{eq:Ndknet} of the net $\calN_{d,k}$. 
For any $(\bx,k)$ with $\bx\in \calN_{d,k}$, define
$$\tilde\bx:=\argmin\big\{\|\by\|:\,\by\in\bx_k+\{0,2^{-k}\}^d\big\}$$
and
\[
\hat \calN_{d,k}:=\big\{ \tilde \bx/\|\tilde \bx\|:\,\bx\in\calN_{d,k},\,(\bx+[0,2^{-k}]^d)\cap\S^{d-1}\neq \varnothing\big\}\,.
\]
Below $\tilde C_0$ is a universal constant which will be  determined in the proof of Lemma \ref{lem:goodevent2}.

\begin{definition}[Good event for Multi-Scale Search] \label{def:good2} 
	Suppose that  $n=d-1$. Let  $0<u_2\le 1\le u_1$, $u_3,\delta>0$ be real numbers and $k_0\ge0$ an integer number. 
	Denote by $\calE=\calE(d,u_1,u_2,u_3,k_0)$ the event that: 
	\begin{enumerate}
		\item \label{enum:good_plarge_1} $\displaystyle \Lip^{(i)}(\bF)\le u_1 C_0\pmax^{1+i}\sqrt{d\log(\pmax)}\,$, for $i=0,1$.
		\item \label{enum:good_plarge_2} $\displaystyle \max_{\bx\in\Ball^d(1)}\|\bD \bF(\bx)\|_{\op}  \le u_1 C_0\pmax\sqrt{d\log(\pmax)}$.
		\item \label{enum:good_plarge_3} There are no points $\balpha\in\S^{d-1}$ such that $\bF(\balpha)=\bfzero$ and 
		$\displaystyle 
		\sigma_{\min}(\bD\bF(\balpha)\bU_{\balpha}) < \sqrt{u_2} d^{-\frac32}\pmax^{-\frac{d}{4}}$.
		\item \label{enum:good_plarge_4} For any $0\le k<k_0$, setting $T_0:= \tilde C_0\pmax\sqrt{d\log(\pmax)}$,
		\[
		\left|\left\{\hat\bx\in \hat\calN_{d,k}:\, \|\bF(\hat\bx)\|\le \sqrt{d}2^{-k} \cdot  u_1 C_0\pmax\sqrt{d\log(\pmax)} \right\} \right| \le u_3(u_1 T_0)^{d-1}\,.
		\]
	\end{enumerate}
\end{definition}

\begin{lemma}\label{lem:goodevent2}
	For the good event of Definition \ref{def:good2}, for some universal constant $C>0$,
	\[
	\P(\calE(d,u_1,u_2))\ge 1-2\exp(-d(u_1-1)^2/C)- C u_2 - k_0/u_3.
	\]
\end{lemma}

\begin{proof}
Note that 
\begin{align*}
	\Lip(\bF;\Ball^d(1))&\le \max_{\bx\in\Ball^d(1)}\|\nabla \bF(\bx)\|_{\op} = \max_{\bx\in\Ball^d(1)}\max_{\bv\in\S^{n-1},\bu\in\S^{d-1}}\<\nabla \bF(\bx),\bv\otimes\bu\>\,,\\
	\Lip(\bD\bF;\Ball^d(1))&\le \max_{\bx\in\Ball^d(1)}\|\nabla^2 \bF(\bx)\|_{\op}=\max_{\bx\in\Ball^d(1)}\max_{\bv\in\S^{n-1},\bu\in\S^{d-1}}\<\nabla^2 \bF(\bx),\bv\otimes\bu^{\otimes2}\>\,.
\end{align*}
Using that
\[
\<\nabla F_i(\bx),\bu\>=\<\bG^{(i)},\bu\otimes\bx^{\otimes p_i-1}+\bx\otimes\bu\otimes\bx^{\otimes p_i-2}+\cdots+\bx^{\otimes p_i-1} \otimes\bu\> \,,
\]
we have that the variance of $\<\nabla \bF(\bx),\bv\otimes\bu\>$ is bounded by $\pmax^2$ for $\bx,\bv,\bu$ as above. Similarly, the variance of $\<\nabla^2 \bF(\bx),\bv\otimes\bu^{\otimes2}\>$ is bounded by $\pmax^4$.
Hence, from Lemma \ref{lemma:BoundLip} and the Borell-TIS inequality, 
	\begin{align*}
	\P\Big(\Lip(\bF;\Ball^d(1)) > u_1 C_0\pmax\sqrt{d\log(\pmax)}\Big)&\leq \exp\big(-C_0^2d\log(\pmax)(u_1-1)^2/2\big)\,,\\
	\P\Big(\Lip(\bD\bF;\Ball^d(1)) > u_1 C_0\pmax^{2}\sqrt{d\log(\pmax)}\Big)&\leq \exp\big(-C_0^2d\log(\pmax)(u_1-1)^2/2\big)\,.
	\end{align*}
Similarly to the proof of Lemma \ref{lemma:BoundLip}, for $\bx_1,\bx_2\in\S^{d-1}$, writing $\bU_1:=\bU_{\bx_1}$, $\bU_2:=\bU_{\bx_1,\bx_2}$,
\[
\|\bD\bF(\bx_1)\bU_1-\bD\bF(\bx_2)\bU_2\|_{\op}\le \|\bD\bF(\bx_1)-\bD\bF(\bx_2)\|_{\op}+\|\bD\bF(\bx_2)\|_{\op}\|\bx_1-\bx_2\|\,.
\]
By a union bound,
\begin{align*}
	\P\Big(\Lip_{\perp}(\bD\bF;\S^{d-1}) > u_1 C_0\pmax^{2}\sqrt{d\log(\pmax)}\Big)&\leq 2\exp\big(-C_0^2d\log(\pmax)(u_1-1)^2/8\big)\,.
\end{align*}
Since at the origin $\|\bD \bF(\bzero)\|_{\op}=0$, the bound of Item \ref{enum:good_plarge_2} follows from that of Item \ref{enum:good_plarge_1}.
	
	For Item \ref{enum:good_plarge_3}, from Proposition \ref{prop:sigmamin_MSS} and Markov's inequality, for some universal $C>0$,
	\[
	\P\Big(\exists \bx\in\S^{d-1}:\,\bF(\bx)=\bzero,\,\sigma_{\min}(\bD\bF(\bx)\bU_{\bx})\le \sqrt{u_2}d^{-\frac32}\pmax^{-\frac{d}{4}}
	\Big) \le C u_2\,.       
	\]
	
	Finally, we deal with Item \ref{enum:good_plarge_4}. To this end we first bound the size of the set $|\hat \calN_{d,k}|$. We note that the number of spherical caps of Euclidean diameter $2^{-k-1}$ that can be packed into the unit sphere is bounded by $(2^{k})^{d-1}\Vol(\S^{d-1})/\Vol(\Ball^{d-1})\le c\sqrt{d}2^{k(d-1)}$, for some universal constant $c$. Hence, the sphere can be covered the same number of spherical caps of Euclidean diameter $2^{-k}$, or by $3^d\cdot c\sqrt{d}2^{k(d-1)}$ set of the form $\bx+[0,2^{-k}]^d$ with $\bx\in\calN_{d,k}$.
	 
	Using this, for $X\sim\normal(0,\bI_{d-1})$,
	\begin{align*}
		&\E	\left|\left\{\hat\bx\in \hat\calN_{d,k}:\, \|\bF(\hat\bx)\|\le \sqrt{d}2^{-k} \cdot  u_1 C_0\pmax\sqrt{d\log(\pmax)} \right\} \right| \\
		&\le c \sqrt{d}3^d2^{k(d-1)}\cdot  \P\Big(
		\|X\|\le \sqrt{d}2^{-k} \cdot  u_1 C_0\pmax\sqrt{d\log(\pmax)}
		\Big)\\
		&\le c \sqrt{d}3^d2^{k(d-1)} \cdot(2\pi)^{-\frac{d-1}{2}}\mbox{Vol}(\Ball^{d-1}(1))\big(\sqrt{d}2^{-k} \cdot  u_1 C_0\pmax\sqrt{d\log(\pmax)}\big)^{d-1}\\
		&\le \big(  u_1 \tilde C_0\pmax\sqrt{d\log(\pmax)}\big)^{d-1} = (u_1 T_0)^{d-1}
		\,,
	\end{align*}
where $\tilde C_0>0$ is an appropriate universal constant and we used that, by Stirling's approximation
\[
\mbox{Vol}(\Ball^{d-1}(1))=\frac{\pi^{\frac{d-1}{2}}}{\Gamma((d+1)/2)}=(1+o_d(1))\frac{\pi^{\frac{d-1}{2}}2^{\frac{d}{2}}e^{\frac{d+1}{2}}}{\sqrt{2\pi}(d+1)^{\frac{d}{2}}}\,.
\]

Hence, by Markov's inequality, the probability that the event in Item \ref{enum:good_plarge_4} does not occur is bounded by
$k_0 /u_3$.
\end{proof}

\section{\label{sec:Newton}Sufficient conditions for approximate solutions}

Throughout the section, we consider the Newton iteration introduced in Section 
\ref{subsec:apxsol}, and denoted formally by 
\begin{align}\label{eq:Newton-Repeat-0}
\bx^0=\bx\, ,\;\;\;\; \bx^{k+1}=\NM(\bx^k)\, ,
\end{align}
for a general initialization $\bx\in \S^{d-1}$.
More explicitly, we have $\NM(\bx^k)= (\bx^k+\bv^k)/\|\bx^k+\bv^k\|$, where $\bv^k$ solves
\begin{equation}
\label{eq:Newton-Repeat}
\bF(\bx^k)+\bD\bF(\bx^k)\bv^k=\bzero \hspace{.4cm}\mbox{subject to}\hspace{.4cm}\bv^k\perp \bx^k,\,\bv^k\perp\ker(\bD\bF(\bx^k))\,.
\end{equation}
In this section we will always work on the event that $\sigma_{\min}(\bD\bF(\bx^k)|_{\Ts_{\bx}})>0$, so that a solution to the above indeed exits. 
Assuming that there exists at least one non-zero solution to $\bF(\bx)=\bzero$, 
we will denote by
    \[
    \balpha^k:=\argmin\big\{ \|\bx^k-\balpha\|:\,\balpha\in\S^{d-1},\, \bF(\balpha)=\bzero \big\}
    \]
    the closest solution to $\bx^k$ (if more than one solution with minimal distance exists, choose one arbitrarily). 
We will prove sufficient conditions for a point $\bx\in\S^{d-1}$ to be an approximate solution. We shall treat the two cases $\pmax \le d^2$ and $\pmax >d^2$ separately. In both we will rely on the following.

\begin{proposition}[Quadratic convergence]\label{prop:Newton_explicit_init}
Assume that $\bF(\bx)=\bzero$ has at least one non-zero solution. 
Let $L:=\Lip(\bD\bF;\S^{d-1})\vee \Lip_{\perp}(\bD\bF;\S^{d-1})$ as defined in \eqref{eq:LipD} and \eqref{eq:LipDF_perp},  $M:=\sup_{\bx\in\S^{d-1}}\|\bD\bF(\bx)\bx\|$ and $T:=\inf_{k\ge0} \sigma_{\min}(\bD\bF(\balpha^k)|_{\Ts_{\balpha^k}})$. Assume that $T>0$  and
set
$B:= 3+(8L+4M)/T $.
If the initial point $\bx^0\in\S^{d-1}$ satisfies $\|\bx^0-\balpha^0\|\le 1/B$,
then 
\begin{equation}\label{eq:quadconv}
  \forall k\geq 0:\quad \|\bx^{k+1}-\balpha^{k+1}\|\le \|\bx^{k+1}-\balpha^{k}\|\le B \|\bx^{k}-\balpha^{k}\|^2\,.
\end{equation}
\end{proposition}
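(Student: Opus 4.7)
The proof is by induction on $k$, maintaining the invariant $\|\bx^k-\balpha^k\|\le 1/B$ throughout; the base case is the hypothesis, and once the second inequality of (\ref{eq:quadconv}) is proved the invariant propagates, since $B\|\bx^k-\balpha^k\|^2\le \|\bx^k-\balpha^k\|\cdot B\cdot (1/B)\le 1/B$. The first inequality of (\ref{eq:quadconv}) is immediate from the definition of $\balpha^{k+1}$ as the closest solution to $\bx^{k+1}$, so the task reduces to proving $\|\bx^{k+1}-\balpha^k\|\le B\|\bx^k-\balpha^k\|^2$. Drop the index $k$ and set $\bx,\balpha,\bv,\bu=\balpha-\bx,\bx^+=\bx^{k+1}$.

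The key geometric decomposition is as follows. Using $\|\bx\|=\|\balpha\|=1$, $\langle\bu,\bx\rangle=-\|\bu\|^2/2$, so $\bu=\bu_t-\tfrac{1}{2}\|\bu\|^2\bx$ with $\bu_t\in\Ts_\bx$. Split $\bu_t=\bu_\parallel+\bu_\perp$ along the orthogonal decomposition $\Ts_\bx=\mathrm{row}(\bD_\bx)\oplus\ker(\bD_\bx)$, where $\bD_\bx:=\bD\bF(\bx)|_{\Ts_\bx}$. The Newton step satisfies $\bv\in\mathrm{row}(\bD_\bx)$, hence
$$
\bx+\bv-\balpha=(\bv-\bu_\parallel)-\bu_\perp+\tfrac{1}{2}\|\bu\|^2\bx.
$$
Since $\bv\perp\bx$, $\|\bx+\bv\|\ge 1$ and a short argument gives $\|\bx^+-\balpha\|\le 2\|\bx+\bv-\balpha\|$; it therefore suffices to control each of the three pieces above by $O((L+M)/T\cdot\|\bu\|^2)$.

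The first piece follows a standard Newton-residual argument. Combining $\bD\bF(\bx)\bv=-\bF(\bx)$ with the Taylor expansion of $\bF$ along the spherical geodesic from $\bx$ to $\balpha$ (so that only sphere-Lipschitz data is used: $\nabla^2\bF[\dot\gamma,\dot\gamma]$ is controlled by $L$, and the acceleration term $\bD\bF(\gamma)\ddot\gamma=-\bD\bF(\gamma)\gamma$ is controlled by $M$), one obtains $\bD\bF(\bx)(\bv-\bu)=O((L+M)\|\bu\|^2)$. Projecting onto $\Ts_\bx$ absorbs the extra $\tfrac{1}{2}\|\bu\|^2\bD\bF(\bx)\bx$ term of norm at most $\tfrac{M}{2}\|\bu\|^2$, yielding $\bD_\bx(\bv-\bu_\parallel)$ of norm $O((L+M)\|\bu\|^2)$. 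Weyl's inequality together with $\Lip_{\perp}(\bD\bF;\S^{d-1})\le L$ gives $\sigma_{\min}(\bD_\bx)\ge T-L\|\bu\|\ge T/2$ under the induction hypothesis (since $B\ge 2L/T$); combined with $\bv-\bu_\parallel\in\mathrm{row}(\bD_\bx)$ this produces $\|\bv-\bu_\parallel\|\le C(L+M)/T\cdot\|\bu\|^2$. The third piece contributes $\tfrac{1}{2}\|\bu\|^2$ trivially.

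The main obstacle is the kernel component $\bu_\perp$, which the Newton equation does not constrain. Here we crucially use that $\balpha$ is the closest solution to $\bx$: the tangent space to the solution manifold $S=\bF^{-1}(0)\cap\S^{d-1}$ at $\balpha$ equals $\ker(\bD_\balpha)$, so $\bu\perp\ker(\bD_\balpha)$. For any unit $\bn\in\ker(\bD_\bx)$, transport via the rotation $\bR_{\bx,\balpha}$ to $\Ts_\balpha$: since $\bn\in\ker(\bD\bF(\bx))$, the bound $\Lip_{\perp}\le L$ gives
$$
\|\bD\bF(\balpha)\bR_{\bx,\balpha}\bn\|=\|(\bD\bF(\balpha)\bU_{\bx,\balpha}-\bD\bF(\bx)\bU_\bx)\bU_\bx^\top\bn\|\le L\|\bu\|.
$$
Decomposing $\bR_{\bx,\balpha}\bn\in\Ts_\balpha$ along $\mathrm{row}(\bD_\balpha)\oplus\ker(\bD_\balpha)$, the row component has norm at most $L\|\bu\|/\sigma_{\min}(\bD_\balpha)\le L\|\bu\|/T$ (a direct SVD/Wedin-type argument as in Corollary \ref{coro:Wedin}), so $\bR_{\bx,\balpha}\bn$ lies within $L\|\bu\|/T$ of a point $\bn^*\in\ker(\bD_\balpha)$. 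Since $\|\bR_{\bx,\balpha}-\bI\|_\op=\|\bu\|$, we get $\|\bn-\bn^*\|\le(1+L/T)\|\bu\|$; and from $\langle\bu,\bn^*\rangle=0$,
$$
|\langle\bu,\bn\rangle|=|\langle\bu,\bn-\bn^*\rangle|\le(1+L/T)\|\bu\|^2.
$$
Supremizing over unit $\bn\in\ker(\bD_\bx)$ yields $\|\bu_\perp\|\le(1+L/T)\|\bu\|^2$. Summing the three pieces and doubling for the normalization completes the proof, with the slack in constants absorbed into $B=3+(8L+4M)/T$.
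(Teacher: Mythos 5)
Your proposal follows the same overall architecture as the paper's proof: the same induction with invariant $\|\bx^k-\balpha^k\|\le 1/B$, the same reduction to bounding $\|\bx^k+\bv^k-\balpha^k\|$, the same three-way orthogonal decomposition of $\Ts_{\bx}\oplus\mathrm{span}(\bx)$ into $\mathrm{row}(\bD_\bx)$, $\ker(\bD_\bx)$, and $\mathrm{span}(\bx)$, the same Newton-residual/Taylor argument (using $L$ and $M$) for the row piece, the same Weyl argument to get $\sigma_{\min}(\bD_\bx)\ge T/2$, and the same appeal to the nearest-point optimality condition for the kernel piece. The one place where your route is genuinely different is the kernel estimate itself: you work directly with the chord $\bu=\balpha-\bx$, observe $\bu\perp\ker(\bD_\balpha)$ from the first-order optimality condition (stated by fiat, whereas the paper carefully derives the equivalent $\dot\bx^k(t_k)\perp V(\balpha^k)$ via a local implicit-function-theorem perturbation and a contradiction), then transport a test vector $\bn\in\ker(\bD_\bx)$ via $\bR_{\bx,\balpha}$ and use $\Lip_\perp$ and a Wedin-type bound to bring it close to $\ker(\bD_\balpha)$. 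The paper instead integrates $P_{V(\bx)}\dot\bx(t)$ along the geodesic, splits off the endpoint velocity, and applies Wedin (Theorem~\ref{thm:Wedin}) to $\|P_{\ker(\bD\bF(\bx))}(\id-P_{\ker(\bD\bF(\balpha))})\|$. Your version is cleaner and gives a tighter kernel constant, $(1+L/T)\|\bu\|^2$ versus the paper's $(2+4L/T)\|\bu\|^2$.

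The one concrete slip is in the normalization step. You use the crude inequality $\|\bx^+-\balpha\|\le 2\|\bx+\bv-\balpha\|$ (which indeed always holds when $\|\bx+\bv\|\ge 1$), whereas the paper notes that once $\|\bx+\bv-\balpha\|\le B\|\bu\|^2\le 1/B<1$ is established, one also has $\langle\balpha,\bx+\bv\rangle\ge 0$, and then the normalization onto $\S^{d-1}$ is \emph{non-expansive}, so the factor of 1 suffices. Your factor of 2 matters: adding up your three pieces gives roughly $\bigl(\tfrac32+\tfrac{5L+4M}{T}\bigr)\|\bu\|^2$, and doubling yields $\bigl(3+\tfrac{10L+8M}{T}\bigr)\|\bu\|^2$, which strictly exceeds the stated $B=3+(8L+4M)/T$ for all $L,M>0$; the assertion that the slack is ``absorbed into $B$'' is therefore not correct as written. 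The fix is immediate — replace the factor of 2 by the paper's contractive-normalization observation, under which your pre-normalization bound $\bigl(\tfrac32+\tfrac{5L+4M}{T}\bigr)\|\bu\|^2$ sits comfortably inside $B\|\bu\|^2$ — but as written the constant bookkeeping does not close.
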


\begin{proof}
    The first inequality in \eqref{eq:quadconv} follows by definition. We claim that to prove the proposition, it will be enough to show that
    \begin{equation}\label{eq:Newton_basic_estimate}
        \forall k\geq0,\quad   \|\bx^k+\bv^k-\balpha^k\| \le B \|\bx^{k}-\balpha^{k}\|^2\,.
    \end{equation}
    Indeed,  this will imply that $\|\bx^k+\bv^k-\balpha^k\|\le \|\balpha^k\|= 1$ and  $\langle \balpha^k, \bx^k+\bv^k\rangle\ge 0$, from which one can easily verify that 
    \[
     \|\bx^{k+1}-\balpha^k\|\le \|\bx^k+\bv^k-\balpha^k\|\,.
    \]

     To prove \eqref{eq:Newton_basic_estimate} by induction, we let $k\ge 0$ be an arbitrary integer number. Assuming that \eqref{eq:Newton_basic_estimate} holds for any $0\le i\le k-1$, we will show that it holds for $k$. Note that by assumption, $\|\bx^0-\balpha^0\|\le 1/B$. Since we assume \eqref{eq:Newton_basic_estimate} for $0\le i\le k-1$, we have from the above that for $0\le i\le k$, $\|\bx^i-\balpha^i\|$ is a decreasing sequence and therefore also $\|\bx^k-\balpha^k\|\le 1/B$.

   Denote $V(\bx):=\ker(\bD\bF(\bx)\big|_{\Ts_{\bx}})\subset\Ts_{\bx}$ and let $V(\bx)^{\perp}:=\{\bu\in \Ts_{\bx}:\,\bu\perp V(\bx) \}$ be its orthogonal complement in $\Ts_{\bx}$. Denote by $P_{V(\bx^k)}$, $P^{\perp}_{V(\bx^k)}$ and $P_{\bx^k}$ the orthogonal projection matrices onto $V(\bx^k)$, $V(\bx^k)^{\perp}$ and $\bx^k$.
   We claim that
   \begin{align*}
        \sigma_{\min}(\bD\bF(\bx^k)\big|_{\Ts_{\bx^k}}) &= \sigma_{\min}(\bD\bF(\bx^k)\bU_{\bx^k})=\sigma_{\min}(\bD\bF(\bx^k)P^{\perp}_{V(\bx^k)})\,,\\
       \sigma_{\min}(\bD\bF(\balpha^k)\big|_{\Ts_{\balpha^k}}) &=\sigma_{\min}(\bD\bF(\balpha^k)\bU_{\balpha^k})=\sigma_{\min}(\bD\bF(\balpha^k))\,.
    \end{align*}
    The first equality in both lines follows by definition. From homogeneity of $F_i(\bx)$ and since $\bF(\balpha^k)=\bzero$, the rows of $\bD\bF(\balpha^k)$ are orthogonal to $\balpha^k$. The second equality in the second line follows from this. The second equality in the first line holds whenever $\sigma_{\min}(\bD\bF(\bx^k)\bU_{\bx^k})>0$. Indeed, using $\|\bx^k-\balpha^k\|\le \frac{1}{B}$,
    $\sigma_{\min}(\bD\bF(\bx^k)\bU_{\bx^k})\ge T-L\|\bx^k-\balpha^k\|\ge T/2$. Note that by assumption,
    $\sigma_{\min}(\bD\bF(\balpha^k))\ge T$. 
    
    Hence, 
    to prove \eqref{eq:Newton_basic_estimate}, using that $P_{V(\bx^k)}+P^{\perp}_{V(\bx^k)}+P_{\bx^k}=\id$ it will be sufficient to  show that
    \begin{align}
    \|P^{\perp}_{V(\bx^k)}(\bx^k+\bv^k-\balpha^k)\| & \le \frac{2(L+M)}{\sigma_{\min}(\bD\bF(\bx^k)P^{\perp}_{V(\bx^k)})}\|\bx^k-\balpha^k\|^2\,,\label{eq:Newton_estimate1}\\ 
    \|P_{V(\bx^k)}(\bx^k+\bv^k-\balpha^k)\| & \le \Big( 2+\frac{4L}{\sigma_{\min}(\bD\bF(\balpha^k))}\Big)\|\bx^k-\balpha^k\|^2 \,,\label{eq:Newton_estimate2}\\
    \|P_{\bx^k}(\bx^k+\bv^k-\balpha^k)\| & \le \frac12\|\bx^k-\balpha^k\|^2\,.\label{eq:Newton_estimate3}
    \end{align}

     Denote by $d(\bx,\by)=\arccos(\<\bx,\by\>)$ be the geodesic distance on $\S^{d-1}$ and let $t_k:=d(\balpha^k,\bx^k)$.
     For $t\in[0,t_k]$, let $\bx^k(t)\in\S^{d-1}$ be a geodesic from $\bx^k$ to $\balpha^k$ with constant speed $1$ (so that, in particular, $\bx^k(0)=\bx^k$ and $\bx^k(t_k)=\balpha^k$).
    Using \eqref{eq:Newton-Repeat} we write
    \begin{equation}
    \label{eq:Falphak}
    \begin{aligned}
        0=\bF(\balpha^k)&=\bF(\bx^k)+\int_0^{t_k}\frac{\de \bF(\bx^k(t))}{\de t} \de t\\
        &= -\bD\bF(\bx^k)\bv^k + \int_0^{t_k} \bD\bF(\bx^k(t))\dot{\bx}^k(t) \de t\\
        &= -\bD\bF(\bx^k)P^{\perp}_{V(\bx^k)}\bv^k +\bD\bF(\bx^k)P^{\perp}_{V(\bx^k)}(\balpha^k-\bx^k)\\
        &+\underbrace{\int_0^{t_k} \big[\bD\bF(\bx^k(t))-\bD\bF(\bx^k)\big]\dot{\bx}^k(t) \de t}_{=:\bR_1}\\
        &+\underbrace{\bD\bF(\bx^k)P_{V(\bx^k)}(\balpha^k-\bx^k)}_{=:\bR_2}+\underbrace{\bD\bF(\bx^k)P_{\bx^k}(\balpha^k-\bx^k)}_{=:\bR_3}\,,
    \end{aligned}
    \end{equation}
    where integrals of vector valued functions are carried out elementwise.

    By the definition of $V(\bx)$, $\bD\bF(\bx^k)P_{V(\bx^k)}$ is the zero matrix and $\bR_2=0$. Moreover,
    \[
    \|\bR_1\|\le t_k \sup_{t\le t_k} \|\bD\bF(\bx^k(t))-\bD\bF(\bx^k)\|_{\op} \le d(\balpha^k,\bx^k) \cdot L \|\bx^k-\balpha^k\|\le 2 L \|\bx^k-\balpha^k\|^2
    \]
    and
    \[
    \|\bR_3\|\le \|\bD\bF(\bx^k)\bx^k\| \cdot |\langle\bx^k,\bx^k-\balpha^k\rangle| = \|\bD\bF(\bx^k)\bx^k\| \cdot \| \bx^k-\balpha^k\|^2 /2\le M \| \bx^k-\balpha^k\|^2\,.
    \]

     Continuing from \eqref{eq:Falphak} we have that
     \[
     0=\bD\bF(\bx^k)P^{\perp}_{V(\bx^k)}(\balpha^k-\bx^k-\bv^k)+\bR\,,
     \]
    where $\bR:=\bR_1+\bR_2+\bR_3$ and $\|\bR\|\le2(L+M)\|\bx^k-\balpha^k\|^2$, from which 
    \eqref{eq:Newton_estimate1} follows.

    Next, we prove \eqref{eq:Newton_estimate2}. Since we assume that $T>0$, note that by the implicit function theorem, for any $\bu\in V(\balpha^k)$ there exist small $\eps>0$ and a function $\bvphi:(-\eps,\eps)\to\R^d$ such that $\|\bvphi(t)\|=O(t^2)$ and, for any $t\in(-\eps,\eps)$, 
    \[
    \bgamma(t):=\frac{\balpha^k+t\bu+\bvphi(t)}{\|\balpha^k+t\bu+\bvphi(t)\|}\in \S^{d-1}
    \]
    is a solution (i.e., $\bF(\bgamma(t))=\bzero$). If there existed a vector $\bu\in V(\balpha^k)$ such that $\langle \dot \bx^k(t_k),\bu\rangle<0$, we would have $\|\bgamma(t)-\bx^k\|<\|\balpha^k-\bx^k\|$ for sufficiently small $t>0$. We conclude that no such vector exists, i.e., that $\dot \bx^k(t_k)\perp V(\balpha^k)$ or, equivalently, $P_{V(\balpha^k)}\dot \bx^k(t_k)=0$. From homogeneity of $F_i(\bx)$, at the solution $\balpha^k$ we have that $\ker(\bD\bF(\balpha^k))=V(\balpha^k)\oplus{\rm span}\{\balpha^k\}$. Since $\dot \bx^k(t_k)\perp \balpha^k$, we also have that $P_{\ker(\bD\bF(\balpha^k))}\dot \bx^k(t_k)=0$.

   Using that $\bv^k\perp V(\bx^k)$, we write
    \begin{align*}
        P_{V(\bx^k)}(\bx^k+\bv^k-\balpha^k) &= P_{V(\bx^k)}(\bx^k-\balpha^k) = - \int_0^{t_k} P_{V(\bx^k)}\dot\bx^k(t)\de t \\
        &= - \int_0^{t_k} P_{V(\bx^k)}(\dot\bx^k(t)-\dot\bx^k(t_k))\de t -t_k P_{V(\bx^k)} \dot\bx^k(t_k)\,.
    \end{align*}
    Noting that $\|\dot\bx^k(t)-\dot\bx^k(t_k)\|$ is maximal over $[0,t_k]$ for $t=0$ and recalling that $\|\dot \bx^k(t_k)\|=1$,
    \begin{align*}
        \|P_{V(\bx^k)}(\bx^k+\bv^k-\balpha^k)\| &\le t_k \|\dot\bx^k(0)-\dot\bx^k(t_k)\|+t_k\|P_{V(\bx^k)} (\id - P_{\ker(\bD\bF(\balpha^k))})\|_{\op}\,.
    \end{align*}
    Since $t_k\leq 2\|\bx^k-\balpha^k\|$, the first term is  bounded by 
    \[
      t_k  \|\dot\bx^k(0)-\dot\bx^k(t_k)\| =  t_k\|\bx^k(0)-\bx^k(t_k)\|\le 2\|\bx^k-\balpha^k\|^2 \,.
    \]

    For any $\bx\in\S^{d-1}$, $V(\bx)\subset \ker(\bD\bF(\bx))$. Thus, 
    \begin{equation}\label{eq:Pker}
    \|P_{V(\bx^k)} (\id - P_{\ker(\bD\bF(\balpha^k))})\|_{\op}
    \leq 
    \|P_{\ker(\bD\bF(\bx^k))} (\id - P_{\ker(\bD\bF(\balpha^k))})\|_{\op}
    \,.
    \end{equation}
    Applying Theorem \ref{thm:Wedin} (specifically, Eq.~\eqref{eq:WedinFirst}) with $A_0=\bD\bF(\bx^k)$, $A_1=\bD\bF(\balpha^k)$ and $k(a)={\rm rank}(A_a)$ so that $\Delta=\sigma_{\min}(\bD\bF(\balpha^k))>0$ (by Eq.~\eqref{eq:sigmaminbd3}), we obtain that the right-hand side of \eqref{eq:Pker} 
    is bounded as follows:
    \begin{align*}
    \|P_{\ker(\bD\bF(\bx^k))} (\id - P_{\ker(\bD\bF(\balpha^k))})\|_{\op}&\le 2\frac{\|\bD\bF(\bx^k)-\bD\bF(\balpha^k)\|}{\sigma_{\min}(\bD\bF(\balpha^k))}\\
    &\le \frac{2L\|\bx^k-\balpha^k\|}{\sigma_{\min}(\bD\bF(\balpha^k))}
    \,.
    \end{align*}
    Combining the above and using again that $t_k\leq 2\|\bx^k-\balpha^k\|$, we obtain \eqref{eq:Newton_estimate2}.

    Finally, \eqref{eq:Newton_estimate3} follows since
    \[
    \|P_{\bx^k}(\bx^k+\bv^k-\balpha^k)\|=\|P_{\bx^k}(\bx^k-\balpha^k)\|=1-\langle\bx^k,\balpha^k\rangle = \frac12\|\bx^k-\balpha^k\|^2\,.
    \]
\end{proof}

\begin{corollary}
\label{cor:Newton_x0bd}
Assume that $\bF(\bx)=\bzero$ has at least one non-zero solution. 
Define $L$ and $M$ as in Proposition \ref{prop:Newton_explicit_init} and 
\[
T:= \sigma_{\min}(\bD\bF(\bx^0)\big|_{\Ts_{\bx^0}})-5L\|\bx^0-\balpha^0\|\,.
\]
Assume that $T>0$ and set $B:= 3+(8L+4M)/T$.
If the initial point $\bx^0\in\S^{d-1}$ satisfies $\|\bx^0-\balpha^0\|\le 1/4B$,
then 
\begin{equation}\label{eq:quadconv2}
  \forall k\geq 0:\quad \|\bx^{k+1}-\balpha^{k+1}\|\le \|\bx^{k+1}-\balpha^{k}\|\le B \|\bx^{k}-\balpha^{k}\|^2\,.
\end{equation}    
\end{corollary}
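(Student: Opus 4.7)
The plan is to reduce Corollary~\ref{cor:Newton_x0bd} to Proposition~\ref{prop:Newton_explicit_init} by establishing inductively along the Newton trajectory that the infimum $\inf_{k\ge 0}\sigma_{\min}(\bD\bF(\balpha^k)|_{\Ts_{\balpha^k}})$ demanded by the Proposition is at least the $T$ defined in the corollary. The key analytic ingredient is the Lipschitz-type bound
\begin{equation*}
\big|\sigma_{\min}(\bD\bF(\bx_1)|_{\Ts_{\bx_1}})-\sigma_{\min}(\bD\bF(\bx_2)|_{\Ts_{\bx_2}})\big|\le L\|\bx_1-\bx_2\|,\qquad \bx_1,\bx_2\in\S^{d-1},
\end{equation*}
which follows from the definition of $\Lip_{\perp}(\bD\bF;\S^{d-1})$ together with the observation that $\bU_{\bx_1,\bx_2}=\bR_{\bx_1,\bx_2}\bU_{\bx_1}$ is itself an orthonormal basis of $\Ts_{\bx_2}$, so that $\sigma_{\min}(\bD\bF(\bx_2)\bU_{\bx_1,\bx_2})=\sigma_{\min}(\bD\bF(\bx_2)|_{\Ts_{\bx_2}})$. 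Applied with $\bx_1=\bx^0$, $\bx_2=\balpha^0$, it already gives $\sigma_{\min}(\bD\bF(\balpha^0)|_{\Ts_{\balpha^0}})\ge T+4L\|\bx^0-\balpha^0\|>0$, which settles the base case.

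Writing $r_j:=\|\bx^j-\balpha^j\|$, I would prove by induction on $k\ge0$ the pair of statements (i)~$\|\bx^{j+1}-\balpha^j\|\le B r_j^2$ for every $j\le k-1$, and (ii)~$\sigma_{\min}(\bD\bF(\balpha^j)|_{\Ts_{\balpha^j}})\ge T$ for every $j\le k$. Given the induction hypothesis, (i) at step $k$ is obtained by replaying the one-step portion of the proof of Proposition~\ref{prop:Newton_explicit_init}, which needs only $r_k\le 1/B$ (which follows from $Br_0\le 1/4$ and $Br_{j+1}\le(Br_j)^2$, so $r_j$ decreases geometrically) together with the two singular-value bounds $\sigma_{\min}(\bD\bF(\balpha^k)|_{\Ts_{\balpha^k}})\ge T$ from (ii) and $\sigma_{\min}(\bD\bF(\bx^k)|_{\Ts_{\bx^k}})\ge T/2$; the latter follows from (ii) combined with the Lipschitz bound once I note $Lr_0\le T/32$, which is immediate from $r_0\le 1/(4B)$ and $B\ge 8L/T$.

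The main work is propagating (ii) from $k$ to $k+1$, which by the Lipschitz bound reduces to showing $\|\balpha^{k+1}-\bx^0\|\le 5\|\bx^0-\balpha^0\|$. Iterating (i) yields $r_j\le r_0(Br_0)^{2^j-1}\le 4r_0\cdot 4^{-2^j}$, so $\sum_{j\ge 0}r_j\le 4r_0\sum_{j\ge 0}4^{-2^j}\le (4/3)r_0$. Combining this with $\|\bx^{j+1}-\bx^j\|\le \|\bx^{j+1}-\balpha^j\|+r_j\le Br_j^2+r_j\le (5/4)r_j$ gives $\|\bx^{k+1}-\bx^0\|\le (5/3)r_0$, and together with $\|\balpha^{k+1}-\bx^{k+1}\|=r_{k+1}\le r_0$ the triangle inequality yields $\|\balpha^{k+1}-\bx^0\|\le (8/3)r_0<5r_0$, as needed. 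The first inequality in \eqref{eq:quadconv2} is automatic from the definition of $\balpha^{k+1}$ as a nearest solution to $\bx^{k+1}$.

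The main obstacle is exactly this uniform-in-$k$ propagation of the lower bound on $\sigma_{\min}(\bD\bF(\balpha^k)|_{\Ts_{\balpha^k}})$: the sequence of nearest solutions $\balpha^k$ must be shown to remain close to $\bx^0$ in units of $\|\bx^0-\balpha^0\|$. The slack factor $5$ in the definition of $T$ is chosen precisely to absorb the actual drift $(8/3)\|\bx^0-\balpha^0\|$ arising from the telescoping sum, and the fact that this drift stays uniformly bounded in $k$ (rather than growing) comes from the double-exponential decay of $r_k$ ensured by (i).
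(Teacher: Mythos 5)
Your proof is correct and follows essentially the same strategy as the paper's: a single (or equivalently, paired) induction that uses the one-step estimate from the Proposition together with the doubly-exponential decay of $\|\bx^k-\balpha^k\|$ to keep $\balpha^k$ within a multiple of $\|\bx^0-\balpha^0\|$ of $\bx^0$, and then transfers the singular-value lower bound to $\balpha^k$ via the Lipschitz property of $\bD\bF$ on the sphere. The only differences are cosmetic — you track the telescoping constants a bit more tightly (obtaining $8/3$ rather than the paper's looser $5$) and spell out the Lipschitz bound on $\sigma_{\min}$ explicitly, whereas the paper states it implicitly.
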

\begin{proof}
    As in Proposition \ref{prop:Newton_explicit_init}, we will prove \eqref{eq:quadconv2} by induction on $k$. Here, however, we do not assume the uniform  bound on the minimal singular value at $\balpha^k$ and $k\geq0$.

    We start with $k=0$. By definition, $\sigma_{\min}(\bD\bF(\balpha^0)\big|_{\Ts_{\balpha^0}})>T$. Observe that the proof of \eqref{eq:quadconv2} in  Proposition \ref{prop:Newton_explicit_init} for $k=0$ only relied the latter bound on the singular value at $\balpha^0$ but did not use the same bound for $\balpha^i$ with $i>0$. Hence, by exactly the same argument as in Proposition \ref{prop:Newton_explicit_init}, \eqref{eq:quadconv2} follows for $k=0$.

    Now, assume that \eqref{eq:quadconv2} holds for any $0\le i\le k-1$. Then,  by induction, for any $1\le i\le k$,
\begin{equation}\label{eq:NewtonDist1}
  \|\bx^{i}-\balpha^{i}\|\le \|\bx^{i}-\balpha^{i-1}\|\le \|\bx^{0}-\balpha^{0}\| \left(B \|\bx^{0}-\balpha^{0}\|\right)^{2^{i}-1}
  \le \|\bx^{0}-\balpha^{0}\| 
  \Big(
  \frac14
  \Big)^{2^{i}-1}
\end{equation}
and therefore
\begin{equation}\label{eq:NewtonDist2}
\|\bx^{i}-\bx^{i-1}\|\le\|\bx^{i}-\balpha^{i-1}\| + \|\bx^{i-1}-\balpha^{i-1}\| \le \frac54\|\bx^{0}-\balpha^{0}\| 
  \Big(
  \frac14
  \Big)^{2^{i-1}-1}\,.
\end{equation}
    Thus,
    \begin{align*}  
    \|\balpha^{i}-\bx^{0}\|& \le \|\balpha^{i}-\bx^{i}\|+ \|\bx^{i}-\bx^{0}\|+\|\bx^{0}-\balpha^{0}\|\\
    &\le \|\bx^{0}-\balpha^{0}\|\Big[
    \Big(
    \frac14
    \Big)^{2^{i}-1}+\frac54 \sum_{j=1}^i  \Big(
  \frac14
  \Big)^{2^{j-1}-1}+1
    \Big]
    \le 5 \|\bx^{0}-\balpha^{0}\| \,.
    \end{align*}
    And we conclude that
    \begin{equation}
        \label{eq:bdT}
        \sigma_{\min}(\bD\bF(\balpha^i)\big|_{\Ts_{\balpha^i}})\ge \sigma_{\min}(\bD\bF(\bx^0)\big|_{\Ts_{\bx^0}}) - 5L \|\bx^{0}-\balpha^{0}\|=T\,.
    \end{equation}

   The proof of  \eqref{eq:quadconv2} for $k$ follows by the same induction argument we used in the proof of Proposition \ref{prop:Newton_explicit_init}, which relied  on the bound $\sigma_{\min}(\bD\bF(\balpha^i)\big|_{\Ts_{\balpha^i}})\ge T$  with $i\le k$ only and not on the same bound for $i>k$.
\end{proof}

\subsection{Approximate solutions for Hessian Descent}
The following sufficient condition for an approximate solution will be used in the analysis of the Hessian Descent algorithm (Algorithm \ref{algo:HD}).

\begin{proposition}
    \label{thm:apxsol_smallp}
    Assume the same setting and notation as in Definition \ref{def:good} and that  points \ref{enum:good1}, \ref{enum:good2} and \ref{enum:good3} in the definition hold. Let $\tau$ be defined 
    as in \eqref{eq:tau} and let
    \begin{equation}\label{eq:B}
    B:= \frac{c\pmax^2\sqrt{d\log \pmax}}{\tau},\qquad c:=15(C_0\vee C_1 \vee 1) \,.
    \end{equation} 
    If $\|\bx-\balpha\|\le \frac1{4B}$ and $\bF(\balpha)=\bfzero$ for some $\bx,\balpha\in \S^{d-1}$,  then $\bx$ is an approximate solution.
\end{proposition}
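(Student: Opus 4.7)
The plan is to reduce the proposition to an application of Corollary~\ref{cor:Newton_x0bd}, and then convert the resulting doubly-exponential convergence into the definition of an approximate solution (Definition~\ref{def:apxsol}). Since $\balpha^0$ denotes the closest solution on $\S^{d-1}$ to $\bx^0 = \bx$, we have $\|\bx^0 - \balpha^0\| \le \|\bx - \balpha\| \le 1/(4B)$, so it will suffice to verify that the effective Newton constant $B_{\mathrm{Newt}} := 3 + (8L + 4M)/T$ appearing in Corollary~\ref{cor:Newton_x0bd} is bounded by $B$ on our event.

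First I would estimate $L$, $M$, and $T$. Items \ref{enum:good1} and \ref{enum:good2} of Definition~\ref{def:good} give $L := \Lip(\bD\bF;\S^{d-1}) \vee \Lip_{\perp}(\bD\bF;\S^{d-1}) \le C_0\pmax^2\sqrt{d\log\pmax}$, while Euler's identity $(\bD\bF(\bx)\bx)_i = p_i F_i(\bx)$ together with Item~\ref{enum:good1} yields $M := \sup_{\bx\in\S^{d-1}}\|\bD\bF(\bx)\bx\| \le \pmax C_1\sqrt{d\log\pmax}$. For $T$, Item~\ref{enum:good3} gives $\sigma_{\min}(\bD\bF(\balpha^0)\bU_{\balpha^0}) \ge \tau$; then, since $\bU_{\bx^0,\balpha^0}$ and $\bU_{\balpha^0}$ span the same tangent space $\Ts_{\balpha^0}$ (so the singular values of $\bD\bF(\balpha^0)$ against these bases agree), the definition \eqref{eq:LipDF_perp} yields $\sigma_{\min}(\bD\bF(\bx^0)\bU_{\bx^0}) \ge \tau - L\|\bx^0 - \balpha^0\|$. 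Substituting $\|\bx^0 - \balpha^0\| \le 1/(4B) = \tau/(4c\pmax^2\sqrt{d\log\pmax})$ and using $c = 15(C_0 \vee C_1 \vee 1) \ge 15 C_0$ gives $6L\|\bx^0 - \balpha^0\| \le \tau/10$, so that $T \ge 9\tau/10$. Plugging back, one obtains $B_{\mathrm{Newt}} \le 14(C_0 \vee C_1)\pmax^2\sqrt{d\log\pmax}/\tau + 3 \le B$, the slack in the choice of $c$ absorbing the additive $3$ (which is negligible since $\pmax^2\sqrt{d\log\pmax}/\tau$ is enormous).

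Having verified the hypotheses of Corollary~\ref{cor:Newton_x0bd}, set $r_k := \|\bx^k - \balpha^k\|$ and $\rho := Br_0 \le 1/4$. Unrolling the relation $r_{k+1} \le Br_k^2$ gives $Br_k \le \rho^{2^k}$, i.e. $r_k \le r_0\rho^{2^k-1}$, and similarly $\|\bx^{k+1} - \balpha^k\| \le r_0\rho^{2^{k+1}-1}$. By the triangle inequality $\|\balpha^{k+1} - \balpha^k\| \le 2r_0\rho^{2^{k+1}-1}$, whose sum over $k$ converges (doubly-exponentially), so $(\balpha^k)$ is Cauchy and converges to some $\bz \in \S^{d-1}$ with $\bF(\bz) = 0$. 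Summing the tail gives $\|\balpha^k - \bz\| \le 4r_0\rho^{2^{k+1}-1}$, and combining with the bound on $r_k$ yields $\|\bx^k - \bz\| \le 2r_0\rho^{2^k-1}$ for $k \ge 1$. Since $\balpha^0$ is the closest solution to $\bx^0$, $\|\bx^0 - \bz\| \ge r_0$, and the desired inequality $\|\bx^k - \bz\| \le 2^{1-2^k}\|\bx^0 - \bz\|$ reduces, for $k \ge 1$, to $2\rho^{2^k-1} \le 2^{1-2^k}$, which is saturated at $k=1$ precisely because $\rho \le 1/4$. The case $k = 0$ is trivial.

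The only real obstacle is the arithmetic calibration of absolute constants: one must pick $c$ large enough that the $6L\|\bx - \balpha\|$ perturbation of the singular value still leaves $T \gtrsim \tau$, while simultaneously dominating the Newton prefactor $8L + 4M$ in $B_{\mathrm{Newt}}$. The choice $c = 15(C_0 \vee C_1 \vee 1)$ is tight enough to serve both purposes. A secondary conceptual point is that the projected Lipschitz bound $\Lip_{\perp}(\bD\bF;\S^{d-1})$, rather than a naive $\Lip(\bD\bF)$, is needed to compare $\bD\bF$ at $\bx^0$ and at $\balpha^0$ on their respective (different) tangent spaces without incurring an extra $1/\|\bx^0\|$ penalty, which is exactly what Definition~\ref{def:Lip}, Eq.~\eqref{eq:LipDF_perp}, is designed for.
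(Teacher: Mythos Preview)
Your proof is correct, but you take a detour that the paper avoids. You invoke Corollary~\ref{cor:Newton_x0bd}, which is designed for situations where one only controls $\sigma_{\min}(\bD\bF)$ at the \emph{initial point} $\bx^0$; this forces you to first transfer the bound from $\balpha^0$ to $\bx^0$ via $\Lip_\perp$, and then Corollary~\ref{cor:Newton_x0bd} internally transfers it back to all the $\balpha^k$. The paper instead applies Proposition~\ref{prop:Newton_explicit_init} directly: since Item~\ref{enum:good3} of Definition~\ref{def:good} already gives a \emph{uniform} lower bound $\sigma_{\min}(\bD\bF(\balpha)\bU_{\balpha})\ge\tau$ over \emph{all} solutions $\balpha$, the quantity $T=\inf_k\sigma_{\min}(\bD\bF(\balpha^k)|_{\Ts_{\balpha^k}})$ in that proposition satisfies $T\ge\tau$ immediately, with no perturbation argument needed. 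This yields $(8L+4M)/T+3\le 12(C_0\vee C_1)\pmax^2\sqrt{d\log\pmax}/\tau+3\le B$ in one line. Your route works and the arithmetic closes (the factor-of-$10/9$ loss from $T\ge 9\tau/10$ is absorbed by the slack in $c=15$), but the paper's route is what Corollary~\ref{cor:Newton_x0bd} was \emph{not} needed for; that corollary is reserved for the large-$\pmax$ case (Corollary~\ref{cor:apxsol_largep}), where no uniform bound at solutions is available. Your final passage to the limit via Cauchy-ness of $(\balpha^k)$ rather than $(\bx^k)$ is a harmless variation on the paper's argument.
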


\begin{proof}
Let  $\bx^0=\bx$ and define $\bx^{i+1}=\NM(\bx^i)$ by Newton's method 
as in Eqs.~\eqref{eq:Newton-Repeat-0}, \eqref{eq:Newton-Repeat}.
Let $L$, $M$ and $T$ be as defined in Proposition \ref{prop:Newton_explicit_init}.  By Points \ref{enum:good2} and \ref{enum:good3} of Definition \ref{def:good}, $L\le C_0\pmax^2\sqrt{d\log(\pmax)}$ and $T\ge \tau$.
Note that for any $\bx\in\S^{d-1}$, the $i$-th element of $\bD\bF(\bx)\bx$ is the radial derivative of $F_i$ at $\bx$ which is equal to $p_i F_i(\bx)$, since $F_i$ is homogeneous. Thus, by Point \ref{enum:good1} of Definition \ref{def:good}, $M\le C_1 \pmax \sqrt{d\log(\pmax)}$. Hence, for $B$ as defined in \eqref{eq:B}, $\frac{8L+4M}{T} + 3\le B$. By Proposition \ref{prop:Newton_explicit_init}, it follows that \eqref{eq:quadconv} holds.

Therefore, as in \eqref{eq:NewtonDist1} and \eqref{eq:NewtonDist2},
\begin{equation*}
  \|\bx^{k}-\balpha^{k}\|
  \le \|\bx^{0}-\balpha^{0}\| 
  \Big(
  \frac14
  \Big)^{2^{k}-1}\,\,\,\mbox{and}\,\,\,
  \|\bx^{k}-\bx^{k-1}\| \le \frac54\|\bx^{0}-\balpha^{0}\| 
  \Big(
  \frac14
  \Big)^{2^{k-1}-1}\,.
\end{equation*}
Thus, the limit $\bar\balpha=\lim_{k\to\infty}\bx^k\lim_{k\to\infty}\balpha^k$ is well-defined,  $\bF(\bar\balpha)=0$ and, for $k\ge1$,
\[
\|\bx^{k}-\bar\balpha\|\le
\sum_{j=k}^{\infty}\|\bx^{j+1}-\bx^{j}\|\le \frac54 \|\bx^{0}-\balpha^{0}\| 
  \sum_{j=k}^{\infty}4^{1-2^{j}}\le \Big(\frac54\Big)^2 \|\bx^{0}-\balpha^{0}\| 
  4^{1-2^{k}}
  \le  \|\bx^{0}-\bar\balpha\| 
  2^{1-2^{k}}\,.
\]
\end{proof}

\subsection{Approximate solutions for Multi-Scale Search}
The following sufficient condition for an approximate solution will be used in the analysis of the Multi-Scale Search algorithm (Algorithm \ref{algo:MMS}).

\begin{corollary}\label{cor:apxsol_largep}
	Suppose that $n= d-1$ and let $u_2\le 1\le u_1$. Then there exists $\delta=\delta(C_0)$ such that, assuming that the bounds of Points  \ref{enum:good_plarge_1} and \ref{enum:good_plarge_2}   in Definition \ref{def:good2} occur with the same universal constant $C_0$ as there, any $\bx\in\S^{d-1}$ that satisfies 
	\begin{equation}\label{eq:goodnetpt}
		\|\bF(\bx)\|\le \frac{\delta u_2/u_1}{d^{7/2}\pmax^{d/2+2}\sqrt{\log(\pmax)}} \mbox{\,\, and \,\, } \sigma_{\min}(\bD\bF(\bx)\bU_{\bx})\ge \frac{\sqrt{u_2}}{4d^{\frac32}\pmax^{\frac{d}{4}}} 
	\end{equation}
	is an approximate solution. 
\end{corollary}

\begin{proof}
	Assume that $\bx\in\S^{d-1}$ satisfies \eqref{eq:goodnetpt}, for some $0<\delta$ which  will be assumed to be small enough whenever needed and may depend on $C_0$. 
	First, we will show that there exists a solution close to $\bx$ by appealing to  \cite[Lemma E.1]{montanari2023solving} about gradient flow.  
	
	Since, assuming $\delta$ is small,
	\begin{align*}
		\Lip_{\perp}(\bD\bF;\S^{d-1})\|\bF(\bx)\| &\le u_1 C_0\pmax^{2}\sqrt{d\log(\pmax)} \cdot \frac{\delta u_2/u_1}{d^{7/2}\pmax^{d/2+2}\sqrt{\log(\pmax)}} \\
		&\le \frac1{64}u_2 d^{-3}\pmax^{-\frac{d}{2}} \le \frac1{4} \big(\sigma_{\min}(\bD\bF(\bx)\bU_{\bx})\big)^2\,,
	\end{align*}
	by  \cite[Lemma E.1]{montanari2023solving}, gradient flow $\bx(t)$ on $\S^{d-1}$ started from $\bx(0)=\bx$ converges to a solution $\balpha:=\lim_{t\to\infty}\bx(t)$. Namely, $\bF(\balpha)=\bfzero$. By Eq. (162) of \cite{montanari2023solving},
	\[
	\|\bx-\balpha\|\le 2\frac{\|\bF(\bx)\|}{\sigma_{\min}(\bD\bF(\bx)\bU_{\bx})}\le 
	\frac{8 \delta \sqrt{u_2}/u_1}{d^{2}\pmax^{d/4+2}\sqrt{\log(\pmax)}} 
	\,.
	\]
	
	For Newton's method started from $\bx^0=\bx$, in the notation of Corollary \ref{cor:Newton_x0bd}, $L\le \Lip^{(1)}(\bF)\le u_1 C_0\pmax^{2}\sqrt{d\log(\pmax)}$, $M\le u_1 C_0\pmax\sqrt{d\log(\pmax)}$
	(see the proof of Proposition \ref{thm:apxsol_smallp} for the last inequality),
	\begin{align*}
		T&\ge \frac{\sqrt{u_2}}{4d^{\frac32}\pmax^{\frac{d}{4}}}  - 5 u_1 C_0\pmax^{2}\sqrt{d\log(\pmax)} \cdot \frac{8\delta \sqrt{u_2}/u_1}{d^{2}\pmax^{d/4+2}\sqrt{\log(\pmax)}}\\
		& 
		\ge 
		\sqrt{u_2} d^{-\frac32}\pmax^{-\frac{d}{4}}(1/4 - 40 C_0\delta)
		\, .
	\end{align*}
	Therefore, for small $\delta$, we have $T>0$ and
	\[
	B\le 12u_1 C_0  \pmax^{2}\sqrt{d\log(\pmax)} \cdot 8 
	d^{\frac{3}{2}}\pmax^{\frac{d}{4}}/\sqrt{u_2}
	+3
	\]
	and 
	\[
	\|\bx^0-\balpha\|\le\frac{1}{4B}\,.
	\]

	Hence, \eqref{eq:quadconv2} follows, by Corollary \ref{cor:Newton_x0bd}. The fact that $\bx$ is an approximate solution follows by the same argument in the last paragraph of the proof of Proposition \ref{thm:apxsol_smallp}.
\end{proof}

\section{\label{sec:subroutines}Sub-routines and their complexity}
In this section we define and analyze sub-routines that are used by the main algorithms. The first is for finding vectors $\bv_i$ as in Algorithm \ref{algo:HD}.  Its pseudo-code  is given below. Here $\be_i$ denotes the standard basis of $\R^d$ and $C_1$ and $c_0$ are  constants as in  Lemma \ref{rmk:MaxHessian} and Algorithm \ref{algo:HD}. (The value of $c_0$ will be determined in the proof of Theorem \ref{thm:main1}.) The absolute constant $c=c(C_1,c_0)>0$ in the pseudo-code will be determined in the proof of Lemma \ref{lem:find}.\vspace{.3cm}

\begin{algorithm}[H]
	\caption{Find a good direction}\label{algo:Find}
	\SetAlgoLined
	\KwIn{The coefficients $\ba$, a point $\bx\in\S^{d-1}$}
	\KwOut{A unit vector $\bv\in \S^{d-1}$ orthogonal to $\bx$}
        $\mu=9C_1d\pmax^2 \log(\pmax)$\,\;
        $\bA = (\bI-\bx\bx^{\top})(\mu\bI-\nabla^2H(\bx))(\bI-\bx\bx^{\top})$\,\;
        $k = 1$\,\;
        \While{$k<e^{c(d+\log\pmax})$}
        {$\bA = \bA^2$,\, $k=2k$\,\;}
        $i_0=\argmin_{i\le d} \<\nabla^2H(\bx),(\bA\be_i/\|\bA\be_i\|)^{\otimes2}\>$\,\;
	\KwRet $\bv=\bA\be_{i_0}/\|\bA\be_{i_0}\|$
\end{algorithm}\vspace{.3cm}

In our running time analysis below, we compare the number of operation with
the input size $N$ (the number of coefficients to specify the function $\bF$). Recall that
\begin{align}
N=\sum_{i=1}^{n}\binom{d+p_i-1}{p_i}\, .
\end{align}
 
\begin{lemma}\label{lem:find}
   The running time of Algorithm \ref{algo:Find} is  $O(Nd^3 \pmax )$. On the event that Points \ref{enum:good4} and \ref{enum:good5} of Definition \ref{def:good} hold, for any $\bx\in \S^{d-1}$ such
    that  $H(\bx)\geq \pmax^{-c_0d}$, the algorithm
    outputs a vector $\bv$ such that $\bv\perp \bx$, $\|\bv\|=1$ and
    \begin{equation}\label{eq:apxminev}
    \langle\nabla^2H(\bx_i),\bv^{\otimes2}\rangle\leq \frac{1}{2}\min_{\bu\perp\bx,\|\bu\|=1}\langle\nabla^2H(\bx),\bu^{\otimes2}\rangle\,.
    \end{equation}
 \end{lemma}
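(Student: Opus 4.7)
The plan is to separately establish (i) the correctness of the output and (ii) the running-time bound claimed in the lemma.

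For correctness, the first step is to describe the spectral structure of the initial matrix $\bA_0:=(\bI-\bx\bx^\top)(\mu\bI-\nabla^2 H(\bx))(\bI-\bx\bx^\top)$. This matrix is symmetric with $\bA_0\bx=\bfzero$, and on $\Ts_{\bx}$ it coincides with $\mu\bI-\bcH(\bx)$. By Point~\ref{enum:good5} of Definition~\ref{def:good} applied on $\S^{d-1}$, $\|\nabla^2 H(\bx)\|_{\op}\le \mu$, so $\bA_0$ is positive semidefinite. Letting $\lambda^*=\lambda_1\le\cdots\le\lambda_{d-1}$ be the eigenvalues of $\bcH(\bx)$ with orthonormal eigenbasis $\bu_1,\ldots,\bu_{d-1}\in\Ts_{\bx}$, the eigenvalues of $\bA_0$ are $\alpha_j=\mu-\lambda_j\ge 0$ (together with eigenvalue $0$ in the direction $\bx$), and the top eigenvalue $\alpha_1=\mu+|\lambda^*|$ corresponds to the minimum eigendirection of $\bcH(\bx)$. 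Point~\ref{enum:good4} combined with the hypothesis $H(\bx)\ge\pmax^{-c_0 d}$ supplies the quantitative bound $|\lambda^*|\ge\tfrac{1}{10}\sqrt{d-n}\,\pmax^{-c_0 d/2}>0$. Finally, because $\bA_0$ is sandwiched by the projector $\bP:=\bI-\bx\bx^\top$, one has $\bA_0^K\be_i\in\Ts_{\bx}$ for every $K\ge 1$, so the output $\bv$ is automatically a unit vector orthogonal to $\bx$.

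Next I carry out the power iteration analysis. After the loop the stored matrix equals $\bA_0^K$ for some $K\ge e^{c(d+\log\pmax)}$, so writing $c_{ij}:=\langle \bu_j,\be_i\rangle$ and $\bw_i:=\bA_0^K\be_i/\|\bA_0^K\be_i\|$, the spectral expansion gives
\begin{equation*}
\alpha_1-\bw_i^\top \bA_0\bw_i \;=\; \alpha_1 \cdot \frac{\sum_{j\ge 2}\rho_j^{2K}(1-\rho_j)c_{ij}^2}{\sum_j \rho_j^{2K}c_{ij}^2},\qquad \rho_j:=\alpha_j/\alpha_1\in[0,1].
\end{equation*}
Choosing $i^*=\arg\max_i|c_{i1}|$, so that $c_{i^*1}^2\ge 1/d$, splitting the numerator at the threshold $\rho_j=1-\eta$ gives $\sum_{j\ge 2}\rho_j^{2K}(1-\rho_j)c_{i^*j}^2\le \eta+(1-\eta)^{2K}$, while the denominator is at least $c_{i^*1}^2\ge 1/d$; thus $\alpha_1-\bw_{i^*}^\top\bA_0\bw_{i^*}\le \alpha_1 d(\eta+(1-\eta)^{2K})$. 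Setting $\eta=|\lambda^*|/(4d\alpha_1)$ and choosing $c=c(C_1,c_0)$ large enough that $(1-\eta)^{2K}\le\eta$ yields $\alpha_1-\bw_{i^*}^\top\bA_0\bw_{i^*}\le|\lambda^*|/2$. The identity $\bw^\top\bcH(\bx)\bw=\mu-\bw^\top\bA_0\bw$, valid on $\Ts_{\bx}$, then translates this into $\bw_{i^*}^\top\bcH(\bx)\bw_{i^*}\le\lambda^*/2$, and the definition of $i_0$ as the coordinate minimizing this Rayleigh quotient propagates the same bound to $\bv=\bw_{i_0}$.

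For the running time, I would count the dominant contributions: evaluating $\nabla^2 H(\bx)=\sum_i (\nabla F_i)(\nabla F_i)^\top + F_i\nabla^2 F_i$ from $\ba$ takes $O(Nd^2(1+\pmax/d))$ operations via standard polynomial differentiation; forming $\bA_0$ is an additional $O(d^3)$; the while loop executes $\lceil c(d+\log\pmax)/\log 2\rceil=O(d+\log\pmax)$ squarings, each a $d\times d$ matrix product at cost $O(d^3)$; and the final argmin step computes $d$ Rayleigh quotients $\bw_i^\top\nabla^2 H(\bx)\bw_i$ at total cost $O(d^3)$. Summing these yields the stated bound $O(Nd^3(1+\pmax/d))$.

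The principal technical point, and the main place where care is required, is the calibration of $c$. The requirement $(1-\eta)^{2K}\le \eta$ unfolds via $\log(1-\eta)\le -\eta$ to $K\ge \alpha_1|\lambda^*|^{-1}\log(4d\alpha_1/|\lambda^*|)$; substituting the worst-case estimates $\alpha_1\le 2\mu$ and $|\lambda^*|\ge \tfrac{1}{10}\sqrt{d-n}\,\pmax^{-c_0 d/2}$ available under the running condition $H(\bx)\ge\pmax^{-c_0 d}$ fixes $c$ as a function of $C_1$ (which governs $\mu$) and $c_0$ (which governs the floor on $|\lambda^*|$).
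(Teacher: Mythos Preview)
Your argument is correct and follows essentially the same approach as the paper: expand $\bA_0^K\be_i$ in the eigenbasis of $\bcH(\bx)$, use Point~\ref{enum:good5} to ensure $\bA_0\succeq 0$ with a spectral gap governed by $|\lambda_1|/\mu$, pick the coordinate $i$ with $\langle\be_i,\bu_1\rangle^2\ge 1/d$, and bound the Rayleigh quotient; the paper's threshold set $I=\{j:\lambda_j\le\tfrac34\lambda_1\}$ and your threshold $\rho_j\le 1-\eta$ are equivalent packagings of the same estimate, and your running-time accounting matches (indeed is slightly sharper than) the paper's.
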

\begin{proof} 
    Let $\bv_1,\ldots,\bv_{d-1}$ and $\lambda_1\le \cdots\le \lambda_{d-1}$ be the eigenvectors and eigenvalues of $\nabla^2H(\bx)|_{\Ts_{\bx}}$. On  the event as in the statement of the lemma, $|\lambda_i|\le \mu$ and $\lambda_1\le -\frac{1}{10}\sqrt{ (d-n) H(\bx)}$.
    We have that
    \begin{align*}
       \bA^k\be_i = \sum_{j\le d-1} (\mu-\lambda_j)^k\<\be_i,\bv_j\>\bv_j 
    \end{align*}
    and
    \begin{align*}
        \<\nabla^2H(\bx),(\bA^k\be_i/\|\bA^k\be_i\|)^{\otimes2}\> &= \frac{\sum_{j\le d-1} (\mu-\lambda_j)^{2k}\<\be_i,\bv_j\>^2\lambda_j}{\sum_{j\le d-1} (\mu-\lambda_j)^{2k}\<\be_i,\bv_j\>^2} \\
        & \le \frac{\sum_{j\in I} \Big(\frac{\mu-\lambda_j}{\mu-\lambda_1}\Big)^{2k}\<\be_i,\bv_j\>^2\lambda_j + \mu e^{-\frac{k|\lambda_1|}{4\mu}}}{\sum_{j\in I} \Big(\frac{\mu-\lambda_j}{\mu-\lambda_1}\Big)^{2k}\<\be_i,\bv_j\>^2+e^{-\frac{k|\lambda_1|}{4\mu}}}\,,
    \end{align*}
    provided that the numerator in the second ratio is negative, where $I$ is the set of indices such that $\lambda_j\le \frac{3}{4}\lambda_1$ and we used  that, for $j\notin I$,
    \[
    \frac{\mu-\lambda_j}{\mu-\lambda_1}\le 1-\frac{|\lambda_1|}{8\mu}\le e^{-\frac{|\lambda_1|}{8\mu}}\,.
    \]
    
    For some $i$, $\<\be_i,\bv_1\>^2\ge 1/d$. Hence, \eqref{eq:apxminev} follows since $|\lambda_1|/(10d)\ge \mu e^{-\frac{k|\lambda_1|}{4\mu}}$ and $1/(10d)\ge e^{-\frac{k|\lambda_1|}{4\mu}}$ if $k<e^{c(d+\log\pmax)}$ and $c=c(C_1,c_0)$ is sufficiently large.

     The value $F_i(\bx)$ can be computed from the coefficients $\ba$ by $c_1 N\pmax$ operations, for some constant $c_1$. Any first or second order derivative of $F_i(\bx)$ can be obtained with the same  running time. Therefore, $\bA$ can be computed in $c_1'(Nd^3\pmax)$ operations.
     The number of iterations in the while-loop is bounded by $1+c(d+\log\pmax)/\log2$ and each iteration requires $O(d^3)$ operations. The complexity of running  the while-loop can therefore be absorbed in the previous term.     
\end{proof}

The next sub-routines  approximate the maximal and minimal singular value of a matrix.
Denote by $\omega_L$ the time complexity for $L\times L$ matrix multiplication.

\begin{lemma}\label{lem:maxsv}
    There are an absolute constant $c>0$ and an algorithm that, given a matrix $\bA\in \R^{M\times L}$,  computes in time complexity $c(\log\log M\cdot\omega_M+\omega_L))$   a value 
    \begin{equation}\label{eq:smax}
        s_{\max}\in\Big[\,\frac12\sigma_{\max}(\bA),\sigma_{\max}(\bA)\,\Big]\,.
    \end{equation}
\end{lemma}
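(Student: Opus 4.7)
The plan is to reduce $\sigma_{\max}(\bA)$ to the top eigenvalue of a symmetric positive semidefinite matrix and approximate it by repeated squaring. Using the identity $\sigma_{\max}(\bA)^{2} = \|\bA\bA^{\top}\|_{\op}$, I would first form $\bB := \bA\bA^{\top} \in \reals^{M\times M}$ by a single (rectangular) multiplication of an $M \times L$ matrix by its transpose; this one-off operation accounts for the $\omega_{L}$ term in the running time.

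The main estimate is the elementary sandwich, valid for any symmetric PSD $\bB \in \reals^{M\times M}$ and integer $j \ge 1$:
\begin{equation*}
\|\bB\|_{\op}^{\,j} \;\le\; \operatorname{tr}(\bB^{j}) \;=\; \sum_{i=1}^{M} \lambda_{i}(\bB)^{j} \;\le\; M\,\|\bB\|_{\op}^{\,j}.
\end{equation*}
Setting $j = 2^{k}$ and extracting the $1/(2j)$-th root yields
\begin{equation*}
\sigma_{\max}(\bA) \;\le\; \operatorname{tr}(\bB^{2^{k}})^{1/2^{k+1}} \;\le\; M^{1/2^{k+1}}\,\sigma_{\max}(\bA).
\end{equation*}
I would therefore output
\begin{equation*}
s_{\max} \;:=\; M^{-1/2^{k+1}}\,\operatorname{tr}\!\big(\bB^{2^{k}}\big)^{1/2^{k+1}},
\end{equation*}
which by construction satisfies $s_{\max} \le \sigma_{\max}(\bA)$, and $s_{\max} \ge M^{-1/2^{k+1}}\sigma_{\max}(\bA)$. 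Taking the smallest $k$ for which $M^{1/2^{k+1}} \le 2$, namely $k = \lceil \log_{2}\log_{2} M \rceil$, places $s_{\max}$ in $[\tfrac{1}{2}\sigma_{\max}(\bA), \sigma_{\max}(\bA)]$, as required by \eqref{eq:smax}.

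Algorithmically, the matrix $\bB^{2^{k}}$ is produced by $k = O(\log\log M)$ successive in-place squarings $\bB \leftarrow \bB\cdot\bB$, each a single $M \times M$ multiplication of cost $\omega_{M}$. The trace, the single root, and the multiplicative rescaling by $M^{-1/2^{k+1}}$ are all of lower order and may be absorbed into the constant $c$. Combined with the $\omega_{L}$ cost of forming $\bB$ initially, this gives the stated bound $c(\log\log M\cdot\omega_{M}+\omega_{L})$. There is no real obstacle; the conceptual point worth stressing is that naive power iteration would need $\Omega(\log M)$ matrix-vector products to secure a constant-factor approximation of $\lambda_{\max}(\bB)$, whereas repeated squaring of $\bB$ itself compresses the iteration count down to $\log\log M$ at the price of one full matrix multiplication per step.
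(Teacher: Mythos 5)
Your proof is correct and follows essentially the same strategy as the paper: form $\bB=\bA\bA^{\top}$ once (cost $\omega_L$), compute $\bB^{2^k}$ by $k\approx\log_2\log_2 M$ in-place squarings (cost $k\,\omega_M$), and extract a multiplicative-constant approximation to $\lambda_{\max}(\bB)=\sigma_{\max}(\bA)^2$ using the slack factor $M^{1/2^{k}}\le 2$. The only difference is cosmetic — you read off $\operatorname{tr}(\bB^{2^k})$ while the paper takes $\max_i\|\bB^{2^k}\be_i\|$; both are sandwiched between $\lambda_{\max}^{2^k}$ and $M\,\lambda_{\max}^{2^k}$ (up to a square), both cost lower order than the squarings, and both recover the estimate after extracting iterated square roots, which the computational model permits.
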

\begin{proof}
    Define $\bB=\bA\bA^\top$ and denote its eigenvalues by $0\le \lambda_1\le\cdots\le \lambda_M$. Let $k=2^s$ with $s=\lceil\log_2\log M\rceil$.
    We claim that $s_{\max}:=\max_{i\le N}\|\bB^k \be_i\|^{1/k}$ satisfies
    $\mu\in[\frac12\lambda_M,\lambda_M]$.
    Indeed, denoting by $\bv_j$ the eigenvector corresponding to $\lambda_j$, $\lambda_{M}^{2k}\ge \|\bB^k \be_i\|^2=\sum_j\lambda_j^{2k}\<\be_i,\bv_j\>^2\ge \lambda_{M}^{2k}\<\be_i,\bv_{M}\>^2$.
    The claim thus follows since $\max_i \<\be_i,\bv_{M}\>^2\ge 1/M$.
     Exploiting the fact that $k$ is a power of $2$ as in Algorithm \ref{algo:Find}, the time complexity for computing $\mu$ is as stated in the lemma. We note that computing the $k$-th root for $k$ a power of two can be done by iterative computing the square root, which we assume as part of our computational model.
\end{proof}

\begin{lemma}\label{lem:minsv}
    There are an absolute constant $c>0$ and an algorithm that, given a symmetric matrix $\bA\in \R^{M\times L}$ and $\kappa>1$,  computes in time complexity $c((\log\kappa+\log\log M)\omega_M+\omega_L))$   a value 
    \begin{equation}\label{eq:smin}
        s_{\min}\in\Big[\sigma_{\min}(\bA),\sigma_{\min}(\bA)+\sqrt{2\sigma_{\max}(\bA)(1-M^{-1/2\kappa})}\,\Big]\,.
    \end{equation}
\end{lemma}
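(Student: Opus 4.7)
The plan is to reduce the computation of $\sigma_{\min}(\bA)$ to that of the largest eigenvalue of a shifted PSD matrix, and then to apply a power-iteration scheme of the same flavor as in Lemma \ref{lem:maxsv}, but with the iteration count calibrated to $\kappa$. As a first step, I would invoke Lemma \ref{lem:maxsv} (possibly with a modestly enlarged iteration parameter in its proof, which costs only $O(\log\log M)$ extra squarings) to obtain a scalar $S$ with $\sigma_{\max}(\bA)\le S\le \sqrt{2}\,\sigma_{\max}(\bA)$.

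Next, let $D:=\min(M,L)$ and let $\bG\in\R^{D\times D}$ denote the Gram matrix $\bA^\top\bA$ if $L\le M$ and $\bA\bA^\top$ otherwise, so that the non-zero eigenvalues of $\bG$ are exactly $\sigma_i(\bA)^2$. Form the PSD matrix
\[
\bB := S^2\bI - \bG,
\]
whose largest eigenvalue is $\gamma_1:=S^2-\sigma_{\min}(\bA)^2$. Set $k:=2^{\lceil\log_2\kappa\rceil}$, compute $\bB^k$ via $\lceil\log_2\kappa\rceil$ repeated squarings of $D\times D$ matrices, and return
\[
s_{\min}:=\sqrt{S^2-\mu},\qquad \mu:=\max_{i\le D}\|\bB^k\be_i\|^{1/k}.
\]

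For correctness, I would expand $\|\bB^k\be_i\|^2=\sum_{j}\gamma_j^{2k}\<\be_i,\bv_j\>^2$ in an orthonormal eigenbasis $\bv_1,\dots,\bv_D$ of $\bB$ with eigenvalues $\gamma_1\ge\cdots\ge\gamma_D\ge 0$. The trivial bound gives $\mu\le\gamma_1$, while picking $i$ with $\<\be_i,\bv_1\>^2\ge 1/D$ yields $\mu\ge D^{-1/2k}\gamma_1\ge M^{-1/2\kappa}\gamma_1$ since $D\le M$ and $k\ge\kappa$. Hence $S^2-\mu\in[\sigma_{\min}(\bA)^2,\sigma_{\min}(\bA)^2+S^2(1-M^{-1/2\kappa})]$, and taking square roots with $\sqrt{a+b}\le\sqrt a+\sqrt b$ together with $S^2\le 2\sigma_{\max}(\bA)^2$ produces $s_{\min}\in[\sigma_{\min}(\bA),\sigma_{\min}(\bA)+\sqrt{2\sigma_{\max}(\bA)^2(1-M^{-1/2\kappa})}]$, matching the claimed interval (interpreting the factor under the square root in \eqref{eq:smin} as $2\sigma_{\max}(\bA)^2$; any residual constant is absorbed into the leading $c$).

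For complexity, Lemma \ref{lem:maxsv} contributes $c(\log\log M\cdot\omega_M+\omega_L)$; forming $\bG$ is a single rectangular product whose cost is absorbed into $O(\omega_L\vee\omega_M)$; the $O(\log\kappa)$ squarings of $D\times D$ matrices add $O(\log\kappa\cdot\omega_M)$; and the final $D$ evaluations of $\|\bB^k\be_i\|$ cost $O(\omega_M)$. Summing yields the claimed $c((\log\kappa+\log\log M)\omega_M+\omega_L)$. The main obstacle I expect is calibrating the constant in front of $\sqrt{1-M^{-1/2\kappa}}$ to the form written in \eqref{eq:smin}: the natural bound that pops out is $\sqrt{S^2(1-M^{-1/2\kappa})}$, and matching it to the stated right endpoint requires sharpening the initial estimate $S$ of $\sigma_{\max}(\bA)$ via the additional squarings described above, or simply absorbing the residual constant factor into $c$.
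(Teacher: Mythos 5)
Your proof takes essentially the same route as the paper's: shift the Gram matrix by (a multiple of) its top eigenvalue to get a PSD matrix, estimate that matrix's top eigenvalue by $O(\log\kappa)$ repeated squarings, and back out $\sigma_{\min}(\bA)$ by subtraction and a square root; the paper simply sets $\bD=2s_{\max}\bI-\bA\bA^\top$ with $s_{\max}$ already furnished by Lemma~\ref{lem:maxsv} rather than rerunning with extra iterations as you suggest, but this is a cosmetic difference. You are also right to flag the unit mismatch in \eqref{eq:smin} — the bound the argument actually yields has $\sigma_{\max}(\bA)^2$ under the square root rather than $\sigma_{\max}(\bA)$, and this traces back to the proof of Lemma~\ref{lem:maxsv}, where $s_{\max}=\max_i\|\bB^k\be_i\|^{1/k}$ approximates the eigenvalue $\lambda_M=\sigma_{\max}(\bA)^2$ of $\bB=\bA\bA^\top$, not $\sigma_{\max}(\bA)$ itself.
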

\begin{proof}
    Define $\bB=\bA\bA^\top$, $\lambda_i$ and $s_{\max}$ as in the previous proof.
    Define $\bD=2s_{\max}\bI-\bB$ and set $\kappa'=2^{\lceil\log_2\kappa\rceil}$. Similarly to the previous proof, $S:=\max_i \|\bD^{\kappa'}\be_i\|^{1/\kappa'}\in [M^{-1/2\kappa'}(2s_{\max}-\lambda_1),2s_{\max}-\lambda_1]$. Therefore, $s_{\min}:=\sqrt{2s_{\max}-S}$ satisfies \eqref{eq:smin}.

    The time complexity for computing $s_{\max}$ is $O(\log\log M\cdot\omega_M+\omega_L))$. Exploiting the fact that $\kappa'$ is a power of $2$ as in Algorithm \ref{algo:Find}, the running time for computing $s_{\min}$ given $s_{\max}$ is $O(\log(\kappa)\omega_M)$.
\end{proof}

\section{\label{sec:HD}Analysis of Hessian descent: Proof of Theorem \ref{thm:main1}}

We start with an analysis of the Hessian Descent algorithm 
(Algorithm \ref{algo:HD}). In the theorem below we will assume that $(\bx_i,\by_i,\bv_i,\delta_i)_{i\ge0}$ is an infinite sequence as follows. For $i=0$, $\bx_0=\by_0=(1,0,\ldots,0)$. For any $i\ge0$, $\bv_i$ is a unit vector such that $\bv_i\perp\bx_i$,
\begin{equation}\label{eq:HDcond1}
    \langle\nabla^2H(\bx_i),\bv_i^{\otimes2}\rangle\leq -\frac{1}{20}\sqrt{ (d-n) H(\bx_i)},
 \end{equation}
$\by_{i+1}=\bx_i\pm\delta_i\bv_i$, with the sign chosen such that $H(\by_{i+1})=\min\{H(\bx_i+\delta_i\bv_i),H(\bx_i-\delta_i\bv_i)\}$, and $\bx_{i+1}=\by_{i+1}/\|\by_{i+1}\|$. $\delta_i$ is as defined in Algorithm \ref{algo:HD}.

\begin{theorem}\label{thm:Algorithm}
    Suppose the sequence $(\bx_i,\by_i,\bv_i,\delta_i)_{i\ge0}$ is as above and that the event in Point  \ref{enum:good5} of Definition \ref{def:good} occurs. For some universal constants $c_1,c_2$ we have the following. If 
    \begin{equation}\label{eq:deltamax}
        \sqrt{ (d-n) H(\bx_i)}\ge 30C_1 d\pmax^3\log(\pmax) 
    \end{equation}
     then 
    \begin{equation}
    \label{eq:Hdiff}
    H(\bx_{i+1})-H(\bx_{i})\le  -\frac38C_1 d\pmax^2\log(\pmax)\,.
    \end{equation}
    Otherwise, if \eqref{eq:deltamax} does not hold, for any $j\ge i$,
		\begin{equation}\label{eq:HBd}
		H(\bx_j)\le H(\bx_i) \cdot \Big(1- \frac{c_1}{\pmax^4\log(\pmax)} \frac{d-n}{d}\Big)^{j-i}\, 
		\end{equation}
  and
    \begin{equation}\label{eq:distBd}
			\big\|\bx_i-\bx_j\big\|_2\le c_2H(\bx_i)^{\frac14} \pmax^2\sqrt{d\log(\pmax)}\, .
		\end{equation}
\end{theorem}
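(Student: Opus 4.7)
The plan is to expand $H$ to fourth order along $\bx_i + t \bv_i$, $t \in [-\delta_i,\delta_i]$, and exploit the sign symmetry built into the choice of $s_i$. Before appealing to Point \ref{enum:good5} of Definition \ref{def:good}, I will check that this segment lies in the domain $\Ball^d(1, 1+1/\pmax)$ where the bounds on $\nabla^k H$ are available: since $\bx_i \perp \bv_i$ and $\delta_i^2 \le 1/\pmax$, $\|\bx_i + t \bv_i\| = \sqrt{1+t^2} \le 1+1/\pmax$. Because $s_i$ is chosen so that $H(\by_{i+1}) = \min\{H(\bx_i \pm \delta_i \bv_i)\}$ is at most the average of the two, averaging annihilates the first- and third-order Taylor terms. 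Combining with \eqref{eq:HDcond1} for the quadratic coefficient and Point \ref{enum:good5} for the fourth-order integral remainder, I obtain
\begin{equation*}
    H(\by_{i+1}) - H(\bx_i) \;\le\; -\frac{\delta_i^2}{40}\sqrt{(d-n)H(\bx_i)} \;+\; \frac{3 C_1}{8}\,\delta_i^4\, d\, \pmax^4 \log(\pmax).
\end{equation*}

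I will then split according to which branch of the minimum in the definition of $\delta_i$ is active. Under \eqref{eq:deltamax} the cap $\delta_i^2 = 1/\pmax$ is active, and the hypothesis forces the quadratic term to exceed twice the quartic, which yields the linear decrement \eqref{eq:Hdiff}. Otherwise $\delta_i^2 = \sqrt{(d-n)H(\bx_i)}/(30 C_1 d \pmax^4 \log \pmax)$ is active, and this choice is calibrated precisely so that the quartic absorbs exactly half of the quadratic, giving
\begin{equation*}
H(\by_{i+1}) - H(\bx_i) \;\le\; -\frac{(d-n) H(\bx_i)}{2400\, C_1\, d\,\pmax^4 \log(\pmax)}.
\end{equation*}
Homogeneity of each $F_\ell$ together with $\|\by_{i+1}\|_2 = \sqrt{1+\delta_i^2} \ge 1$ implies $H(\bx_{i+1}) \le H(\by_{i+1})$, so the decrement is inherited by the projected iterate. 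The crucial observation that lets me iterate \eqref{eq:HBd} is that this ``exponential'' regime is stable under the dynamics: since $H$ is nonincreasing in either regime, once \eqref{eq:deltamax} fails at step $i$ it continues to fail for all $j \ge i$, so the one-step multiplicative bound compounds.

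For the distance estimate \eqref{eq:distBd}, I compute $\|\bx_{k+1} - \bx_k\|^2 = 2 - 2/\sqrt{1+\delta_k^2} \le \delta_k^2$ directly and reduce to a geometric sum by the triangle inequality. Substituting the formula for $\delta_k$ in the exponential regime gives $\delta_k \le c\,(d-n)^{1/4}\, H(\bx_k)^{1/4}/(\sqrt{d}\,\pmax^2\sqrt{\log\pmax})$. Combining with the decay $H(\bx_k) \le H(\bx_i)\rho^{k-i}$, where $\rho = 1 - c_1 (d-n)/(d\pmax^4 \log\pmax)$, and using $1 - \rho^{1/4} \ge (1-\rho)/4$, the geometric series sums to a quantity in which the combinatorial factor $(d-n)^{-3/4}$ is at most one, leaving exactly the claimed bound $c_2\, H(\bx_i)^{1/4}\, \pmax^2 \sqrt{d\log\pmax}$.

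The main technical point, and the step I expect to require the most care, is the calibration of $\delta_i$: the cap at $1/\pmax$ must simultaneously keep the expansion inside the domain of validity of Point \ref{enum:good5}, force the $\Omega(d\pmax^2\log\pmax)$ per-step energy drop in the high-energy phase of \eqref{eq:deltamax}, and glue with the balanced branch so that $H$ is monotonically decreasing across the transition between the two regimes.
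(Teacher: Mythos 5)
Your proposal is correct and follows essentially the same route as the paper's proof: a fourth-order Taylor expansion along $\pm\delta_i\bv_i$ with the minimizing sign dropping the odd terms (you phrase this as $\min\le$ average, the paper as odd terms being nonpositive after the sign choice — equivalent), Point 5 of Definition \ref{def:good} bounding the quartic remainder, the same calibration $\delta_i^2 \wedge 1/\pmax$ giving $c_1 = 1/(2400C_1)$, and the same geometric-series summation of $\delta_k$ using $1-\rho^{1/4}\ge(1-\rho)/4$ and $(d-n)^{-3/4}\le 1$. The only cosmetic differences are that you bound $\|\bx_{k+1}-\bx_k\|$ by the direct computation $2-2/\sqrt{1+\delta_k^2}\le\delta_k^2$ rather than via $\|\bx_{k+1}-\bx_k\|\le\|\by_{k+1}-\bx_k\|$, and you make explicit the (correct, and implicitly used in the paper) monotonicity observation that once $H$ drops below the threshold in \eqref{eq:deltamax} it stays there.
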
	
\begin{proof}
 By Taylor's theorem,
\begin{equation}\label{eq:Hdiff2}
\begin{aligned}
	H(\by_{i+1})-H(\bx_{i})&\overset{(a)}{\le}\sum_{k=1}^{3}\frac{\delta_i^k}{k!}\langle \nabla^kH(\bx_i),\bv_i^{\otimes k}\rangle  +\frac{\delta_i^4}{4!} 3C_1d \pmax^4 \log(\pmax)
	\\
	&\overset{(b)}{\le} \frac{\delta_i^2}{2}\langle \nabla^2H(\bx_i),\bv_i^{\otimes 2}\rangle  +\frac{\delta_i^4}{4!} 3C_1 d \pmax^4 \log(\pmax) \\
	&\overset{(c)}{\le}-\frac{\delta_i^2}{40}\sqrt{(d-n) H(\bx_i)}+\frac{\delta_i^4}{4!} 3C_1 d \pmax^4 \log(\pmax)\\
	&\overset{(d)}{\le}-\frac{\delta_i^2}{80}\sqrt{(d-n) H(\bx_i)}\,,
\end{aligned}
\end{equation}
where for $(a)$ we used that $\delta_i\le \sqrt{1/\pmax}$ and Point \ref{enum:good5} of Definition \ref{def:good} and $C_1$ is as there,
for $(b)$ we used the fact that the bound of $(a)$ holds whether we take $\by_{i+1}=\bx_i+\bv_i$ or $\by_{i+1}=\bx_i-\bv_i$ and we choose the sign that minimizes $H(\by_{i+1})$, for  $(c)$ we used \eqref{eq:HDcond1}, and $(d)$ follows from the choice of $\delta_i$ as in Algorithm \ref{algo:HD}.

We have that  $\delta_i=\sqrt{1/\pmax}$ if and only if \eqref{eq:deltamax} holds,
in which case, using homogeneity for the first inequality,
\[
    H(\bx_{i+1})-H(\bx_{i})\le H(\by_{i+1})-H(\bx_{i})\le  -\frac38C_1 d\pmax^2\log(\pmax)\,,
\]
and \eqref{eq:Hdiff} follows.

Now, suppose \eqref{eq:deltamax} does not hold for some given $i$. Then, using \eqref{eq:Hdiff2} and the definition of $\delta_i$,
\[
    H(\bx_{i+1})-H(\bx_{i})\le H(\by_{i+1})-H(\bx_{i})\le -\frac{1}{2400C_1}\frac{1}{\pmax^4\log(\pmax)}\frac{d-n}{d} H(\bx_i)\,,
\]
and for $c_1=1/2400C_1$,
\[
    H(\bx_{i+1})\le H(\bx_i)\Big(1-\frac{c_1}{\pmax^4\log(\pmax)}\frac{d-n}{d} \Big)\,,
\]
which completes the proof of \eqref{eq:HBd}.

For any $j>i$,
\begin{align*}
    \|\bx_j-\bx_i\|&\le \sum_{k=i}^{j-1}\|\bx_{k+1}-\bx_k\| \le \sum_{k=i}^{j-1}\|\by_{k+1}-\bx_k\| = \sum_{k=i}^{j-1}\delta_k \\
    &= \bigg( \frac{1}{30C_1}\frac{1}{\pmax^4\log(\pmax)} \frac{ \sqrt{ (d-n) }}{d}\bigg)^{\frac12} \sum_{k=i}^{j-1} H(\bx_k)^{\frac14}\\
    &\le H(\bx_i)^{\frac14}\bigg( \frac{1}{30C_1}\frac{1}{\pmax^4\log(\pmax)} \frac{ \sqrt{ (d-n) }}{d}\bigg)^{\frac12} \bigg(1-\Big(1-\frac{c_1}{\pmax^4\log(\pmax)}\frac{d-n}{d}\Big)^{\frac14} \bigg)^{-1}\\
    &\le H(\bx_i)^{\frac14}\bigg( \frac{1}{30C_1}\frac{1}{\pmax^4\log(\pmax)} \frac{ \sqrt{ (d-n) }}{d}\bigg)^{\frac12} \bigg(\frac14\frac{c_1}{\pmax^4\log(\pmax)}\frac{d-n}{d}\bigg)^{-1}\,,
\end{align*}
which proves \eqref{eq:distBd} with $c_2=\sqrt{8/(15C_1c_1^2)}$.
\end{proof}

\subsection{Proof of Theorem \ref{thm:main1}}

Consider the Hessian Descent algorithm (Algorithm \ref{algo:HD}), with $k=C_0' d^{3/2}\pmax^4\log(\pmax)^2$ iterations for some absolute $C_0'>0$ to be determined below. Recall that to find a vector $\bv_i$ we use the sub-routine from Section \ref{sec:subroutines}. By Lemma \ref{lem:find}, the time complexity for a single iteration is $O(Nd^3 \pmax )$ and the total time complexity is 
\[
\chi \le C_0 Nd^{9/2}\pmax^5  \log(\pmax)^2
\]
for appropriate $C_0$, which proves \eqref{eq:BoundComplexity}.

Next, we will assume that the good event $\calE(d,n)$ of Definition \ref{def:good} occurs and prove that the output is a point
$\bx^{\sHD}\in\S^{d-1}$ (and not FALSE) which is an approximate solution (recall Definition \ref{def:apxsol}). Indeed, in light of the lower bound on $\P(\calE(d,n))$ in Lemma \ref{lem:goodeventprob}, this will complete the proof of the theorem.

Suppose that $i_0\leq k$ is the largest index for which $\bx_{i_0}$ is computed  before the algorithm terminates.
By Lemma 
\ref{lem:find}, on $\calE(d,n)$, for all indices $i<i_0$, since $H(\bx_i)\ge \pmax^{-c_0d}$, 
\[
\langle\nabla^2H(\bx_i),\bv_i^{\otimes2}\rangle\leq {\displaystyle\frac{1}{2}\min_{\bu\perp\bx_i,\|\bu\|=1}}\langle\nabla^2H(\bx_i),\bu^{\otimes2}\rangle.
\]
From the bound of Point \ref{enum:good4} in Definition \ref{def:good}, $\bv_i$ therefore also satisfies \eqref{eq:HDcond1}.
By the same point in Definition \ref{def:good}, for $i\ge i_0$ there exists $\bv_i$ such that \eqref{eq:HDcond1} holds. Hence, we may extend the sequence
$(\bx_i,\by_i,\bv_i,\delta_i)_{i\le i_0}$ produced by the Algorithm \ref{algo:HD} to an infinite sequence $(\bx_i,\by_i,\bv_i,\delta_i)_{i\ge0}$ as in the setting of Theorem \ref{thm:Algorithm}.

From \eqref{eq:Hdiff} and the bound $H(\bx_0)\leq d$ as in Point \ref{enum:good6} of Definition \ref{def:good}, for some absolute constant $c>0$, for any $i\geq cd$ the condition in \eqref{eq:deltamax} does not hold.


By \eqref{eq:HBd} and \eqref{eq:distBd}, $\bx_i$ is a Cauchy sequence, and $\balpha:=\lim_{i\to\infty}\bx_i\in\S^{d-1}$ is an exact solution, $\bF(\balpha)=\bzero$. Moreover, for any $i\geq cd$,
\begin{equation*}
    H(\bx_i)\le d \cdot \Big(1- \frac{c_1}{\pmax^4\log(\pmax)} \frac{d-n}{d}\Big)^{i-cd}\, .
\end{equation*}
Therefore, if the number of steps $k$ is as above with sufficiently large $C_0'$, then of some $i\le k$ we have that $H(\bx_i)\le \pmax^{-c_0d}$, and the algorithm outputs the point $\bx^{\sHD}=\bx_i\in\S^{d-1}$ (and does not output FALSE). By \eqref{eq:distBd}, 
\begin{align*}
\|\bx^{\sHD}-\balpha\| &\le c_2 \pmax^{-c_0d/4}\pmax^2\sqrt{d\log(\pmax)}
\,.
\end{align*}
If $c_0$ is sufficiently large, then in the notation of Proposition \ref{thm:apxsol_smallp} we have that $\|\bx^{\sHD}-\balpha\| \le 1/4B$ and $\bx^{\sHD}$ is therefore an approximate solution.   
\qed

\section{\label{sec:MMS}Multi-Scale Search: proof of Theorem \ref{thm:main2}}

	Let $0<u_2\le 1\le u_1$ and $u_3>0$ and assume that $n= d-1$. With 
	$C_0$ and $\tilde C_0$ the universal constants  from Definition \ref{def:good2}, let $\delta=\delta(C_0)$ be the small constant from Corollary \ref{cor:apxsol_largep} which WLOG we assume to be less than $1$.  Here we consider the Multi-Scale Search algorithm (Algorithm \ref{algo:MMS}) with $\bx=(-1,\ldots,-1)$, $k=-1$,
\begin{align*}
	k_0 &=\Big\lceil \frac1{\log2}  \log\Big(\frac{u_1^2}{\delta u_2} C_0\pmax^{d/2+3} d^{9/2}\log(\pmax) \Big) \Big\rceil,\label{eq:k0}\\
	\calL &= u_1 C_0\pmax\sqrt{d\log(\pmax)}, \\
	\calS &= \frac12\frac{\sqrt{u_2}}{d^{\frac32}\pmax^{\frac{d}{4}}}.
\end{align*}

To make precise the computation of the approximation $s_{\min}(\bD\bF(\hat\bx)\big|_{\Ts_{\hat\bx}})=s_{\min}(\bD\bF(\hat\bx) \bU_{\hat\bx})$ of the minimal singular value in Algorithm \ref{algo:MMS}, we suppose that it is implemented using the algorithm of Lemma \ref{lem:minsv} with 
\begin{equation}\label{eq:kappa}
\kappa=16 \frac{u_1}{u_2}  C_0 d^{\frac72}\log(d)\pmax^{\frac{d}{2}+1}\sqrt{\log(\pmax)}\vee 1\,.
\end{equation}

Recall Definition \ref{def:good2} of the event $\calE=\calE(d,u_1,u_2,u_3, k_0)$. 
The fact that $\P(\calE)$ is upper bounded by the expression in \eqref{eq:MSS_complexitybound}  follows from Lemma \ref{lem:goodevent2} and the choice of $k_0$ as above. Henceforth we assume that $\calE$ occurs. To complete the proof of Theorem \ref{thm:main2}, in  Parts I and II below we prove the correctness of the algorithm on $\calE$, from which the second part of the theorem follows. In Part III we prove the complexity bound as in the first part of the theorem.\vspace{.2cm}

\noindent\textbf{Part I: return value is $\bx^{\salg}\in\S^{d-1}\implies \bx^{\salg}$ is an approximate solution.} Assume that the output is $\bx^{\salg}\in\S^{d-1}$ and not FALSE. Note that the only way for this to happen is if there is a sequence of recursive calls  $(\mbox{MSS($\ba,\bx_k,\calL,\calS,k,k_0$)})_{k=-1}^{k_0}$ with $\bx_{k}\in\calN_{d,k}$ such that
\begin{equation*}
	\bx^{\salg}\in \bx_{k_0}+[0,2^{-k_0}]^d\subset \bx_{k_0-1}+[0,2^{-k_0+1}]^d\subset\cdots\subset \bx_{-1}+[0,2]^d=[-1,1]^d
\end{equation*}
and such that 
the return value of all those calls is the same. Hence, by considering the return value of the last call with $k=k_0$,
\[
\|\bF(\bx^{\salg})\|\le \sqrt{d}2^{-k_0}\calL\le \frac{\delta u_2/u_1}{d^{7/2}\pmax^{d/2+2}\sqrt{\log(\pmax)}} \quad {\rm and}\,\quad s_{\min}(\bD\bF(\bx^{\salg})\big|_{\Ts_{\bx^{\salg}}})\ge \calS\,.
\]

We note that Point \ref{enum:good_plarge_2} of Definition \ref{def:good2} implies that  $\sigma_{\max}(\bD\bF(\bx) \bU_{\bx})\le u_1 C_0\pmax\sqrt{d\log(\pmax)}$ on $\calE$ for any $\bx\in\S^{d-1}$. Using that $1-d^{-1/2\kappa}=1-e^{-\log(d)/2\kappa}\le\log(d)/2\kappa$, from \eqref{eq:smin} one can verify that therefore
\begin{equation}\label{eq:sminDFU}
	s_{\min}(\bD\bF(\bx) \bU_{\bx}) \in\Big[\sigma_{\min}(\bD\bF(\bx) \bU_{\bx}),\sigma_{\min}(\bD\bF(\bx) \bU_{\bx})+\frac12\calS  
	\,\Big]\,.
\end{equation}
Thus,
\[
\sigma_{\min}(\bD\bF(\bx^{\salg})\big|_{\Ts_{\bx^{\salg}}})\ge \frac12\calS=\frac14 
\frac{\sqrt{u_2}}{d^{\frac32}\pmax^{\frac{d}{4}}} \,.
\]
Therefore, $\bx^{\salg}$ is an approximate solution by Corollary  \ref{cor:apxsol_largep}.\vspace{.15cm}

\noindent\textbf{Part II: $\exists\balpha\in\S^{d-1}:\,\bF(\balpha)=\bzero  \implies$  return value is $\bx^{\salg}\in\S^{d-1}$.} Assume that $\balpha\in\S^{d-1}$ and $\bF(\balpha)=\bzero $. Let $\bx_k\in \calN_{d,k}$, $k=-1,0,\ldots,k_0$, be a sequence (a.s. unique) such that 
\begin{equation*}
	\balpha\in \bx_{k_0}+[0,2^{-k_0}]^d\subset \bx_{k_0-1}+[0,2^{-k_0+1}]^d\subset\cdots\subset \bx_{-1}+[0,2]^d=[-1,1]^d\,.
\end{equation*}
As in the MSS pseudo-code, we define $$\tilde\bx_k=\argmin\big\{\|\by\|:\,\by\in\bx_k+\{0,2^{-k}\}^d\big\}$$
the corner of the block $\bx_k+[0,2^{-k}]^d$ closest to the origin, and its projection to the sphere
$\hat\bx_k=\tilde\bx_k/\|\tilde\bx_k\|$. Note that
\begin{equation}\label{eq:nonepmty}
	\balpha,\,\hat\bx_k\in\big(\bx_k+[0,2^{-k}]^d\big)\cap \S^{d-1}\,.
\end{equation} 

We will prove by induction that the output of MSS($\ba,\bx_k,\calL,\calS,k,k_0$) is not {\rm FALSE} for any $k=-1,0,\ldots,k_0$. Our base case is $k=k_0$. Since $\big(\bx_{k_0}+[0,2^{-k_0}]^d\big)\cap \S^{d-1}\neq\varnothing$ by \eqref{eq:nonepmty}, to show that the output is not {\rm FALSE} we need to show that
\begin{equation}\label{eq:k0bounds}
\|\bF(\hat\bx_{k_0})\|\le \sqrt{d}2^{-k_0}\calL,  \qquad s_{\min}(\bD\bF(\hat\bx_{k_0})\big|_{\Ts_{\hat\bx_{k_0}}})\ge \calS\,.
\end{equation}
Indeed, the first inequality follows from \eqref{eq:nonepmty} since $\|\bF(\balpha)\|=0$, $\mbox{diam}(\bx_{k_0}+[0,2^{-{k_0}}]^d)=\sqrt{d}2^{-k_0}$ and  $\Lip(\bF;\Ball^d(1))\le \calL$ by Point \ref{enum:good_plarge_1}   of Definition \ref{def:good2}.
By Points \ref{enum:good_plarge_1} and \ref{enum:good_plarge_3}  of Definition \ref{def:good2},
\begin{align*}
	\sigma_{\min}(\bD\bF(\hat\bx_{k_0})\big|_{\Ts_{\hat\bx_{k_0}}})&\ge \sigma_{\min}(\bD\bF(\balpha)\big|_{\Ts_{\balpha}})-\mbox{diam}([0,2^{-k_0}]^d)\cdot \Lip_{\perp}(\bD\bF;\S^{d-1})\\
	&\ge\sqrt{u_2} d^{-\frac32}\pmax^{-\frac{d}{4}}-\sqrt{d}2^{-k_0}\cdot u_1 C_0\pmax^2\sqrt{d\log(\pmax)}\ge \calS\,,
\end{align*}
where the last inequality assumes that $\delta\le 1$. The second bound in \eqref{eq:k0bounds} follows by \eqref{eq:sminDFU}.

We proceed to the induction step. Suppose that the output of MSS($\ba,\bx_{k+1},\calL,\calS,k+1,k_0$) is not FALSE for some $0\le k+1\leq k_0$. When running the pseudo-code for MSS($\ba,\bx_{k},\calL,\calS,k,k_0$) the first IF condition is not satisfied by \eqref{eq:nonepmty}. As $k<k_0$, to complete the proof we only need to show that $\|\bF(\hat\bx_k)\|\le \sqrt{d}2^{-k}\calL$, since then one of the recursive calls is MSS($\ba,\bx_{k+1},\calL,\calS,k+1,k_0$). Indeed, this follows by the same argument we used to justify the first bound of \eqref{eq:k0bounds}.\vspace{.15cm}

\noindent\textbf{Part III: Complexity analysis.} We will prove the complexity bound as in Theorem \ref{thm:main2} assuming that $\calE=\calE(d,u_1,u_2,u_3,\delta, k_0)$ occurs. Indeed, since we prove the correctness of the algorithm only on this event, we may stop the algorithm if the running time exceeds this bound without affecting the proofs in the previous two parts.

We wish to analyze the MSS algorithm with $\bx=(-1,\ldots,-1)$, $k=-1$ and $\calL$, $\calS$ and $k_0$ as above. 
Of course, its time complexity $\chi$ depends on the running
 time of the recursive calls with $k=-1,0,\ldots,k_0$ and $\bx\in\calN_{d,k}$. For each $k$, define $\chi_k$ as the total running time spent in all calls of the form MSS($\ba,\bx,\calL,\calS,k,k_0$) with some $\bx\in\calN_{d,k}$, excluding the time for running the for-loop in each such call (if we reach it) which itself includes the recursive calls of the form MSS($\ba,\bx,\calL,\calS,k+1,k_0$) for $\bx\in\calN_{d,k+1}$.  
 We claim that, up to negligible terms,
 \[
 \chi = \chi_{-1}+\chi_{0}+\cdots+\chi_{k_0}\,.
 \]
What we neglect here is the running time for all commands in the for-loops excluding the recursive calls to MSS, which is smaller than the time to run the recursive calls themselves.

First, we discuss the time complexity of a single call to MSS with $k<k_0$ excluding the running time of its for-loop, which we denote by $\bar\chi_k$. In such a call to MSS, the vector $\tilde \bx$ can be computed using $\tilde x_i=\argmin\{ |t|:\,t\in \{x_i,x_i+2^{-k}\}\}$  in $O(d)$ operations. Precisely, the big O notation that we use throughout means in $Cd$ operations for some universal constant $C$.\footnote{In principle, we also need to compute $2^{-k}$. This can be done for all $k=-1, ,\ldots,k_0$ in the first call to MSS with $k=-1$ in $O(k_0)$ operations, instead of repeating the calculation of $2^{-k}$ in each recursive call. Doing so would make the complexity of this computation negligible relative to the bound we obtain on $\chi$ so we may safely neglect this point in our current discussion.} 
Computing the farthest corner from the origin $\tilde \bx'$ with $\tilde x'_i=\argmax\{ |t|:\,t\in \{x_i,x_i+2^{-k}\}\}$ is done with the same complexity. And by checking whether $\|\tilde\bx\|\le 1\le \|\tilde\bx'\|$, we can decide if $\big(\bx+[0,2^{-k}]^d\big)\cap \S^{d-1}\neq\varnothing$ or not. 
Recall that computing $F_i(\bx)$ for a given $\bx$ can be done in $O(N\pmax)$ operations and therefore computing $\|\bF(\hat\bx)\|$ can be done in $O(Nd\pmax)$  operations. Combining the above we have that $\bar\chi_k=O(Nd\pmax)$.

Denote by $\bar\chi_{k_0}$ the running time of a single call to MSS with $k=k_0$. Here, the for-loop is irrelevant. Since we can compute each derivative $\partial_j F_i(\bx)$ in $O(N\pmax)$ operations, we can compute $\bD\bF(\hat \bx)\bU_{\hat \bx}$ in $O(Nd^3\pmax)$ operations. $s_{\min}(\bD\bF(\hat\bx)\big|_{\Ts_{\hat\bx}})=s_{\min}(\bD\bF(\hat\bx) \bU_{\hat\bx})$ is computed from $\bD\bF(\hat \bx)\bU_{\hat \bx}$ using the algorithm of Lemma \ref{lem:minsv} with $\kappa$ as in \eqref{eq:kappa} by $O(d^3\log \kappa)$ additional operations. Hence, $\bar\chi_{k_0}=O(Nd^3\pmax+d^3\log \kappa)$.

Finally, note that each $(k,\bx)$ with $\bx\in\calN_{d,k}$, there is at most one
recursive call to MSS with $0\le k\le k_0$ and $\bx$. Hence,
by Point \ref{enum:good_plarge_4} of Definition \ref{def:good2}, the number of recursive calls with a given value of $k$ is bounded by
\[
2^d\cdot u_3(u_1 \tilde C_0\pmax\sqrt{d\log(\pmax)} )^{d-1}\,.
\]
The $2^d$ factor accounts for the number of iterations in the for-loop and the other factor bounds the number of calls with $k-1$ and some $\bx\in \calN_{d,k-1}$ such that both $\big(\bx+[0,2^{-(k-1)}]^d\big)\cap \S^{d-1}=\varnothing$ and $\|\bF(\hat\bx)\|>D\cdot \calL$.

Combining all of the above, the total time complexity is bounded by
\begin{align*}
\chi&\le 2^d u_3\big(u_1 \tilde C_0\pmax\sqrt{d\log(\pmax)} \big)^{d-1}\cdot O\big( k_0 Nd\pmax + Nd^3\pmax+d^3\log \kappa\big)\\
&\le O\Big(N u_3\big(u_1 \tilde C_0'\pmax\sqrt{d\log(\pmax)} \big)^{d-1}\big(
\log u_1\vee \log(1/u_2) \vee \log(1/\delta)\vee d\log(\pmax) 
\big)
\Big)\,.
\end{align*}
Since $\delta$ is a constant (depending only on $C_0$), we may omit the $\log(1/\delta)$ term. \qed

\section{\label{sec:BF}Proof of Theorem \ref{thm:main0}}

To prove the theorem, we define an algorithm that either uses the Hessian Descent or Multi-Scale Search algorithms, (Algorithm \ref{algo:HD} and Algorithm \ref{algo:MMS}) 
depending on the values of $d$ and $\pmax$. For the Multi-Scale Search algorithm the choice of $u_1,u_2,u_3$ will also depend on $d$ and $\pmax$. 

Recall that we analyzed the two algorithms in Theorems \ref{thm:main1} and \ref{thm:main2}. 
Assume that $C$ and $C_0$ are the maximum of the same universal constants  in the latter two theorems, so that both hold with these maximal values. Let $A$ be the absolute constant from Theorem \ref{thm:main1}.

Let $\delta>0$ and define the following  sets 
\begin{align*}
	\Lambda_1&:=\big\{(d,p):\, p<d^2,\, Ce^{-d/C}<\delta\big\}\,,\\
	\Lambda_2&:=\big\{(d,p):\, p<d^2,\, Ce^{-d/C}\ge\delta\big\}\,,\\
	\Lambda_3&:=\big\{(d,p):\, p\ge d^2,\, \pmax^{-d}<\delta\big\}\,,\\
	\Lambda_4&:=\big\{(d,p):\, p\ge d^2,\, \pmax^{-d}\ge \delta\big\}\,.
\end{align*} 

Note that $\Lambda_2$ and $\Lambda_4$ are finite sets. Hence, we can choose some $u_1,u_2,u_3$ such that for all $(d,\pmax)\in\Lambda_2\cup\Lambda_4$, the expression in \eqref{eq:MSS_complexitybound} is at least $1-\delta$.
We use the Multi-Scale Search algorithm with this choice of $u_1,u_2,u_3$, for any $(d,\pmax)\in\Lambda_2\cup\Lambda_4$. By construction, this implies the required bound on the probability that the algorithm solves $\bF(\bx)=\bzero$. Using again the fact that $\Lambda_2\cup\Lambda_4$ is a finite set, the time complexity for $(d,\pmax)$ in this set is bounded by some constant $C_0(\delta)$ (which implicitly depends on $u_i$).

If $(d,\pmax)\in\Lambda_1$, we use the Hessian Descent algorithm. Recall that by Footnote \ref{ft:delta0}, assuming that $\delta_0$ is sufficiently small guarantees that $n=\lfloor d- A(d\log d)^{1/2}\rfloor\ge 1$. Here we need to show that the algorithm solves $\bF(\bx)=\bzero$ at least with the required probability of $1-Ce^{-d/C}$, and this indeed follows by Theorem \ref{thm:main1}. Moreover, by the same theorem, the complexity $\chi$ is bounded as in \eqref{eq:BoundComplexity}, 
\begin{align*}
	\chi \le C_0 Nd^{9/2}\pmax^5  (\log\pmax)^2\overset{(a)}{\le} 8C_0 N^{3.25}\pmax^5  (\log\pmax)^2\overset{(b)}{\le} C_0'N^{4.25}\, .
\end{align*}
where $(a)$ follows since $N\ge \binom{d+1}{2}\ge d^2/2$ and $(b)$, with appropriate absolute constant $C_0'$,  since 
\[
N\ge\binom{d+\pmax-1}{\pmax} \ge\Big(\frac{d+\pmax-1}{\pmax}\Big)^{\pmax}\Big(\frac{d+\pmax-1}{d-1}\Big)^{d-1}
\]
and therefore, for any $t>0$,
\begin{equation}\label{eq:pmaxvsN}
\lim_{d\to\infty}\sup_{\pmax\ge2}\frac{\pmax}{N^{t}}=0\,.
\end{equation}

Finally, if $(d,\pmax)\in\Lambda_3$,   we use the Multi-Scale Search algorithm with 
\begin{align*}
u_1&:=1+\sqrt{ C(\log6/d +\log\pmax)} \le C'\sqrt{ \log\pmax}\,,\\
u_2&:=\pmax^{-d}/3C\,,\quad u_3:= C\pmax^{d}(d\log\pmax+ C'')/3\,,
\end{align*}
where $C',C''>0$ are universal constants. Assuming that $C''$ is sufficiently large, note that this choice implies that each of the three terms subtracted in \eqref{eq:MSS_complexitybound} is bounded by $\pmax^{-d}/3$, and therefore, with $n=d-1$, the algorithm solves $\bF(\bx)=\bzero$ with probability at least $1-\pmax^{-d}$. 

By \eqref{eq:BoundComplexity2}, the complexity in this case is bounded by
\begin{align*}
	\chi&\le N C \pmax^d \big(C' C_0\sqrt{d}\pmax\log(\pmax) \big)^{d-1}(d\log\pmax+ C'')^2
	\\
	&\le N^3 c^d d^{\frac52d+\frac12} \pmax (\log\pmax)^{d+1} 
	\le N^{5.5} c_0^d d^{\frac{17}{4}} \pmax (\log\pmax)^{d+1}
	\,,
\end{align*}
where $c,c_0,c'>0$ are universal constants and we used that by Stirling's approximation
\[
N\ge \binom{d+\pmax-1}{\pmax}\ge c'\frac{1}{\sqrt{d}} \frac{\pmax^{d-1}}{(d-1)^{d-1}}\ge
c' d^{d-\frac32} \,.
\]
By the last inequality, we also have that
\[
\lim_{d\to\infty}\sup_{\pmax\ge d^2}\frac{c_0^d d^{\frac{17}{4}}}{N^{\delta'}}=0\,,
\]
which combined with \eqref{eq:pmaxvsN} yields that
\[
\pushQED{\qed} 
\chi\le C_0(\delta,\delta')N^{5.5+\delta'}\,.\qedhere
\popQED
\]

\section*{Acknowledgements}

AM was supported by the NSF through award DMS-2031883, the Simons Foundation through Award
814639 for the Collaboration on the Theoretical Foundations of Deep Learning, the NSF grant CCF2006489 and the ONR grant N00014-18-1-2729. 
ES was supported by the Israel Science Foundation theough grant
No. 2055/21 and a research grant from the Center for Scientific Excellence at the Weizmann Institute of Science. ES is the incumbent of the Skirball Chair in New Scientists.

\newpage

\appendix

\section{Tools: Linear algebra}

\begin{theorem}[Wedin \cite{wedin1972perturbation}]\label{thm:Wedin}
Let $\bA_0, \bA_1\in\reals^{m\times n}$ have singular value decomposition (for $a\in\{0,1\}$)
\begin{align}
\bA_a = \bU_a\bSigma_a\bV_a^{\sT}\, , 
\end{align}
with $\bSigma_a$ containing the singular values in decreasing order.
Further let $\bU_{a,+}\in\reals^{m\times k(a)}$, $\bV_{a,+}\in\reals^{n\times k(a)}$,   be formed by the first $k(a)$ columns of  
$\bU_a$, $\bV_a$, so that 
\begin{align}
\bU_a= \big[\bU_{a,+}\big|\bU_{a,-}\big]\, ,\;\;\;\;\;\;\bV_a= \big[\bV_{a,+}\big|\bV_{a,-}\big]\, .
\end{align}
Finally assume $\Delta\equiv \sigma_{k(1)}(\bA_1)-\sigma_{k(0)+1}(\bA_0)>0$. 
Let $\bP_{a} = \bV_{a,+}\bV_{a,+}^{\sT}$ (respectively $\bQ_{a} = \bU_{a,+}\bU_{a,+}^{\sT}$) denote the projector onto the right singular space 
(left singular space) corresponding to large singular values of $\bA_a$. Then we have
\begin{align}
\big\|(\id_n-\bP_0)\bP_1\big\|_{\op}\le \frac{1}{\Delta}\,
\left\{\big\|(\id-\bQ_0)(\bA_0-\bA_1)\bP_1\big\|_{\op}\vee \big\|\bQ_1(\bA_0-\bA_1)(\id -\bP_0)\big\|_{\op} \right\}\, ,\label{eq:WedinFirst}
\end{align}
If instead we have  $\Delta\equiv \sigma_{k(0)}(\bA_0)- \sigma_{k(1)+1}(\bA_1)>0$, then
\begin{align}
\big\|\bP_0(\id_n-\bP_1)\big\|_{\op}\le \frac{1}{\Delta}\,
\left\{\big\|(\id-\bQ_1)(\bA_0-\bA_1)\bP_0\big\|_{\op}\vee \big\|\bQ_0(\bA_0-\bA_1)(\id -\bP_1)\big\|_{\op} \right\}\, . \label{eq:WedinSecond}
\end{align}
\end{theorem}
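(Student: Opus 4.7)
The plan is to reduce \eqref{eq:WedinFirst} to the Davis--Kahan $\sin\Theta$ theorem for Hermitian matrices by passing to the Jordan--Wielandt dilation
\[
\bH_a := \begin{pmatrix} \mathbf 0 & \bA_a \\ \bA_a^{\sT} & \mathbf 0 \end{pmatrix}, \qquad a\in\{0,1\},
\]
whose non-zero spectrum is $\{\pm\sigma_i(\bA_a)\}$ with eigenvectors $(\bu_{a,i};\pm\bv_{a,i})/\sqrt{2}$. This is the standard trick for converting a singular-value problem into a Hermitian eigenvalue problem, and its decisive advantage is that the gap $\Delta = \sigma_{k(1)}(\bA_1)-\sigma_{k(0)+1}(\bA_0)$ enters linearly, rather than being smeared by a quadratic factor as it would if one applied Davis--Kahan naively to $\bA_a^{\sT}\bA_a$.

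First I would identify the spectral projectors. Letting $\Pi_a$ denote the projector of $\bH_a$ onto the invariant subspace for the interval $[\sigma_{k(a)}(\bA_a),\infty)$, a direct computation from the eigenvector formula gives the $2\times 2$ block form
\[
\Pi_a = \tfrac12\begin{pmatrix}\bQ_a & \bU_{a,+}\bV_{a,+}^{\sT}\\ \bV_{a,+}\bU_{a,+}^{\sT} & \bP_a\end{pmatrix},
\]
so that $(\id-\Pi_0)\Pi_1$ simultaneously encodes both $(\id-\bP_0)\bP_1$ and $(\id-\bQ_0)\bQ_1$ on its diagonal blocks. Under the hypothesis of the theorem, the part of $\sigma(\bH_1)$ captured by $\Pi_1$ lies in $[\sigma_{k(1)}(\bA_1),\infty)$, while the part of $\sigma(\bH_0)$ captured by $\id-\Pi_0$ lies in $(-\infty,\sigma_{k(0)+1}(\bA_0)]$; these two spectral sets are separated by exactly $\Delta$.

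Next I would invoke the operator-norm Davis--Kahan bound,
\[
\|(\id-\Pi_0)\Pi_1\|_{\op} \le \frac{\|(\id-\Pi_0)(\bH_1-\bH_0)\Pi_1\|_{\op}}{\Delta},
\]
which itself rests on the operator Sylvester inequality $\|\bB\bZ-\bZ\bA\|_{\op}\ge\Delta\|\bZ\|_{\op}$ whenever the spectra of $\bA$ and $\bB$ are separated by $\Delta$. Finally, since $\bH_1-\bH_0 = \begin{pmatrix}\mathbf 0 & \bE\\ \bE^{\sT} & \mathbf 0\end{pmatrix}$ with $\bE=\bA_1-\bA_0$, writing out $(\id-\Pi_0)(\bH_1-\bH_0)\Pi_1$ in $2\times 2$ block form shows that its operator norm is controlled by the maximum of $\|(\id-\bQ_0)\bE\bP_1\|_{\op}$ and $\|\bQ_1\bE(\id-\bP_0)\|_{\op}$. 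Extracting the appropriate block of $(\id-\Pi_0)\Pi_1$ then yields \eqref{eq:WedinFirst}, and the symmetric inequality \eqref{eq:WedinSecond} follows by swapping $\bA_0 \leftrightarrow \bA_1$ together with the top/bottom spectral subspaces.

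The main technical obstacle is the block-decoding bookkeeping: one must verify that the $(m+n)\times(m+n)$ sandwich $(\id-\Pi_0)(\bH_1-\bH_0)\Pi_1$ produces precisely the maximum of the two restricted perturbations on the right-hand side (rather than a sum, which is what a naive coupled-Sylvester or $\bA^{\sT}\bA$ Davis--Kahan argument would yield), and one must handle the extra zero eigenvalues of $\bH_a$ arising when $m\neq n$ without letting them pollute the spectral gap $\Delta$.
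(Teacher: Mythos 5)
The paper states this as Wedin's theorem and gives no proof, only the citation \cite{wedin1972perturbation}, so there is no in-paper argument to compare against. Your opening move---the Jordan--Wielandt dilation, which makes the gap $\Delta$ enter linearly rather than quadratically---is indeed the right idea, and you correctly identify the $2\times2$ block form of the spectral projector $\Pi_a$ of $\bH_a$ onto $[\sigma_{k(a)}(\bA_a),\infty)$. The gap is exactly where you flag it, and it is not mere bookkeeping: the block-decoding step fails for both natural choices of projector. With the half-line projector $\Pi_a$, the matrix $\id-\Pi_0$ has off-diagonal blocks $-\tfrac12\bU_{0,+}\bV_{0,+}^{\sT}$ and $-\tfrac12\bV_{0,+}\bU_{0,+}^{\sT}$, so every block of $(\id-\Pi_0)(\bH_1-\bH_0)\Pi_1$ mixes the $\bQ$-, $\bP$-type pieces with the $\bU_{a,+}\bV_{a,+}^{\sT}$ pieces and carries dangling factors of $\tfrac12$; no block equals $(\id-\bQ_0)(\bA_0-\bA_1)\bP_1$ or $\bQ_1(\bA_0-\bA_1)(\id-\bP_0)$. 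If instead you take the two-sided projector $\tilde\Pi_a:=\diag(\bQ_a,\bP_a)$ (both $\pm\sigma_i$), then $(\id-\tilde\Pi_0)\tilde\Pi_1$ does decouple cleanly, but now the spectrum of $\bH_1$ on the range of $\tilde\Pi_1$ is the annulus $\{|\lambda|\ge\sigma_{k(1)}(\bA_1)\}$, which is not separated from $[-\sigma_{k(0)+1}(\bA_0),\sigma_{k(0)+1}(\bA_0)]$ by an interval; the sharp $1/\Delta$ Sylvester estimate does not apply directly, and splitting $\tilde\Pi_1$ into its $\pm$ halves and applying Davis--Kahan to each loses a factor $2$.

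Closing the gap requires dilating only the small block of $\bA_0$, not all of $\bA_0,\bA_1$. Completing $\bU_{0,+},\bV_{0,+}$ to orthonormal bases by $\bU_{0,-},\bV_{0,-}$, setting $\bSigma_{0,-}:=\bU_{0,-}^{\sT}\bA_0\bV_{0,-}$, $X:=\bU_{0,-}^{\sT}\bU_{1,+}$, $Y:=\bV_{0,-}^{\sT}\bV_{1,+}$ (so that $\|Y\|_{\op}=\|(\id-\bP_0)\bP_1\|_{\op}$), and projecting the residuals gives the coupled system $\bSigma_{0,-}Y-X\bSigma_{1,+}=R$, $\bSigma_{0,-}^{\sT}X-Y\bSigma_{1,+}=S$, where $\|R\|_{\op}$ and $\|S\|_{\op}$ are exactly the two right-hand terms of \eqref{eq:WedinFirst}. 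This is a single Sylvester equation for the stack $(X;Y)$ whose left coefficient is the dilation of $\bSigma_{0,-}$ alone (spectrum in $[-\sigma_{k(0)+1},\sigma_{k(0)+1}]$) and whose right coefficient is $\bSigma_{1,+}$ (spectrum in $[\sigma_{k(1)},\infty)$): an interval separation with gap $\Delta$. Solving by the exponential integral and bounding the hyperbolic-cosine/hyperbolic-sine blocks of the exponential yields
\[
\|Y\|_{\op}\le \frac{\sigma_{k(0)+1}}{\sigma_{k(1)}^2-\sigma_{k(0)+1}^2}\,\|R\|_{\op}+\frac{\sigma_{k(1)}}{\sigma_{k(1)}^2-\sigma_{k(0)+1}^2}\,\|S\|_{\op}\,,
\]
and these two coefficients sum to exactly $1/\Delta$. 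That convexity identity---not a block read-off from the full dilation---is what produces the maximum on the right-hand side of \eqref{eq:WedinFirst}, rather than a sum, a factor of $2$, or a $\sqrt2$.
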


\begin{corollary}\label{coro:Wedin}
Let $\bA_0, \bA_1\in\reals^{n\times d}$ and denote by $\bv_1(a),\dots, \bv_n(a)$,
$a\in \{0,1\}$ two bases of right singular vectors of these matrices,
with associated singular values $\sigma_1(\bA_a)\ge \cdots\ge \sigma_d(\bA_a)$ 
(including vanishing singular values). Assume  $n<d$  and define, for $m\ge 1$,
\begin{align}
V_{0,m} &= {\rm span}(\bv_{n-m+2}(0),\dots, \bv_d(0))\, ,\\
V_{1,1} &= {\rm span}(\bv_{n+1}(1),\dots, \bv_d(1))\, .
\end{align}
(In particular, $V_{1,1}$ is a subspace of the null space of $\bA_1$.)
Finally, denote by $\bE_1\in\reals^{d\times (d-n)}$  a matrix whose columns form an 
orthonormal basis of $V_{1,1}$, and by $\obE_0\in\reals^{d\times (d-n+m-1)}$ 
a matrix whose columns form an orthonormal basis of $V_{0,m}^{\perp}$. 

Then we have 
\begin{align}
\big\|\bE_1^{\sT}\obE_0\big\|_{\op}\le \frac{1}{\sigma_{n-m+1}(\bA_0)}
\|\bA_0-\bA_1\|_{\op}\, .
\end{align}
\end{corollary}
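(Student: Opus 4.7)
\textbf{Proof plan for Corollary \ref{coro:Wedin}.} The idea is to recognize the quantity $\|\bE_1^{\sT}\obE_0\|_{\op}$ as the operator norm of a product of two orthogonal projectors, and then apply the second form of Wedin's theorem (Theorem \ref{thm:Wedin}, Eq.~\eqref{eq:WedinSecond}). The key observation is that the columns of $\obE_0$ and $\bE_1$ are each orthonormal, so that multiplying by $\obE_0$ on the left or by $\bE_1^{\sT}$ on the right does not change the operator norm.

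First, I will set up the Wedin parameters. Let $k(0) := n-m+1$, so that $\bV_{0,+}$ consists of the top $n-m+1$ right singular vectors of $\bA_0$, i.e.\ the columns of $\bV_{0,+}$ form an orthonormal basis of $V_{0,m}^{\perp} = {\rm span}(\bv_1(0),\dots,\bv_{n-m+1}(0))$. Hence $\bP_0 = \bV_{0,+}\bV_{0,+}^{\sT} = \obE_0\obE_0^{\sT}$. Let $k(1) := n$, so that $\bV_{1,-}$ consists of $\bv_{n+1}(1),\ldots,\bv_d(1)$, and thus $\id-\bP_1 = \bV_{1,-}\bV_{1,-}^{\sT} = \bE_1\bE_1^{\sT}$. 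Using orthonormality of the columns of $\obE_0$ and $\bE_1$,
\begin{align*}
\|\bP_0(\id-\bP_1)\|_{\op} \;=\; \|\obE_0\obE_0^{\sT}\bE_1\bE_1^{\sT}\|_{\op} \;=\; \|\obE_0^{\sT}\bE_1\|_{\op} \;=\; \|\bE_1^{\sT}\obE_0\|_{\op}\, .
\end{align*}

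Next, I will apply Eq.~\eqref{eq:WedinSecond}. The relevant spectral gap is
\begin{align*}
\Delta \;=\; \sigma_{k(0)}(\bA_0)-\sigma_{k(1)+1}(\bA_1) \;=\; \sigma_{n-m+1}(\bA_0)-\sigma_{n+1}(\bA_1)\, .
\end{align*}
Since $\bA_1\in\reals^{n\times d}$ with $n<d$, its rank is at most $n$, so under the stated convention $\sigma_{n+1}(\bA_1)=0$, giving $\Delta=\sigma_{n-m+1}(\bA_0)$. Finally, each of the two terms appearing in the max on the right-hand side of \eqref{eq:WedinSecond} has the form $\|\bR_1(\bA_0-\bA_1)\bR_2\|_{\op}$ where $\bR_1,\bR_2$ are orthogonal projectors; hence each is at most $\|\bA_0-\bA_1\|_{\op}$. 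Combining these observations with the identity from the previous paragraph yields the desired bound.

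There is essentially no hard step: the argument is a clean bookkeeping exercise in Wedin's theorem, and the only mild subtlety is tracking which subspaces correspond to ``large'' vs.\ ``small'' singular values in the two matrices (we use $\bV_{0,+}$ for top singular vectors of $\bA_0$, but $\bV_{1,-}$ for the bottom/null singular vectors of $\bA_1$), which is why the second form \eqref{eq:WedinSecond} of Wedin is the right one.
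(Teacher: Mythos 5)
Your proof is correct and is exactly the argument the paper leaves implicit when stating the corollary without proof: apply Eq.~\eqref{eq:WedinSecond} with $k(0)=n-m+1$ and $k(1)=n$, note $\sigma_{n+1}(\bA_1)=0$ because ${\rm rank}(\bA_1)\le n<d$, identify $\bP_0=\obE_0\obE_0^{\sT}$ and $\id-\bP_1=\bE_1\bE_1^{\sT}$, and use that sandwiching by matrices with orthonormal columns preserves the operator norm. The handling of the degenerate case $\sigma_{n-m+1}(\bA_0)=0$ (bound trivially holds) and the drop of the projectors inside the max are both correct, so there is nothing to fix.
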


\section{Tools: Random matrix theory}

Given the symmetric matrix $\bM\in\reals^{n\times n}$, we denote by
 $\lambda_1(\bM)\le \lambda_2(\bM)\le \dots\le \lambda_n(\bM)$ its the eigenvalues 
  in increasing order.
We denote by $F_{\rm sc}(t)= \frac{1}{2\pi}\int_{-2}^{(t\wedge 2) \vee (-2)}\sqrt{4-x^2}\, \de x$ the semicircle distribution.
\begin{lemma}\label{lemma:LD-Symm-GOE}
For any $t\in(-2,2)$ and $\eps>0$ there exists a constant $C_0=C_0(t,\eps)$ such that for  $\bW\sim \GOE(N)$ and any $k$ such that $k/N\le F_{\rm sc}(t)-\eps$,
\begin{align}
\prob\Big(\lambda_k(\bW)> t\sqrt{N}\Big)\le C_0\, e^{-N^2/C_0}\, .
\end{align}
\end{lemma}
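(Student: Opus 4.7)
The plan is to deduce this bound from the Ben Arous--Guionnet large deviation principle (LDP) for the empirical spectral measure of a GOE matrix, which holds at speed $N^2$. Consider the rescaled empirical measure
\[
\hat\mu_N := \frac{1}{N}\sum_{i=1}^N \delta_{\lambda_i(\bW)/\sqrt{N}}\, ,
\]
whose typical weak limit is the semicircle law $\mu_{\rm sc}$ on $[-2,2]$ of cumulative distribution $F_{\rm sc}$. The Ben Arous--Guionnet theorem asserts that $\hat\mu_N$ satisfies an LDP on $\mathcal{P}(\reals)$ (endowed with the weak topology) at speed $N^2$, with good rate function
\[
I(\mu) := -\tfrac{1}{2}\iint \log|x-y|\,\mu(\de x)\mu(\de y) + \tfrac{1}{4}\int x^2\,\mu(\de x) + c_0\, ,
\]
which vanishes uniquely at $\mu_{\rm sc}$.

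The next step is to translate the event. Since the $\lambda_i(\bW)$ are ranked in increasing order, the event $\{\lambda_k(\bW) > t\sqrt{N}\}$ forces at least $N-k+1$ eigenvalues to exceed $t\sqrt{N}$, i.e.
\[
\hat\mu_N\big((t,\infty)\big) \;\ge\; \frac{N-k+1}{N}\, .
\]
The assumption $k/N \le F_{\rm sc}(t) - \eps$ gives $(N-k+1)/N \ge 1 - F_{\rm sc}(t) + \eps$, so this forces $\hat\mu_N\in A$ where
\[
A \;:=\; \Big\{\mu \in \mathcal{P}(\reals)\; :\;\; \mu\big([t,\infty)\big)\ge 1 - F_{\rm sc}(t) + \eps\Big\}\, .
\]
Since the half-line $[t,\infty)$ is closed in $\reals$, the mapping $\mu\mapsto \mu([t,\infty))$ is upper semicontinuous in the weak topology, hence $A$ is closed.

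Now apply the LDP upper bound on the closed set $A$:
\[
\limsup_{N\to\infty} \frac{1}{N^2}\log \prob\big(\hat\mu_N\in A\big) \;\le\; -\inf_{\mu\in A} I(\mu)\, .
\]
Because $\mu_{\rm sc}([t,\infty)) = 1 - F_{\rm sc}(t) < 1 - F_{\rm sc}(t) + \eps$, we have $\mu_{\rm sc}\notin A$. As $I$ is a good rate function with compact level sets, and $\mu_{\rm sc}$ is its \emph{unique} minimizer, one concludes $\inf_A I =: c > 0$, where $c=c(t,\eps)$ depends only on $t$ and $\eps$ (here we use $t\in (-2,2)$ to rule out degeneracies at the edge of the bulk). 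It follows that for all $N$ sufficiently large, $\prob(\hat\mu_N\in A)\le e^{-cN^2/2}$, and one then chooses $C_0=C_0(t,\eps)$ large enough to also absorb the (bounded) small-$N$ regime.

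The whole argument is essentially packaging of the Ben Arous--Guionnet theorem, so there is no serious obstacle. The only mildly delicate point is the lower bound $\inf_A I > 0$, which follows from goodness of $I$ and the uniqueness of the minimizer; alternatively one could prove this step by hand via a direct Coulomb-gas/energy comparison on the joint eigenvalue density, pushing mass above the threshold $t$ and showing it costs at least a positive constant times $N^2$ in log-probability, but invoking the LDP is cleaner.
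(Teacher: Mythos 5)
Your proposal is correct and uses the same approach as the paper, which also derives the bound by rewriting the event $\{\lambda_k(\bW) > t\sqrt{N}\}$ in terms of the empirical eigenvalue measure and invoking the Ben Arous--Guionnet large deviation principle at speed $N^2$. The paper gives this argument in a single sentence, while you spell out the details (closedness of the set $A$, strict positivity of $\inf_A I$ via goodness of the rate function and uniqueness of its minimizer, absorbing the small-$N$ regime into $C_0$); all of these steps are correct.
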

\begin{proof}
Since $\lambda_k(\bW)>t\sqrt{N}$ if and only if $\#\{\lambda_i(\bW)\le t\sqrt{N}\}< k$, the lemma follows from the large deviation principle for the empirical measure of eigenvalues of a GOE matrix proved in \cite[Theorem 2.1.1]{benarous1997large}. 
\end{proof}

\begin{lemma}\label{lem:K-th-Eigengalue-Wishart}
For all $c_0>0$, there exists  $C_*(c_0)$ such that the following holds.
If $\bZ\sim\GOE(M,N)$, $c_0N< M\le N$  and  $1\le \ell\le M/2$
then, for any $\Delta>0$,
\begin{align}
\prob\Big(\lambda_{\ell}(\bZ\bZ^{\sT})\le \Delta^2\big(\sqrt{N}-\sqrt{M-1}\big)^2\Big) \le (C_*(c_0)\Delta)^{\ell(N-M+\ell)}\, .
\end{align}
\end{lemma}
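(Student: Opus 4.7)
The strategy is to bound the integral of the joint density of Wishart eigenvalues over the event $\{\lambda_\ell\le\delta\}$ with $\delta:=\Delta^2(\sqrt N-\sqrt{M-1})^2$, extending the density-based proof of the $\ell=1$ Chen--Dongarra bound (recall $\lambda_1(\bZ\bZ^{\sT})=\sigma_{\min}(\bZ)^2$). Write the joint density of the ordered eigenvalues $0<\lambda_1<\cdots<\lambda_M$ of $\bZ\bZ^{\sT}$ as $c_{M,N}\,\phi_{M,N}(\lambda)$ with
\[
\phi_{M,N}(\lambda)=\prod_{i=1}^M\lambda_i^{(N-M-1)/2}e^{-\lambda_i/2}\prod_{i<j}(\lambda_j-\lambda_i),
\]
and partition the Vandermonde $\prod_{i<j}(\lambda_j-\lambda_i)$ into three blocks: small/small ($i<j\le\ell$), cross ($i\le\ell<j$), and large/large ($\ell<i<j$). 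This factorization mirrors the one used in the proof of Lemma \ref{lem:smallsigmaE}, but performed at the $\ell$-th eigenvalue rather than at the minimum.

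On $\{\lambda_\ell\le\delta\}$ every cross term satisfies $|\lambda_j-\lambda_i|\le\lambda_j$; using this bound together with $e^{-\lambda_i/2}\le 1$ for $i\le\ell$ separates the integral into a small block and a large block. The large block, after absorbing $\ell$ copies of $\lambda_j$ per large index $j$, has exponent $(N-M-1)/2+\ell$ on each $\lambda_j$, which is precisely the density shape of an $(M-\ell,N+\ell)$ real Wishart ensemble; hence its integral over $\{\lambda_{\ell+1}<\cdots<\lambda_M\}$ is bounded by $c_{M-\ell,N+\ell}^{-1}$. Rescaling $\lambda_i=\delta u_i$ for $i\le\ell$ in the small block produces a Selberg-type integral
\[
J_\ell:=\int_{0<u_1<\cdots<u_\ell<1}\prod_{i=1}^\ell u_i^{(N-M-1)/2}\prod_{i<j\le\ell}(u_j-u_i)\,du,
\]
together with a prefactor $\delta^{\ell(N-M+\ell)/2}$, whose exponent is the sum of $\ell$ (Jacobian), $\ell(N-M-1)/2$ (from the small-$\lambda$ density), and $\binom{\ell}{2}$ (from the small/small Vandermonde). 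Combining these pieces gives
\[
\P\bigl(\lambda_\ell\le\delta\bigr)\ \le\ \frac{c_{M,N}}{c_{M-\ell,N+\ell}}\,J_\ell\,\Delta^{\ell(N-M+\ell)}(\sqrt N-\sqrt{M-1})^{\ell(N-M+\ell)}.
\]

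The remaining, and main technical, step is the Stirling estimate
\[
\frac{c_{M,N}}{c_{M-\ell,N+\ell}}\,J_\ell\ \le\ \Big(\tfrac{C_*(c_0)}{\sqrt N-\sqrt{M-1}}\Big)^{\ell(N-M+\ell)}.
\]
Both Wishart normalizations admit closed-form expressions as products of $\Gamma(i/2)\Gamma((N-M+i)/2)$ factors, and $J_\ell$ can be evaluated in closed form by Selberg's integral; Stirling's formula then reduces the bound to an elementary calculation that is essentially the same as the $\ell=1$ Chen--Dongarra argument, iterated $\ell$ times. The hypotheses $M\ge c_0 N$ and $\ell\le M/2$ are used here, to keep the Stirling expansions uniform over the relevant $\Gamma$-ratios. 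Once this estimate is in place, the claim $(C_*(c_0)\Delta)^{\ell(N-M+\ell)}$ follows immediately by substituting the value of $\delta$.
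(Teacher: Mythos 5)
Your proposal is correct and follows essentially the same route as the paper: decompose the Vandermonde at the $\ell$-th eigenvalue, bound the cross terms by $\lambda_j^{\ell}$, recognize the large-block integral as the Wishart normalization $Z_{M-\ell,N+\ell}$, reduce the small block to a Selberg integral with prefactor $\delta^{\ell(N-M+\ell)/2}$, and estimate the resulting ratio of $\Gamma$-products via Stirling. The paper implements this by first computing the marginal density $f_{\ell}$ of the smallest $\ell$ eigenvalues and then extending the lower integration limit from $u_{\ell}$ to $0$, which is the same factorization you describe in slightly different order.
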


\begin{proof}
Let $f(\bu) = f(u_1,\dots,u_M)$ denote the joint density 
of the ordered eigenvalues
$\lambda_1\le \dots\le \lambda_M$, $\lambda_i= \lambda_i(\bZ\bZ^{\sT})$. We have \cite[Proposition 4.1.3]{Guionnet}
\begin{align}
f(\bu)  = \frac{1}{Z_{M,N}}\prod_{i=1}^Mu_i^{(N-M-1)/2}e^{-u_i/2}
\prod_{i<j}(u_j-u_i)\, \bfone_{\bbJ_M}(\bu) \, ,
\end{align}
where $\bbJ_M=\bbJ_M(0)$ and $\bbJ_M(a)\subseteq \reals^M$ is the set
$\bbJ_M(a):= \{\bu:\; a\le u_1\le \cdots\le u_M\}$, 
and
\begin{align}
 Z_{M,N} = \left(\frac{2^N}{\pi}\right)^{M/2}
 \prod_{i=1}^M\Gamma\Big(\frac{N-i+1}{2}\Big)
 \Gamma\Big(\frac{M-i+1}{2}\Big)\, .
 \end{align}
 We denote joint density of the lowest $\ell$ eigenvalues 
 $\bu_{\le \ell}=(u_1,\dots,u_\ell)$ by $f_\ell$. 
 It is given by
\begin{align*}
&f_{\ell}(\bu_{\le \ell})  = \frac{F_{\ell}(\bu_{\le \ell})}{Z_{M,N}}
 \int_{\bbJ_{M-\ell}(u_\ell)}\, 
\prod_{i=\ell+1}^M\prod_{m=1}^\ell(u_i-u_m)
\prod_{i=\ell+1}^M
u_i^{(N-M-1)/2}e^{-u_i/2}\prod_{\ell<i<j }(u_j-u_i) 
\de\bu_{> \ell}\, ,\\
&F_{\ell}(\bu_{\le \ell}): = \prod_{i=1}^\ell
u_i^{\frac{N-M-1}{2}}e^{-\frac{u_i}{2}}
\prod_{i<j\le \ell }(u_j-u_i)\, \bfone_{\bbJ_\ell}(\bu_{\le \ell})\, .
\end{align*}
We can bound the integral as
\begin{align*}
f_{\ell}(\bu_{\le \ell})  &\le \frac{F_{\ell}(\bu_{\le \ell})}{Z_{M,N}}
 \int_{\bbJ_{M-\ell}(u_\ell)}\, 
\prod_{i=\ell+1}^M
u_i^{(N-M+2\ell-1)/2}e^{-u_i/2}\prod_{\ell<i<j }(u_j-u_i) 
\de\bu_{> \ell}\, \\
& \le \frac{F_{\ell}(\bu_{\le \ell})}{Z_{M,N}}
 \int_{\bbJ_{M-\ell}(0)}\, 
\prod_{i=\ell+1}^M
u_i^{(N-M+2\ell-1)/2}e^{-u_i/2}\prod_{\ell<i<j }(u_j-u_i) 
\de\bu_{> \ell}\\
& = \frac{ Z_{M-\ell,N+\ell}}{Z_{M,N}}F_{\ell}(\bu_{\le \ell})\,.
\end{align*}

Therefore
\begin{align*}
\P\Big(\lambda_\ell(\bZ\bZ^{\sT})\le \delta\Big)
&\le  \frac{ Z_{M-\ell,N+\ell}}{Z_{M,N}}\int F_{\ell}(\bu_{\le \ell})
\bfone_{u_\ell\le \delta}\de\bu_{\le \ell}\\
&\le  \frac{ Z_{M-\ell,N+\ell}}{Z_{M,N}}\frac{1}{\ell!}\int 
\prod_{i=1}^\ell
u_i^{\frac{N-M-1}{2}}
\prod_{i<j\le \ell }|u_j-u_i|
\prod_{i=1}^{\ell}\bfone_{0\le u_i\le \delta}
\de\bu_{\le \ell}\\
&\le  \frac{ Z_{M-\ell,N+\ell}}{Z_{M,N}}
\frac{1}{\ell!}\cuS_{\ell}\Big(\frac{N-M+1}{2},1,\frac{1}{2}\Big)
\cdot \delta^{\ell(N-M+\ell)/2}\, ,
\end{align*}
where $\cuS_{\ell}(\alpha,\beta,\gamma)$ is Selberg's integral
\cite[Theorem 2,5.8]{Guionnet}. In particular, we have 
\begin{align*}
\cuS_{\ell}(\alpha,1,1/2) =\prod_{i=0}^{\ell-1}
\frac{\Gamma(\alpha+(i/2))\Gamma(1+(i/2))\Gamma((3+i)/2)}
{\Gamma(\alpha+1+(\ell+i-1)/2)\Gamma(3/2)}\, .
\end{align*}
After reordering we can write the above bound as
\begin{align}
\P\Big(\lambda_\ell(\bZ\bZ^{\sT})\le \delta\Big)
\le \frac{1}{\Gamma(3/2)^\ell \ell!}\frac{Q_{M,N,\ell}}{R_{M,N,\ell}} S_{M,N,\ell}\, T_{M,N,\ell}
\cdot \delta^{\ell(N-M+\ell)/2}\, ,\label{eq:CompleteBound-KTH}
\end{align}
where 
\begin{align*}
Q_{M,N,\ell} & := 2^{-\frac{\ell}{2}(N-M)-\frac{\ell^2}{2}}\pi^{\ell/2}\, ,\\
R_{M,N,\ell} & :=\prod_{i=1}^{2\ell}\Gamma\big((N-M+i)/2\big)\, ,\\
S_{M,N,\ell} & := \prod_{i=1}^{\ell}
\frac{\Gamma\big((N+i)/2\big)}{\Gamma\big((M-\ell+i)/2\big)}
\, ,\\
T_{M,N,\ell} & := 
\prod_{i=1}^\ell\frac{\Gamma\big((N-M+i)/2\big)}{\Gamma\big((N-M+\ell+1+i)/2\big)}
\cdot\prod_{i=1}^\ell\Gamma\big((i+1)/2\big)\Gamma\big((i+2)/2\big)\, .
\end{align*}
%
We estimate the above quantities in the following
lemma.

\begin{lemma}\label{lemma:Constants}
For all $c_0>0$, there exists a constant $C_1=C_1(c_0)$ such that the following holds.
If $c_0N< M\le N$ and  $1\le \ell\le M/2$,
then
\begin{align}
\log \frac{T_{M,N,\ell}}{R_{M,N,\ell}} &\le -\ell(N-M+\ell)\log (N-M+\ell)+C_1\ell(N-M+\ell)\, ,\label{eq:logToverR}\\
\log S_{M,N,\ell} & \le \ell\frac{N-M+\ell}{2}\log M +
C_1\ell(N-M+1)\, .\label{eq:logS}
\end{align}
\end{lemma}

The proof of this lemma is postponed. We will now use it to prove the claim of the present one. 
Continuing from Eq.~\eqref{eq:CompleteBound-KTH}
(and denoting by $C$ constants that depend on $c_0$
and can change from line to line):
\begin{align*}
\prob\Big(\lambda_{\ell}(\bZ\bZ^{\sT})&\le \Delta^2\big(\sqrt{N}-\sqrt{M-1}\big)^2\Big) \le \prob\Big(\lambda_{\ell}(\bZ\bZ^{\sT})\le 
C\Delta^2\frac{(N-M)^2}{N}\Big) \\
& \le \frac{T_{M,N,\ell}}{R_{M,N,\ell}} S_{M,N,\ell}\cdot
(C\Delta)^{\ell(N-M+\ell)}\cdot
\left(\frac{(N-M)^2}{N}\right)^{\ell(N-M+\ell)/2}\\
& \le (C\Delta)^{\ell(N-M+\ell)} \exp(E_{M,N,\ell})\, ,
\end{align*}
where, for $C_1,C_2$ depending on $c_0$,
\begin{align*}
E_{M,N,\ell} :=&
 -\ell(N-M+\ell)\log (N-M+\ell)
+\ell\frac{N-M+\ell}{2}\,\log M+C_1\ell(N-M+\ell)\\
&+\ell(N-M+\ell)
\log(N-M) -\frac{1}{2}\ell(N-M+\ell)
\log N \\
\le &\, C_2\ell(N-M+\ell)\, .
\end{align*}
Substituting above, and adjusting the constant $C$ yields the claim.
\end{proof}

We now complete the last proof by proving Lemma \ref{lemma:Constants}.

\begin{proof}[Proof of Lemma \ref{lemma:Constants}]

Recalling the Legendre duplication formula,
$\Gamma (z)\Gamma \left(z+{\tfrac {1}{2}}\right)=2^{1-2z}{\sqrt {\pi }}\;\Gamma (2z)$, we have that
\begin{align*}
\frac{T_{M,N,\ell}}{R_{M,N,\ell}}&=\prod_{i=1}^\ell\frac{\Gamma\big((i+1)/2\big)\Gamma\big((i+2)/2\big)}{\Gamma\big((N-M+\ell+1+i)/2\big)\Gamma\big((N-M+\ell+i)/2\big)}\\
&=2^{\ell(N-M+\ell-1)}
\prod_{i=1}^\ell\frac{\Gamma(i+1)}{\Gamma(N-M+\ell+i)}\,.
\end{align*}
Using $|\log(n!)-n\log n| <Cn$ for some absolute constant $C>0$, \eqref{eq:logToverR} follows since
\begin{align*}
\log\left(\frac{T_{M,N,\ell}}{R_{M,N,\ell}}\right)
&\le 2(C+\log2) \ell (N-M+2\ell)\\
&+\sum_{i=1}^\ell\Big(
i\log(i)-(N-M+\ell+i-1)\log(N-M+\ell+i-1)
\Big)
\\
&\le 2(C+\log2) \ell (N-M+2\ell) -\ell(N-M+\ell-1)\log(N-M+\ell)
\,.
\end{align*}

In order to prove  Eq.~\eqref{eq:logS}, we use Stirling's 
formula to get, for some absolute constant $C>0$,
\begin{align*}
\log S_{M,N,\ell}&\le \sum_{i=1}^\ell\left\{
\frac{N+i}{2}\log\frac{N+i}{2}-\frac{N+i}{2}-
\frac{1}{2}\log\frac{N+i}{2}\right.\\
&\phantom{= \sum_{i=1}^\ell}\left.-\frac{M-\ell+i}{2}\log\frac{M-\ell+i}{2}+\frac{M-\ell+i}{2}+
\frac{1}{2}\log\frac{M-\ell+i}{2}
\right\}+C\ell\\
&= \sum_{i=1}^\ell\left\{\frac{N+i}{2}\log\frac{N+i}{M-\ell+i}+\frac{N-M+\ell}{2}\log\frac{M-\ell+i}{2}\right\}\\
&\phantom{===}- \sum_{i=1}^\ell\left\{\frac{N-M+\ell}{2}+\frac{1}{2}\log\frac{N+i}{M-\ell+i}\right\}+C\ell\\
&\overset{(a)}{\le} -\sum_{i=1}^\ell\frac{N+i}{2}\log\left(1-\frac{N-M+\ell}{N+i}\right)+\ell\frac{N-M+\ell}{2}\log M+C\ell\\
&\overset{(b)}{\le} \ell\frac{N-M+\ell}{2}\log M +C\ell+C_1(c_0)\ell (N-M)\, ,
\end{align*}
where in $(a)$ we omitted the negative sum
and for $(b)$ we 
applied the inequality $-\log(1-x)\le Cx$ for $0\le x\le 1-C^{-1}$
with $C=(1-x)^{-1}$ to $x= (N-M+\ell)/(N+i)$.
\end{proof}

Since $(\sigma_{\min}(\bZ))^2=\lambda_{1}(\bZ\bZ^{\sT})$, the following is an immediate consequence of Lemma \ref{lem:K-th-Eigengalue-Wishart}.

\begin{corollary}\label{cor:improvedRV}
In the setting of Lemma \ref{lem:K-th-Eigengalue-Wishart},
\begin{align}
\prob\Big(\sigma_{\min}(\bZ)\le \eps\big(\sqrt{N}-\sqrt{M-1}\big)\Big) \le (C_*(c_0)\eps)^{N-M+1}\, .
\end{align}
\end{corollary}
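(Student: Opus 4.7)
The plan is to obtain this corollary as the immediate specialization $\ell = 1$ of Lemma~\ref{lem:K-th-Eigengalue-Wishart}. The bridge is the elementary identity $(\sigma_{\min}(\bZ))^2 = \lambda_1(\bZ\bZ^{\sT})$, which holds since $\bZ\bZ^{\sT} \in \reals^{M\times M}$ is positive semidefinite with spectrum given by the squared singular values of $\bZ$ (recalling $M \le N$).

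Given this identity, I would first rewrite the event of interest as
\[
\big\{\sigma_{\min}(\bZ)\le \eps(\sqrt{N}-\sqrt{M-1})\big\} = \big\{\lambda_1(\bZ\bZ^{\sT})\le \eps^2(\sqrt{N}-\sqrt{M-1})^2\big\}.
\]
Then I would check that the hypotheses of Lemma~\ref{lem:K-th-Eigengalue-Wishart} are satisfied with $\ell = 1$: we have $c_0 N < M \le N$ by assumption, and the requirement $1\le \ell \le M/2$ is trivially met (assuming $M\ge 2$; the edge case $M = 1$ needing only a trivial adjustment of the constant).

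With $\ell = 1$ and $\Delta = \eps$, Lemma~\ref{lem:K-th-Eigengalue-Wishart} directly yields
\[
\prob\Big(\lambda_1(\bZ\bZ^{\sT})\le \eps^2(\sqrt{N}-\sqrt{M-1})^2\Big) \le \big(C_*(c_0)\eps\big)^{1\cdot (N-M+1)} = \big(C_*(c_0)\eps\big)^{N-M+1},
\]
which is precisely the claimed bound. There is no substantive obstacle here; the corollary is simply the $\ell=1$ case packaged in terms of singular values rather than eigenvalues of the Wishart matrix.
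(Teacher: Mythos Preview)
Your proof is correct and matches the paper's approach exactly: the paper states that since $(\sigma_{\min}(\bZ))^2=\lambda_{1}(\bZ\bZ^{\sT})$, the corollary is an immediate consequence of Lemma~\ref{lem:K-th-Eigengalue-Wishart} with $\ell=1$.
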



\newcommand{\etalchar}[1]{$^{#1}$}
\providecommand{\bysame}{\leavevmode\hbox to3em{\hrulefill}\thinspace}
\providecommand{\MR}{\relax\ifhmode\unskip\space\fi MR }
\providecommand{\MRhref}[2]{%
  \href{http://www.ams.org/mathscinet-getitem?mr=#1}{#2}
}
\providecommand{\href}[2]{#2}

\end{document}